\documentclass[conference]{IEEEtran}
\IEEEoverridecommandlockouts
\usepackage{xspace}
\usepackage{cite}
\usepackage{amsmath,amssymb,amsfonts}
\usepackage[misc]{ifsym}

\usepackage{algorithmic}
\usepackage[linesnumbered,ruled,vlined]{algorithm2e}
\usepackage[pdftex]{graphicx}
\usepackage{subfigure}
\usepackage{textcomp}
\usepackage{xcolor}
\newcommand{\jianxin}[1]{\textcolor{black}{#1}}
\newcommand{\jxrequire}[1]{\textcolor{black}{#1}}
\newcommand{\ignore}[1]{}
\DeclareMathOperator*{\argmin}{argmin}

\usepackage{amsthm}
\usepackage{url}

\newtheorem{definition}{Definition}
\newtheorem{theorem}{Theorem}
\newtheorem{lemma}{Lemma}
\newtheorem{property}{Property}

\def\BibTeX{{\rm B\kern-.05em{\sc i\kern-.025em b}\kern-.08em
    T\kern-.1667em\lower.7ex\hbox{E}\kern-.125emX}}

\begin{document}

\title{Top-$k$ Socio-Spatial Co-engaged Location Selection for Social Users
}

\author{
	Nur Al Hasan Haldar$^{\natural}$,  Jianxin Li$^{\#}$, Mohammed Eunus Ali$^{\dagger}$,Taotao Cai$^{\#}$, Timos Sellis$^{\S}$, Mark Reynolds $^{\natural}$ \vspace{1.6mm}\\ 
	\fontsize{10}{10}\selectfont\itshape
	$^{\natural}$The University of Western Australia, Australia  \ \ 
	$^{\#}$Deakin University, Australia \ \ 
	$^{\dagger}$BUET, Bangladesh\\
	$^{\S}$Swinburne University of Technology, Australia \\
	\fontsize{9}{9}\selectfont\ttfamily\upshape
	$^{\natural}$\{nur.haldar@research., mark.reynolds@\}uwa.edu.au; \ \ \
	$^{\#}$\{jianxin.li,taotao.cai\}@deakin.edu.au;\\
	$^{\dagger}$eunus@cse.buet.ac.bd; \ \ \ \
	\fontsize{9}{9}\selectfont\ttfamily\upshape
	$^{\S}$tsellis@swin.edu.au;
}

\maketitle

\begin{abstract}
	With the advent of location-based social networks,  users can tag their daily activities in different locations through check-ins. These check-in locations signify user preferences for various socio-spatial activities and can be used to improve the quality of services in some applications such as recommendation systems, advertising, and group formation. To support such applications, in this paper, we formulate a new problem of \textit{identifying top-k \textbf{S}ocio-\textbf{S}patial co-engaged \textbf{L}ocation \textbf{S}election} (\textit{SSLS}) for users in a social graph, that selects the best set of $k$ locations from a large number of location candidates relating to the user and her friends. The selected locations should be (i) \emph{spatially and socially relevant} to the user and her friends, and (ii) \emph{diversified in both spatially and socially} to maximize the coverage of friends in the spatial space. 
	To address such a challenging problem, we first develop an \texttt{Exact} solution by designing some pruning strategies based on the derived bounds on diversity. To make the solution scalable for large datasets, we also develop an approximate solution by deriving relaxed bounds and advanced termination rules to filter out insignificant intermediate results. To further accelerate the efficiency, we present one fast exact approach and a meta-heuristic approximate approach by avoiding the repeated computation of diversity at the running time. Finally, we have performed extensive experiments to evaluate the performance of our proposed algorithms against the adapted existing methods using four large real-world datasets.
		
\end{abstract}
\vspace{0.1cm}

\begin{IEEEkeywords}
Location Selection, Socio-Spatial Network
\end{IEEEkeywords}

\section{Introduction}
\label{sec:Intro}
Location-based Social Networks (LBSNs) that capture both the social and spatial information, are becoming popular. Conventional social network platforms, such as Facebook, have also enabled the location check-in features to allow social users to tag their daily activities at different places. Such location information along with social factors can be used to improve the quality of services in many applications such as recommendation systems, marketing, advertising, and group formation \cite{66konstas2009social,59ye2010location}. Given a user, the number of candidate locations might be quite large, and not all the locations are equally important to support the above applications as different locations may represent different aspects of the user's interest. Thus, selecting the \emph{key} locations of users from a large candidate set of locations by considering various socio-spatial factors is our main focus in this paper.


The social and spatial factors have a strong correlation in LBSNs where the check-in locations are established through social activities and spatial influences \cite{60ye2011exploiting}. 
Therefore, given a user and a large number of her visited locations, we should model both the social and spatial factors in a meaningful way to select a small set of places among the visited ones that can engage the user and her social connections.
More specifically, the social factors are significant in distinguishing the preferences of locations to a friend. Similarly, spatial factors can influence the user and her friends' interest in different spatial proximity.
Therefore, this work exploits both the social and spatial characteristics of relationships among the social network users and their locations to better support location-dependent applications.\ignore{We exploit the social implications embedded in locations, additionally the spatial properties w.r.t. friends are considered to select diversified subset of co-engaged locations for social users.} To that end, in this paper, we propose the problem of \textit{identifying top-k \textbf{S}ocio-\textbf{S}patial co-engaged \textbf{L}ocation \textbf{S}election for users}, denoted as \textit{SSLS}. A co-engaged location can be easily accessible by a user and her selected friends that can be covered by the location. Specifically, given a user, \textit{SSLS} will return a set of selected locations that satisfy the following two conditions: \\
i. (\textit{\textbf{relevance}}:) The selected locations should be both \emph{spatially and socially relevant} to the user and her social friends. \jianxin{Two users are called social friends to each other if they have a social connection, e.g., one user is following other.}\\
ii. (\textit{\textbf{diversity}}:) The selected locations should also be \emph{diversified both spatially and socially} in order to maximize the spatial and social coverage of the user's social friends.


\begin{figure}[htbp]
	\vspace{-4mm}
	\centering	
	{
		\includegraphics[scale=0.38]{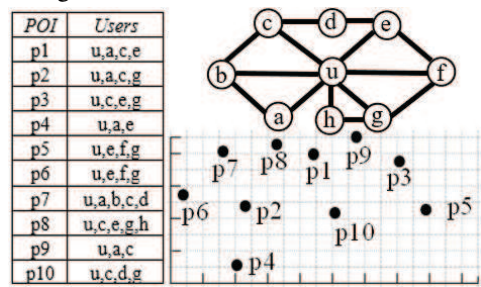}
	}
	\caption{\jianxin{An example of the \textit{SSLS} query}}
	\label{fig:exmplFirst}
	\vspace{-2mm}
\end{figure}  

\noindent\textbf{Applications.} \textit{SSLS} has a wide range of applications. Here, we discuss two applications to explain the \textit{SSLS} problem better:

\noindent\textbf{(1) Event Scheduling.} Let us consider a toy example of an event scheduling application in Figure \ref{fig:exmplFirst}. There are ten points-of-interests (POIs) $\{p_1, p_2, ..., p_{10}\}$ and a set of users who checked-in the places are given. In the example, an enterprise social network user $u$ wants to schedule a series of social events in multiple locations (say, in two locations), which will be preferable and convenient for both the user and her linked customers (e.g., friends). 
More specifically, the user $u$ wants to select the two locations such that they are (i) related: locations are the user's favorite ones where she visited earlier; (ii) socially and spatially relevant: locations where many of the user's friends also visited these places or some nearby places; (iii) spatial diversified: the selected locations are spatially distant, e.g., in different cities; and (iv) social diversified: each selected location should cover a set of friends such that the selected locations together can cover a maximum number of friends, and any two selected locations have a minimum overlap of friends to be covered.


In this application scenario, the \textit{SSLS} model will return $p_5$ and $p_7$ as the best two locations for scheduling events for $u$. This is because, (i) $p_5$ and $p_7$ are previously checked-in by $u$; (ii) all the friends (e.g., customers) of $u$ except $h$, have checked-in $p_5$ and $p_7$ (socially relevant); (iii) although the friend $h$ of $u$ did not have exact check-ins at either of the two locations, she has one check-in location $p_8$ near to $p_7$ (spatially relevant); (iv) \jianxin{$u$'s friends who checked-in $p_5, p_7$ are disjoint, i.e., $\{e, f, g\}$ with $p_5$, $\{a, b, c\}$ with $p_7$ (socially diverse);} and (v) $p_5$ and $p_7$ are itself spatially distant. 

\noindent\textbf{(2) Outlet Opening.} Nowadays, online business shops often maintain a Facebook page with many followers who like their products. Suppose an online business wants to open new outlets at $k$ number of locations that can attract most of its customers (followers) and their friends (potentially new customers). \jianxin{The business shop can predefine multiple regions (e.g., suburbs, cities) suitable for their future business. One can consider the candidate locations of the business shop as the check-ins of the customers within the predefined regions.} Thus, to select the $k$ locations from a large number of candidate locations (check-ins of the customers), the shop owner would like to consider the following: the selected locations are relevant to the current followers (spatial relevance) and their friends (social relevance). Also, the selected locations should be distant so that they can cover different areas (spatial diversity) and attract different groups of potential customers through these outlets (social diversity).
In this example, the check-ins of all users who liked the business page or their products (in a city) are considered as the candidate locations from where we need to select the top $k$ locations for opening outlets. Therefore, this example shows that without the loss of generality, our approach can be applied for location selections for a linked group of users.

We have proved that the \textit{SSLS} problem is NP-hard. To solve this problem, one may consider to directly use the existing greedy solutions on top-$k$ diversified spatial object selection, such as DisC \cite{17drosou2012disc} and SOS \cite{22guo2018efficient}. However, there exist some stringent gaps that make them inapplicable, including (i) Both DisC and SOS define diversity based on spatial distance only. Thus, they do not account for the important aspect of diversity in geo-social networks, which we refer to as \textit{social diversity}. We argue that both the spatial and social aspects need to be considered for selecting diversified objects in a geo-social network to get the best \textit{SSLS} set. (ii) Both the approaches depend on a user-defined distance threshold to select $k$ diversified objects. Nevertheless, it is hard for an end-user to define the best distance thresholds for different $k$ without knowing the underlying data distribution. 
Also, their selection processes cannot be personalized towards individual users with particular preferences. 
The main contributions of this work are as below: 

$\bullet$ \textbf{SSLS Formulation.} We formally define the top-$k$ Socio-Spatial co-engaged Location Selection problem. We provide detailed algorithms and metrics for using social and spatial relevance, and diversity to maximize the spatial and social coverage of the search space.	

$\bullet$ \textbf{Solutions.} 
First, we propose an \texttt{Exact} approach by developing some pruning strategies based on the \emph{derived lower bounds on diversity} of an already explored feasible set. 
We also devise an efficient exact method (\texttt{Exact$^+$}), a variation that derives bounds based on the \textit{relevance} of candidate locations, and avoids repetitive complex diversity computation of groups of locations like \texttt{Exact} method. Besides, we present an Approximate algorithm, in which we derive relaxed bounds and propose advanced termination criteria based on the score of the best feasible set and the diversity of remaining individual locations. Further, we introduce a greedy-based Fast Approximate approach that uses the bounds of \texttt{Exact$^+$}, and greedily selects the best locations. 

$\bullet$ \textbf{Extensive Evaluation.} We conduct extensive experiments to evaluate the effectiveness and efficiency of our proposed approaches using four real-world datasets. We also have compared the proposed algorithms with three adaptive greedy-based approaches, namely, SOS \cite{22guo2018efficient}, \textit{GMC} \cite{51vieira2011query}, and \textit{GNE} \cite{51vieira2011query}. 
\textbf{Organization.} We review the related work in Section \ref{sec:relatedWork}. Section \ref{sec:problem} formally defines the top-$k$ \textit{SSLS} problem. The \texttt{Exact} and \textit{Approximate} approach of top-$k$ \textit{SSLS} query are presented in Section \ref{sec:Exact} and Section \ref{sec:Approximate}, respectively. 
Section~\ref{sec:ExactPlus} presents the proposed \texttt{Exact$^+$}, and Fast Approximate solution. Finally, we report the experimental results in Section \ref{sec:experiment}, and conclude the paper in Section \ref{sec:Conclusion}. 

\section{Related Work}
\label{sec:relatedWork}

\jianxin{In this section, we first discuss the related work on LBSN queries in general, then present existing works about different forms of diversified object selection in spatial and metric space, and finally discuss the relevant works about spatial object selection.}

\jianxin{\textbf{Socio-Spatial Queries.} Various geo-social queries have been studied \cite{27armenatzoglou2013general,48sohail2020geo,44sohail2016retrieving,1ghosh2018flexible} that focus on retrieving useful information combining both the social relationships and the locations of the users. For example, the top-$k$ place query \cite{44sohail2016retrieving} fetches $k$ places of a user based on the distances from a query location and their popularity among the friends. A recent work on Geo-Social Temporal Top-$k$ query \cite{48sohail2020geo} ranks the retrieved locations according to their spatial, social relevance within a time interval. The computation of the relevance scores of these approaches are based on the given query location of a user, and do exploit socio-spatial features of a network (e.g. social diversity). However, the \textit{SSLS} query needs to select top-$k$ socially and spatially diverse locations which have higher socio-spatial scores w.r.t. the user and her neighbors' locations. Additionally, there exists some other works on socio-spatial queries such as location prediction \cite{34haldar2019location,28li2012multiple,72li2012towards} in social network. They investigate the user relationship and spatial information to infer location for a query user. Various personalized location recommendation queries \cite{71bao2012location,59ye2010location,70zheng2011recommending} consider location preferences with similar users. For example, Zheng et al.  \cite{70zheng2011recommending} recommend locations from friends' location histories such that the users can discover the locations that interest them. However, none of these works well exploit the characteristics of geographical social engagement.}

\textbf{Diversified Object Selection.} The diversity among the objects has been extensively studied to improve object selection problems (e.g. \cite{12catallo2013top,9drosou2014diverse,11fraternali2012top,10qin2012diversifying}), and it expands a wide variety of spectrum, e.g., diversified keyword search \cite{16angel2011efficient}, diversified query recommendation \cite{50zhu2011unified}. There are various definitions of selecting diversified objects which mainly depend on the content dissimilarity \cite{13ziegler2005improving}, information diversity \cite{14clarke2008novelty}, categorical diversity \cite{15agrawal2009diversifying}. \jianxin{There also exists several greedy solutions \cite{16angel2011efficient,62borodin2012max,63gollapudi2009axiomatic} that build the diversified result set in an incremental way.} Angel et al. \cite{16angel2011efficient} propose, \texttt{DivGen}, a content-based diversification algorithm which first computes the relevance of each document, and then updates the usefulness of all other documents based on the similarity to the highest scoring document. Another diversified query search framework was proposed by Qin et al. \cite{10qin2012diversifying}, where  datasets are transformed into Diversity Graph using node properties, and the selected diversified nodes have maximum total score with no two nodes are adjacent. \jianxin{The Maximum Marginal Relevance (MMR) function \cite{54carbonell1998use} maximizes relevance and diversity of a set w.r.t. a query element. Variations of MMR are considered in several domain specific greedy-based approaches \cite{55drosou2009diversity,56elhamifar2017online,51vieira2011query,57zhou2018neural}. These greedy-based approaches are monotone and generate the answer set by adding elements one by one in non-increasing order of their scores. The process stops when an approximate solution containing $k$ elements is identified. However, the results of our proposed \textit{SSLS} approach are not necessarily generated in non-increasing order and we provide both exact and approximate solutions for such problem by exploiting the relevance and diversity of the selected set.}

\textbf{Spatial Object Selection.} Works in this category are related to map services, spatial sampling, and POI selection problems. Existing map services retrieve a subset of spatial objects based on the relative weights of the retrieved objects that maximize the total weights \cite{5das2012efficient}. Nutanong et al. \cite{8nutanong2012multiresolution} define the problem of sampling large geo-spatial dataset in a region of user interest. Mahdian et al. \cite{7mahdian2015algorithmic} propose POI selection problem, that targets to identify a set of POIs with maximum utility according to some reference POIs. Meanwhile, DisC \cite{17drosou2012disc} essentially selects the subset of diversified objects, where two selected objects must be at least $r$ distance from each other, and there should be at least one object (un-selected) in the dataset within $r$ distance from a selected object. On the other hand, the Spatial Object Selection (SOS) \cite{22guo2018efficient} model selects $k$ diversified objects in such a way that any two selected objects must be at threshold distance from each other and the aggregate similarity (computed based on semantic attributes) from the selected set of objects to the whole dataset is maximized. However, these works do not consider any social factors e.g., social relevance and social diversity.

\section{Problem Formulation}
\label{sec:problem}

Let $G(V, E^{\prime}, L, E^{\prime\prime})$ be a socio-spatial graph, where $V$ is the set of users, $E^{\prime}$ is the set $\{(u, v)| u,v \in V\}$ of edges representing the social connections among users, $L$ is the set of locations of the users, $E^{\prime\prime}$ is the set $\{(u,l) | u\in V, l\in L\}$ of edges representing the spatial connections between users and locations. 
Let, $V_u \in V$  and $L_u \in L$ be the social friends and check-ins of user $u$, respectively; the goal of \textit{SSLS} query is to find the best $k$ \emph{socio-spatial relevant and diversified} locations from $L_u$ for user $u$. 
In this section, we first discuss the intuitions and metrics of socio-spatial relevance and diversity, and then formalize the top-$k$ \textit{SSLS} problem with the proof of NP-hardness. Table \ref{tab:notations} lists the notations used in this paper.
\subsection{Socio-spatial Relevance}
The study in \cite{65wang2014checkins} showed that social interest is the type of check-in incentive that stimulates interactions or influences among friends. Therefore, a location may have higher social importance to a user if a large number of her friends have checked-in the location. 
\jianxin{Based on this intuition, we define the social relevance score of a location $l_i\in L_u$ of user $u$ as, $S_{sc}(l_i, u) = \frac{|v \in V_u :(v, l_i)\in E^{\prime\prime}|}{|V_u|}$, where $|V_u|$ is the number of friends of $u$. Similarly, to define the spatial relevance score, the study \cite{60ye2011exploiting} has revealed that geographical proximities of POIs significantly influence social users’ check-in behavior. Ye et al. \cite{59ye2010location} also remarked that friends with nearby check-ins would have a higher probability of sharing common locations, as it is easier for the friends to participate in some activities at their mutually known places. Based on these intuitions of \cite{59ye2010location,60ye2011exploiting}, and accordance with the spatial score justified in \cite{58armenatzoglou2015geo}, we define spatial relevance score of location $l_i$ of $u$ as, $S_{sp}(l_i, u) = 1 - \frac{\sum_{v\in V_u}\min dist(l_i, L_v)} {d_m * |V_u|}$. Here, $\min dist(l_i, L_v)$ returns the smallest distance between $l_i$ and the locations $L_v$ of friend $v\in V_u$. The denominator $d_m* |V_u|$ is used to adjust the spatial relevance score within the range $(0,1]$, where $d_m = \max\{\min_{v\in V_u}dist(l_i, L_v)\}$ is calculated as the maximum value among the smallest distances between the location $l_i\in L_u$ and the location set $L_v$ of each friend $v\in V_u$. Finally, the socio-spatial relevance score $R_{ss}(l_i, u)$ of location $l_i\in L_u$ is defined as the weighted sum of $S_{sc}$ and $S_{sp}$, i.e., $R_{ss}(l_i, u) = \alpha\cdot S_{sc}(l_i, u)+(1-\alpha)\cdot S_{sp}(l_i,u)$, where $\alpha\in [0,1]$ specifies the relative importance of social and spatial costs. For simplicity, we will refer $R_{ss}(l_i, u)$ as $R_{ss}(l_i)$, as $l_i\in L_u$ itself represents a location of $u$. 
	As such, a set $S$ of locations of user $u$ can have its socio-spatial relevance score as $R_{ss}(S) = \sum_{l\in S}R_{ss}(l) = \alpha\cdot \sum_{l\in S}S_{sc}(l) + (1-\alpha)\cdot \sum_{l\in S} S_{sp}(l)$.}

\begin{table}
	\centering
	\caption{Basic Notations}
	\scalebox{0.95}{
		\begin{tabular}{p{1.2cm}|p{6.2cm}}
			\hline 
			\textbf{Symbols} & \textbf{Descriptions}\\ 
			\hline 
			$R_{ss}\{D_{ss}\}$	& Socio-spatial Relevance \{Diversity\} Score \\ 
			$L_u\{V_u\}$ & Set of check-in locations \{social friends\} of $u$\\
			$\alpha$	& Trade-off between spatial and social importance\\ 
			$\omega$ & Trade-off between relevance and diversity\\	
			\hline  
		\end{tabular} 
	}
	\label{tab:notations}
\end{table}
\subsection{Socio-spatial Diversity}
\label{sec:SocSpatDiversity}
\jianxin{Intuitively, the diversity requires to measure the dissimilarity (or distance) among objects in a set. A spatially diversified location pair should reside far from each other \cite{17drosou2012disc,3shi2014density}. Similar to the spatial distance function \cite{3shi2014density}, we calculate  spatial diversity $D_{sp}(l_i, l_j) = \frac{dist(l_i, l_j)}{maxD}$ between $l_i, l_j\in L_u$ as the normalized Euclidean distance $dist(l_i,l_j)$. The constraint $maxD$ can be assigned as the maximum distance among the location pairs in $L_u$. Similarly, the social diversity between a location pair of user $u$ depends on her friends who visited the locations~\cite{68su2019experimental}. 
	As defined in \cite{3shi2014density}, we calculate social diversity score $D_{sc}(l_i, l_j) = 1 - \frac{|V_{u, l_i} \cap V_{u, l_j}|}{|V_{u, l_i} \cup V_{u, l_j}|}$ of locations $l_i, l_j\in L_u$ using Jaccard distance, where $V_{u, l_i}$ and $V_{u, l_j}$ are the set of $u$'s friends who checked-ins at $l_i$ and $l_j$, respectively. Similar as \cite{3shi2014density}, we also define socio-spatial diversity of a location pair as the weighted sum of $D_{sc}(l_i, l_j)$ and $D_{sp}(l_i, l_j)$, i.e., $D_{ss}(l_i, l_j) = \alpha\cdot D_{sc}(l_i, l_j)+(1-\alpha)\cdot D_{sp}(l_i, l_j)$, where $\alpha\in [0,1]$ specifies the relative importance of social and spatial costs.
	As such, given a location set $S$, the socio-spatial diversity score of each location $l\in S$ w.r.t. $S$ is calculated as $D_{ss}(l, S) = \min\{D_{ss}(l, l_i)|l_i\in S\setminus l\} = D_{ss}(l, \bar{l})$ s.t., $\bar{l} = \underset{l_i\in S\setminus l}{\arg\min} D_{ss}(l, l_i)$\ignore{$\underset{l_i\in S\setminus l}{\argmin} {D_{ss}(l, l_i)}$}. We further calculate the socio-spatial diversity of a set $S$ as $D_{ss}(S) = \sum_{l\in S}D_{ss}(l, S\setminus l) = \sum_{l\in S}D_{ss}(l, \bar{l}) = D_{ss}(l_i, l_j) = \alpha\cdot \sum_{l\in S} D_{sc}(l, \bar{l}) + (1-\alpha)\cdot \sum_{l\in S} D_{sp}(l, \bar{l})$, where, $\bar{l} = \underset{l_i\in S\setminus l}{\arg\min} D_{ss}(l, l_i)$ is the location among the set $S$ that has minimum diversity with $l$.}

\subsection{Socio-spatial Score of Location Set}
\jianxin{
	We follow the existing works \cite{58armenatzoglou2015geo,3shi2014density} to derive a ranking function as the weighted linear combination of socio-spatial relevance and diversity. Given a location set $S\subseteq L_u$ of user $u$, Equation~\ref{equ:FS} describes the socio-spatial score function $F(S)$, where $\omega\in (0,1)$ specifies the relative importance of relevance and diversity; when $\omega > 0.5$, the relevance of the selected locations to the query is more important than their diversity. 
	\vspace{-2mm}
	\begin{equation}
	\small
	\label{equ:FS}
	\begin{split}
	& F(S)  = \omega\cdot R_{ss}(S) + (1-\omega)\cdot D_{ss}(S) 
	= \omega\cdot \big (\alpha \cdot\sum_{l\in S} S_{sc}(l) \\ 
	& + (1-\alpha)\cdot \sum_{l\in S} S_{sp}(l) \big) + (1-\omega)\cdot \big ( \alpha\cdot \sum_{l\in S} D_{sc}(l, \bar{l}) + \\ & (1-\alpha)\cdot \sum_{l\in S} D_{sp}(l, \bar{l})\big) \text{, \hspace{1mm} (s.t., } \bar{l} = \underset{l_i\in S\setminus l}{\arg\min} D_{ss}(l, l_i))
	\end{split}
	\end{equation}
	\normalsize
}
\vspace{-3mm}

\textbf{Problem Statement of Top-$k$ \textit{SSLS} Query.} \jianxin{Given a social graph $G$, a positive integer $k$, a query user $u$ with check-in locations $L_u$, trade-off parameters $\omega$ between relevance and diversity, and $\alpha$ signifying relative importance between social and spatial factors, and socio-spatial score function $F$, the top-$k$ \textit{SSLS} query returns a set $S$ of $k$ locations from $L_u$, s.t., $\forall S^* \subseteq L_u, F(S)> F(S^*)$, where $|S^*|=k$ and $S \neq S^*$. 
}

\textbf{Significance of $\alpha$, $\omega$ in \textit{SSLS} Query.}
\jianxin{
	The trade-off parameters have significant importance in the quality of the selection considering both the social and spatial aspects. For example, if an application prefers social factors including social relevance and social diversity, then we can set $\alpha =  1$ and $\omega$ as default (e.g., $\omega=0.5$). 
	Thus, it means that the selected locations should be checked-in by a diverse set of friends, and the locations are socially relevant to the user and her friends. 
	Given such setting, $S= \{p_6,p_7\}$ will be selected as the answer for the example in Figure~\ref{fig:exmplFirst}. If we increase $\omega$ to 0.6, i.e., the socio-spatial relevance is preferred, then the \textit{SSLS} query will return $S = \{p_7, p_8\}$ as the answer. Similarly, an end-user can tailor the result by varying different values for $\alpha$ and $\omega$.
}
\begin{theorem}
	The top-$k$ \textit{SSLS} problem is NP-hard.
\end{theorem}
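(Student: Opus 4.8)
The plan is to prove NP-hardness through a polynomial-time reduction from the Maximum Independent Set problem, a classical NP-complete problem; since hardness of a special case implies hardness of the general problem, I would first restrict the \textit{SSLS} instance so as to isolate the diversity objective. Concretely, I would fix $\alpha = 1$, which zeroes out the spatial coefficient $(1-\alpha)$ and collapses both scores onto their social components, so that $R_{ss}(l) = S_{sc}(l)$ and $D_{ss} = D_{sc}$; this incidentally sidesteps any question of Euclidean realizability of $dist$. I keep $\omega$ at an arbitrary value in $(0,1)$. The enabling observation is that if every candidate location is assigned the \emph{same} social relevance, then $R_{ss}(S) = \sum_{l\in S} S_{sc}(l)$ is a constant over all size-$k$ sets, so maximizing $F(S)$ is equivalent to maximizing the social diversity $D_{sc}(S) = \sum_{l\in S} D_{sc}(l, \bar{l})$.

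Next I would describe the gadget that encodes a graph $H = (W, F)$ into friend sets. For each vertex $w_i \in W$ I create one candidate location $l_i$, and for each edge $e \in F$ I create one distinct friend $f_e$ that I place in the friend sets of exactly the two endpoints of $e$. I then pad each friend set with private (unshared) friends so that all sets attain a common size $s \ge \max_i \deg(w_i)$. This padding keeps $S_{sc}(l_i)$ identical across all locations, as required above, and it forces $|V_{u, l_i} \cap V_{u, l_j}| = 1$ when $(w_i, w_j) \in F$ and $0$ otherwise. The Jaccard-based social diversity then evaluates to $D_{sc}(l_i, l_j) = 1 - \tfrac{1}{2s-1}$ for every edge pair and $D_{sc}(l_i, l_j) = 1$ for every non-edge pair, and the whole construction is plainly polynomial in $|W| + |F|$.

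The correctness argument is where the nearest-neighbor $\min$ inside $D_{ss}$ does the real work, and designing the gadget so that this $\min$ exactly captures independence is the step I expect to be the main obstacle. Because $D_{sc}(l_i, S\setminus l_i) = \min_{l_j \in S\setminus l_i} D_{sc}(l_i, l_j)$, a location contributes exactly $1$ to $D_{sc}(S)$ when it has no edge-neighbor inside $S$, and strictly less than $1$ when it has at least one. Since each term is at most $1$, the maximum possible value $D_{sc}(S) = k$ is attained precisely when the chosen locations form an independent set of $H$, and $D_{sc}(S) < k$ for every set that contains an edge. I would therefore set the decision threshold to the value of $F$ corresponding to $D_{sc}(S) = k$, namely $\tau = \omega\cdot S_{sc}\cdot k + (1-\omega)\cdot k$, and conclude that the \textit{SSLS} instance admits a size-$k$ set of score at least $\tau$ if and only if $H$ has an independent set of size $k$. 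The only delicate point is to check that the constant relevance term and the strict range $\omega \in (0,1)$ do not erase the gap between $D_{sc}(S) = k$ and $D_{sc}(S) < k$; since each offending vertex costs a fixed positive amount $(1-\omega)\cdot \tfrac{1}{2s-1}$, the separation survives, which completes the reduction.
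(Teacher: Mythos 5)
Your proof is correct, and although it shares the paper's core idea --- neutralize the relevance term so that maximizing $F(S)$ collapses to maximizing diversity, then tie diversity maximization to independent sets --- it is a genuinely different and more self-contained argument. The paper's proof works by declaration: it posits a ``special case'' in which every location has relevance $1$ and every location pair carries an arbitrary edge weight representing socio-spatial distance, then cites Qin et al.\ for the equivalence of top-$k$ diverse vertex search with maximum-weight independent set, and Garey--Johnson for the NP-hardness of the latter. What the paper never verifies is that such weighted instances are actually realizable as legitimate \textit{SSLS} inputs, i.e., that friend sets and check-ins can be chosen so that the Jaccard-based social diversities attain the prescribed edge weights; its reduction is also worded in the wrong direction (\textit{SSLS} ``can be transformed into'' the graph problem), which only yields hardness if read as an equivalence. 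Your edge-friend-plus-padding gadget supplies exactly the missing content: it constructs explicit friend sets whose Jaccard distances are $1 - \frac{1}{2s-1}$ on edges and $1$ on non-edges, with constant relevance $s/|V_u|$ across all locations, so that the nearest-neighbor $\min$ inside $D_{ss}$ makes $D_{sc}(S) = k$ characterize precisely the independent sets of the source graph $H$; the threshold $\tau$ and the quantified gap $(1-\omega)\cdot\frac{1}{2s-1}$ then give a clean decision-version reduction from unweighted Maximum Independent Set. In short, the paper's argument is shorter but leans on citations and leaves realizability --- the actual work of a hardness reduction --- unaddressed, while yours is longer but constructive, checks the polynomiality of the gadget, and confirms that the hardness already holds in the purely social regime $\alpha = 1$, independent of any geometric constraints on $dist$.
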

\begin{proof}
	We consider a special case of the problem, assuming the socio-spatial relevance score of each location of user $u$ is 1, i.e., $\forall l\in L_u$, $R_{ss}(l) = 1$, and each location pair is connected with an edge where the edge-weight is represented by socio-spatial distances. We remove the edges between the location pairs where social diversity \jianxin{is} $0$, and present the location set $L_u$ as vertices of a graph $G$. Based on this setting, our top-$k$ \textit{SSLS} problem can be transformed into the problem of top-$k$ diverse vertices search in a graph. Additionally, we know that finding top-$k$ diverse set of vertices from $G$ is equivalent to finding maximum weight independent set (MWIS) of size $k$ \cite{10qin2012diversifying}. Further, in \cite{38garey2002computers}, the problem of MWIS has been proved as NP-hard. Hence, we can conclude the proof. 
\end{proof}

\section{An Exact Approach}
\label{sec:Exact}
For an exact solution to answer the Top-$k$ \textit{SSLS} query, we resort to an incremental Branch-and-Bound (BnB) strategy that progressively adds locations to \jianxin{build} the answer set. \jianxin{The key idea is to develop pruning strategies based on the derived lower bound on socio-spatial diversity of an intermediate set  that can avoid the exploring a large number of location sets. 
}
\vspace{-2mm}
\subsection{Computing Bounds \jianxin{on Diversity of Intermediate Set}}

We use the concept of score gain to decide whether a location should be added to an intermediate result set $S_I$ in the process of finding a top-$k$ \textit{SSLS} set. Initially, $S_I$ is empty, and $|S_I|<k$ holds always. We use $S_R = L_u \setminus S_I$ to denote the set of remaining locations of user $u$ w.r.t. $S_I$. 
If we add a location $l'\in S_R$ to $S_I$, the socio-spatial score $F(S'_I)$ of set $S'_I = \{S_I\cup l'\}$ becomes, $F(S'_I) = \omega\cdot R_{ss}(S'_I) + (1 - \omega)\cdot D_{ss}(S'_I)$, and consequently the socio-spatial score gain $\delta$ of $S'_I$ w.r.t. the previous set $S_I$ can be computed as, $\delta = F(S'_I) -  F(S_I)$,
\vspace{-2mm}
\begin{equation}
\small
\label{equ:gain}
\begin{split}
\Rightarrow & \hspace{0.2cm} \delta = \omega\cdot R_{ss}(S'_I) + (1 - \omega)\cdot D_{ss}(S'_I) - \omega\cdot R_{ss}(S_I) \\
& \hspace{-0.2cm} - (1 - \omega)\cdot D_{ss}(S_I) = \omega\cdot \big (R_{ss}(S'_I) - R_{ss}(S_I)\big ) \\
& \hspace{-0.2cm} + (1-\omega)\cdot \big ( D_{ss}(S'_I) - D_{ss}(S_I)\big ) = \omega\cdot \delta_r + (1-\omega)\cdot \delta_d	
\end{split}	
\end{equation}
\normalsize
Here, we consider $\delta_{r}$ and $\delta_{d}$ as the Relevance Gain and Diversity Gain of $S'_I$ w.r.t. the previous set $S_I$, respectively:

\noindent\textbf{Relevance Gain ($\delta_{r})$.} The relevance gain can be simplified as: $\delta_{r} = R_{ss}(S'_I) - R_{ss}(S_I) = R_{ss}(S_I\cup l') - R_{ss}(S_I) = R_{ss}(l')$. $\delta_{r} \in [0,1]$ can not be negative for any $l'\in L_u$. 

\noindent\textbf{Diversity Gain ($\delta_{d})$.}  
The diversity gain $\delta_{d} = D_{ss}(S'_I) - D_{ss}(S_I)$ is the difference in socio-spatial diversity of $S'_I  = S_I\cup l'$ to $S_I$. $\delta_{d}$ can be negative when $D_{ss}(S'_I) < D_{ss}(S_I)$.

Here, the value of $D_{ss}(S'_I)$ is dependent on the diversity of the added location $l'\in S_R$ w.r.t. $S_I$, and the updated aggregated diversity of the locations of $S_I$, such as,
\vspace{-2mm}
\begin{equation}
\small
\label{equ:divNewSetS}
\begin{split}
& D_{ss}(S'_I) = D_{ss}(l', S_I) + \sum_{l\in S_I}\min\{D_{ss}(l, S_I\setminus l), D_{ss}(l, l')\}\\
&  = \widehat{d} + \widehat{D},
\end{split}
\end{equation}
\normalsize
\jianxin{where the first part, $\widehat{d} = D_{ss}(l', S_I) = \min_{l\in S_I}\{D_{ss}(l', l)\}$ is the diversity of the newly added location $l'$ w.r.t. the intermediate set $S_I$, and the remaining part $\widehat{D} = \widehat{D}(S_I, l') = \sum_{l\in S_I}\min\{D_{ss}(l, S_I\setminus l), D_{ss}(l, l')\}$ 
	is the updated total diversity of the existing set $S_I$, when $l'\in S_R$ is added to $S_I$. 
	Using Equation~\ref{equ:divNewSetS}, we derive the diversity gain $\delta_{d}$ as, 
}
\vspace{-2mm}
\begin{equation}
\small
\label{equ:diver}
\begin{split}
\delta_{d} = D_{ss}(S'_I) - D_{ss}(S_I) = \widehat{d} + \widehat{D} - D_{ss}(S_I)
\end{split}
\end{equation}
\normalsize
Further, we obtain the socio-spatial score gain of the intermediate set $S_I$ using Equation~\ref{equ:gain} and Equation~\ref{equ:diver} as follows,
\vspace{-2mm}
\begin{equation}
\small
\label{equ:gainUpdated}
\delta = \omega\cdot \delta_r + (1 - \omega)\cdot (\widehat{d} + \widehat{D} - D_{ss}(S_I))
\end{equation}
\normalsize

Now, we will identify the \textit{eligible} locations from $S_R$ that can generate a positive socio-spatial score gain w.r.t. $S_I$. 
\begin{definition}[Eligible Location]
	\label{def:eligibleLoc}
	Given a current intermediate set $S_I$, and a location  $l'\in S_R$, $l'$ will be considered as an eligible location if 
	$\delta = \omega\cdot \delta_r + (1 - \omega)\cdot (\widehat{d} + \widehat{D} - D_{ss}(S_I)) > 0$.
\end{definition}

Next, we will define some lemmas using the socio-spatial diversity of a set of locations to deduce a lower bound on $\widehat{D}$.

\begin{lemma}\label{lemma:del_d}
	Given an intermediate set $S_I$, an eligible
	location $l'\in S_R$ w.r.t. $S_I$, the updated 
	aggregated socio-spatial diversity $\widehat{D}$
	of set $S_I$ w.r.t. the eligible location $l'$ will never exceed the total socio-spatial diversity $D_{ss}(S_I)$ of the intermediate set $S_I$, e.g., $\widehat{D} \le D_{ss}(S_I)$ is always true.
	%
	
\end{lemma}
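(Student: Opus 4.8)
The plan is to prove the inequality $\widehat{D} \le D_{ss}(S_I)$ summand by summand, exploiting the fact that both quantities are sums indexed by exactly the same set $S_I$. The crucial preliminary observation is that the newly added location $l'$ contributes to the augmented diversity only through the separate term $\widehat{d} = D_{ss}(l', S_I)$ in Equation~\ref{equ:divNewSetS}; the quantity $\widehat{D} = \sum_{l\in S_I}\min\{D_{ss}(l, S_I\setminus l), D_{ss}(l, l')\}$ still ranges only over $l \in S_I$. Since $D_{ss}(S_I) = \sum_{l\in S_I} D_{ss}(l, S_I\setminus l)$ also ranges over $l \in S_I$, the two sums are over identical index sets, which is what makes a direct term-by-term comparison legitimate.

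First I would fix an arbitrary $l \in S_I$ and compare its contribution to each side. By the definition of set diversity, its contribution to $D_{ss}(S_I)$ is $D_{ss}(l, S_I\setminus l) = \min\{D_{ss}(l, l_i) \mid l_i \in S_I\setminus l\}$, whereas its contribution to $\widehat{D}$ is $\min\{D_{ss}(l, S_I\setminus l),\, D_{ss}(l, l')\}$. Applying the elementary inequality $\min\{a,b\} \le a$, I obtain
\[
\min\{D_{ss}(l, S_I\setminus l),\, D_{ss}(l, l')\} \le D_{ss}(l, S_I\setminus l).
\]
Summing this over all $l \in S_I$ — valid precisely because both sides share the index set $S_I$ — preserves the inequality and yields $\widehat{D} \le D_{ss}(S_I)$, which is the claim.

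The conceptual content, and the only point requiring care, is the interpretation of this bound: enlarging the set from $S_I$ to $S_I \cup \{l'\}$ can only decrease each existing location's nearest-neighbor diversity, because the defining minimum is now taken over a strictly larger pool of candidate partners. There is no genuine obstacle here; once one isolates the fact that $\widehat{D}$ and $D_{ss}(S_I)$ have matching index sets, the result follows immediately from $\min\{a,b\}\le a$. I would also note in passing that the eligibility hypothesis on $l'$ is not actually needed — the bound $\widehat{D} \le D_{ss}(S_I)$ holds for \emph{any} added location $l' \in S_R$ — so the lemma is stated for eligible $l'$ only to match the context in which it will later be applied.
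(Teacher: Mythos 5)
Your proof is correct and is essentially identical to the paper's own argument: both compare the two sums term by term over the same index set $S_I$, apply $\min\{a,b\}\le a$ to each summand, and sum. Your side remark that the eligibility hypothesis on $l'$ is never used is also accurate — the paper's proof likewise makes no use of it.
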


\begin{proof} 
	We know, $\sum_{l\in S_I}\min\{D_{ss}(l, S_I\setminus l), D_{ss}(l, l')\}$ $\le$ $\sum_{l\in S_I} D_{ss}(l, S_I\setminus l)$ is true, as for any $l\in S_I$, $\min\{D_{ss}(l, S_I\setminus l), D_{ss}(l, l')\}$ is no larger than $D_{ss}(l, S_I\setminus l)$. Therefore, $\widehat{D} \le D_{ss}(S_I)$ holds, as $\widehat{D} = \sum_{l\in S_I}\min\{D_{ss}(l, S_I\setminus l), D_{ss}(l, l')\}$, and $D_{ss}(S_I) = \sum_{l\in S_I} D_{ss}(l, S_I\setminus l)$ (refer Section \ref{sec:SocSpatDiversity}).
\end{proof}

An \textit{eligible} location $l'$ may produce a negative gain in diversity $\delta_{d}$ that can lessen the socio-spatial score of an updated set $S'_I = S_I \cup l'$ comparing with $S_I$. The below lemma derives the condition when instead of having a negative diversity gain, the socio-spatial score of an intermediate set can generate a positive socio-spatial gain (e.g., $\delta>0$) for $S'_I$.
\begin{lemma}
	\label{lemma:postvGain}
	Given an intermediate set $S_I$, an eligible location $l'\in S_R$, s.t., $S'_I = S_I\cup l'$; if $S'_I$ has a negative gain in diversity but $\delta_r > \frac{(1-\omega)}{\omega}\cdot |\delta_{d}|$ w.r.t. $S_I$, then the socio-spatial score of $S'_I$ will be larger than that of $S_I$, e.g., $F(S'_I) > F(S_I)$.
\end{lemma}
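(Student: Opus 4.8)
The plan is to work directly from the additive decomposition of the socio-spatial score gain already established in Equation~\ref{equ:gain}, namely $\delta = F(S'_I) - F(S_I) = \omega\cdot\delta_r + (1-\omega)\cdot\delta_d$. Proving $F(S'_I) > F(S_I)$ is therefore equivalent to proving $\delta > 0$, so the whole argument reduces to showing that the positive relevance contribution outweighs the assumed-negative diversity contribution.

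First I would invoke the first hypothesis that $S'_I$ has a negative diversity gain, i.e. $\delta_d < 0$, to rewrite $\delta_d = -|\delta_d|$. Substituting this into the decomposition gives $\delta = \omega\cdot\delta_r - (1-\omega)\cdot|\delta_d|$. Next I would use the second hypothesis $\delta_r > \frac{(1-\omega)}{\omega}\cdot|\delta_d|$ and multiply both sides by $\omega$; since $\omega\in(0,1)$ is strictly positive, the direction of the inequality is preserved and we obtain $\omega\cdot\delta_r > (1-\omega)\cdot|\delta_d|$. Combining this with the rewritten expression for $\delta$ yields $\delta > 0$ immediately, which is exactly the claim $F(S'_I) > F(S_I)$.

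The argument is essentially a one-line inequality manipulation, so I do not expect any substantial obstacle; the only point requiring care is the justification for multiplying through by $\omega$ without flipping the inequality, which is guaranteed by the standing assumption $\omega\in(0,1)$ coming from the definition of $F$ in Equation~\ref{equ:FS}. It is worth noting that the assumption $\delta_d<0$ is used only to write $\delta_d = -|\delta_d|$ cleanly: the same bound $\delta_r > \frac{(1-\omega)}{\omega}\cdot|\delta_d|$ would produce an even larger positive $\delta$ were $\delta_d$ nonnegative, so the lemma isolates precisely the worst case in which a sufficiently relevant location must still be admitted to $S_I$ even though it reduces the set's aggregated diversity.
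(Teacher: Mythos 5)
Your proof is correct and follows essentially the same route as the paper's own argument: both start from the gain decomposition $\delta = F(S'_I) - F(S_I) = \omega\cdot\delta_r + (1-\omega)\cdot\delta_d$ and conclude $\delta > 0$ from the hypotheses $\delta_d < 0$ and $\delta_r > \frac{(1-\omega)}{\omega}\cdot|\delta_d|$. Your version merely spells out the intermediate steps (substituting $\delta_d = -|\delta_d|$ and justifying multiplication by $\omega \in (0,1)$) that the paper leaves implicit.
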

\begin{proof}
	%
	Let, $F(S'_I)$ and $F(S_I)$ be the socio-spatial scores of $S'_I = S_I\cup l'$ and $S_I$ respectively. Hence, the socio-spatial gain of $S'_I$ is $\delta = F(S'_I) -  F(S_I)$. 
	If $\delta_{d} < 0$, but $\delta_r>\frac{(1-\omega)}{\omega}\cdot |\delta_{d}|$, then $\delta = \omega\cdot \delta_r + (1-\omega)\cdot \delta_{d}> 0$ is always true. Therefore, $\delta = F(S'_I) -  F(S_I) > 0$. Hence, $F(S'_I) > F(S_I)$.
\end{proof}
\vspace{-2mm}
\jianxin{Now, we will derive a lower bound on the updated diversity ($\widehat{D}$) of an intermediate set $S_I$, which will help us to discard a large number of locations from $S_R$ that can not generate a better solution w.r.t. the current intermediate set $S_I$.}

\subsubsection{\underline{Lower Bound of $\widehat{D}$}}
Maximum relevance gain of an intermediate set $S_I$ w.r.t. $S_R$ is $\delta_{r\_max} = \max_{l'\in S_R} R_{ss}(l')$. 
Similarly, the maximum possible diversity of locations in $S_R$ w.r.t. $S_I$ can be calculated as $\widehat{d}_{max} = \max_{l'\in S_R}D_{ss}(l', S_I)$.

An \textit{eligible} location $l'\in S_R$ always derives positive gain, e.g., $\delta > 0$, to an intermediate set $S_I$ (Definition \ref{def:eligibleLoc}).
Thus, we get, $\delta = \omega\cdot \delta_r + (1-\omega)\cdot \big(\widehat{d} + \widehat{D} - D_{ss}(S_I)\big) > 0$. Therefore, $\widehat{D} > D_{ss}(S_I) - \widehat{d} - \frac{\omega}{1- \omega}\cdot \delta_r$.
Now, we derive lower bound of $\widehat{D}$ by replacing $\widehat{d}$ and $\delta_r$ with their maximum possible values, 
\vspace{-3mm}
\begin{equation}
\small
\label{equ:del''LowerBoundCalc}
\widehat{D}{\downarrow} = D_{ss}(S_I) - \widehat{d}_{max} -\frac{\omega}{1 - \omega}\cdot \delta_{r\_max}
\end{equation}
\normalsize

\vspace{-2mm}
\subsubsection{\underline{Early Pruning based on $\widehat{D}$}}
\label{sec:AdvPruningStrategy}
Using Equation \ref{equ:del''LowerBoundCalc}, we derive that a location $l' \in S_R$ cannot be included into an intermediate set $S_I$, if $\widehat{D} \le \widehat{D}{\downarrow}$ is true. 
We formalize this pruning condition in Property \ref{lemma:pruningCondLemmaNew} assuming that we are yet to find a feasible solution of size $k$. 
\begin{property}
	\label{lemma:pruningCondLemmaNew}	
	Given an intermediate set $S_I$, s.t., $|S_I|<k$, we can prune a  location $l'\in S_R$ w.r.t. $S_I$, if $\widehat{D} \le \widehat{D}{\downarrow}$ satisfies.	
\end{property}
\noindent Further, we derive an advanced termination strategy based on the score of already explored best feasible set and the expected contributions of the remaining locations in the overall score.

\subsubsection{\underline{Advanced Pruning}}
\label{sec:AdvTermStrategy}
First, we derive a pruning condition for an intermediate set $S_I$ of size $(k-1)$, then generalize to any sets of size less than $k$. 
Let, $S_b$ be the previously identified best feasible set. Also, let  $l' \in S_R$ be an arbitrary location with relevance score $R_{ss}(l')$, and $D_{ss}(l', S_I)$ be the diversity of $l'$ w.r.t.  $S_I$. We denote the updated diversity of $S'_I = S_I\cup l'$ as, $\sum_{l\in S'_I\setminus l'}\min\{D_{ss}(l, S_I\setminus l), D_{ss}(l, l')\} = \widehat{D}$. The set $S'_I$ of size $k$ can replace an earlier identified best feasible set $S_b$, if $F(S'_I) = \omega\cdot R_{ss}(S_I \cup l') + (1-\omega)\cdot D_{ss}(S_I\cup l') > F(S_b)$, 
\begin{equation}
\small
\label{equ:earlyTermExact}
\begin{split}
& \hspace{-0.4cm} \Rightarrow \omega\cdot (R_{ss}(S_I) + \delta_r) + (1-\omega)\cdot \big ( \widehat{d} + \widehat{D} \big) > F(S_b) \\
& \hspace{-0.4cm} \Rightarrow \widehat{D} > \frac{1}{1 - \omega}\cdot \big (F(S_b) - \omega\cdot (R_{ss}(S_I) + \delta_r)\big) - \widehat{d} \\
\end{split}
\vspace{-2mm}
\end{equation}
\normalsize
The lower bound of $\widehat{D}$ for termination (when $|S_I| = (k-1)$) can be obtained by replacing $\widehat{d}$ and $\delta_r$ with their corresponding maximum possible values, e.g., $\widehat{d}_{max} = \max_{l'\in S_R} D_{ss}(l', S_I)$ and $\delta_{r\_max} = \max_{l' \in S_R}R_{ss}(l')$  respectively. Therefore, $\widehat{D}{\Downarrow} = \frac{1}{1 - \omega}\cdot \big (F(S_b) - {\omega}\cdot (R_{ss}(S_I) + \delta_{r\_max})\big) - \widehat{d}_{max}$.

Adopting the above procedure, we add an arbitrary subset of locations  $S'_R\subseteq S_R$ to the intermediate set $S_I$, such that (i) $|S'_R| = (k-|S_I|)$, (ii) the socio-spatial score of new set $S' = S_I\cup S'_R$ surpasses $F(S_b)$, e.g., $F(S') > F(S_b)$. Therefore,
\begin{equation}
\label{equ:newAdd}
\small
\begin{split}	
\hspace{-0.1cm} \omega\cdot (R_{ss}(S_I) + R_{ss}(S'_R)) + (1-\omega)\cdot D_{ss}(S_I \cup S'_R) > F(S_b)\\
\end{split}
\end{equation}
\normalsize

Now, we define the below lemma on socio-spatial diversity of a set $S' = S_I\cup S'_R$ of size $k$, using the diversity scores of the locations $l'\in S'_R$ w.r.t. current intermediate set $S_I$.
\begin{lemma}
	\label{lemma:lemmaNew}
	Given an intermediate set $S_I$, a subset $S'_R\subseteq S_R$ of locations, the socio-spatial diversity $S' = S_I\cup S'_R$ satisfies $D_{SS}(S_I\cup S'_R) \le \widehat{D} + \sum_{l'\in S'_R}D_{ss}(l', S_I)$, where $\widehat{D}$ is the updated diversity of $S_I$ w.r.t. an arbitrary location $l'\in S'_R$.
\end{lemma}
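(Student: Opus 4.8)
The plan is to start from the additive definition of set diversity, $D_{ss}(S)=\sum_{l\in S}D_{ss}(l,S\setminus l)$ with $D_{ss}(l,S\setminus l)=\min_{l_i\in S\setminus l}D_{ss}(l,l_i)$, and to split the contribution of $S'=S_I\cup S'_R$ into the part coming from the old locations $l\in S_I$ and the part coming from the newly added locations $l'\in S'_R$. That is, I would write $D_{ss}(S')=\sum_{l\in S_I}D_{ss}(l,S'\setminus l)+\sum_{l'\in S'_R}D_{ss}(l',S'\setminus l')$ and bound the two sums separately against $\widehat{D}$ and $\sum_{l'\in S'_R}D_{ss}(l',S_I)$ respectively.

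For the first sum, the key observation is that for every $l\in S_I$ the nearest neighbour is now searched over the enlarged set $S'\setminus l=(S_I\setminus l)\cup S'_R$, so $D_{ss}(l,S'\setminus l)=\min\{D_{ss}(l,S_I\setminus l),\min_{l_j\in S'_R}D_{ss}(l,l_j)\}$. Fixing the arbitrary witness $l'\in S'_R$ from the statement, I would use $\min_{l_j\in S'_R}D_{ss}(l,l_j)\le D_{ss}(l,l')$ to obtain $D_{ss}(l,S'\setminus l)\le\min\{D_{ss}(l,S_I\setminus l),D_{ss}(l,l')\}$. Summing over $l\in S_I$ and recalling the definition $\widehat{D}=\sum_{l\in S_I}\min\{D_{ss}(l,S_I\setminus l),D_{ss}(l,l')\}$ from Equation~\ref{equ:divNewSetS} yields $\sum_{l\in S_I}D_{ss}(l,S'\setminus l)\le\widehat{D}$.

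For the second sum, I would invoke monotonicity of the minimum under set enlargement in the opposite direction: since $S_I\subseteq S'\setminus l'$ for each $l'\in S'_R$, taking the minimum over the larger index set can only shrink the value, so $D_{ss}(l',S'\setminus l')=\min_{l_i\in S'\setminus l'}D_{ss}(l',l_i)\le\min_{l_i\in S_I}D_{ss}(l',l_i)=D_{ss}(l',S_I)$. Summing gives $\sum_{l'\in S'_R}D_{ss}(l',S'\setminus l')\le\sum_{l'\in S'_R}D_{ss}(l',S_I)$, and adding the two bounds delivers the claim $D_{ss}(S_I\cup S'_R)\le\widehat{D}+\sum_{l'\in S'_R}D_{ss}(l',S_I)$. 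The only delicate point is the first sum: $\widehat{D}$ is defined with respect to one arbitrary $l'$, whereas each $l\in S_I$ actually sees its true nearest neighbour among all of $S'_R$; the bound survives precisely because that true minimum is dominated by the single-location value $D_{ss}(l,l')$, so no uniformity in the choice of $l'$ is required and the inequality holds for whichever witness the statement picks.
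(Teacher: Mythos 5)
Your proof is correct. There is no paper proof to compare it with: for this lemma the authors write only ``Proof is omitted due to space limitations,'' so the statement is left unproved in the text. Your argument is the natural completion of the machinery the paper does provide: writing $S'=S_I\cup S'_R$, the split $D_{ss}(S')=\sum_{l\in S_I}D_{ss}(l,S'\setminus l)+\sum_{l'\in S'_R}D_{ss}(l',S'\setminus l')$ mirrors the single-location decomposition of Equation~\ref{equ:divNewSetS}; the bound $\sum_{l\in S_I}D_{ss}(l,S'\setminus l)\le\widehat{D}$ rests on the same elementary fact (a minimum taken over a larger candidate set can only shrink) that the paper uses to prove Lemma~\ref{lemma:del_d}; and $D_{ss}(l',S'\setminus l')\le D_{ss}(l',S_I)$ is that fact again, applied to the newly added locations. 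The subtlety you flag at the end is also the right one: because each $l\in S_I$'s true nearest neighbour in $S'_R$ is dominated by its distance to any single witness, the inequality holds for every choice of $l'\in S'_R$, which is precisely the form in which Property~\ref{lemma:advPruning} consumes the lemma.
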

\begin{proof}
	Proof is omitted due to space limitations.
\end{proof}

\noindent Applying Lemma \ref{lemma:lemmaNew} in (\ref{equ:newAdd}), we get,  $\omega\cdot (R_{ss}(S_I) + R_{ss}(S'_R))$
\begin{displaymath}
\small
\begin{split}	
\noindent & + R_{ss}(S'_R)) + (1-\omega)\cdot (\widehat{D} + \sum_{l'\in S'_R}D_{ss}(l', S_I)) > F(S_b)\\
& \Rightarrow \widehat{D} > \frac{F(S_b) - \omega\cdot \big(R_{ss}(S_I) + R_{ss}(S'_R)\big)}{(1-\omega)} - \sum_{l'\in S'_R}D_{ss}(l', S_I)
\end{split}
\end{displaymath}
\normalsize
Now, we will derive the lower bound $\widehat{D}{\Downarrow}$ by replacing $R_{ss}(S'_R)$ and $\sum_{l'\in S'_R}D_{ss}(l', S_I)$ with their maximum values, 
\vspace{-1mm}
\begin{equation}
\small
\label{equ:lowerBoundTerminationOld}
\begin{split}
& \widehat{D}{\Downarrow} = \frac{F(S_b) - \omega\cdot \big(R_{ss}(S_I) + 	R^{Max}_{ss}(S'_R)\big)}{(1-\omega)} - D^{Max}_{ss}
\end{split}
\end{equation}
\normalsize
Here, $R^{Max}_{ss}(S'_R) =  \max_{l'\in S_R}(\sum_{k-|S_I|}R_{ss}(l'))$ is the aggregated top $(k-|S_I|)$ relevance scores among the locations in $S_R$, and $D^{Max}_{ss} = \max_{l'\in S_R}(\sum_{k-|S_I|}D_{ss}(l', S_I))$ is the sum of the top $(k-|S_I|)$ diversity scores of the locations $l'\in S_R$ w.r.t. $S_I$.
Finally, we formalize the pruning condition in Property~\ref{lemma:advPruning} when a feasible set has been retrieved already.
\begin{property}[Location Pruning]
	\label{lemma:advPruning}
	Let $S_I$ be an intermediate set s.t. $|S_I|< k$, $|S_I|+|S_R|\ge k$, and $S_b$ be the best feasible set of size $k$ that has been identified already. 
	Using Equation~\ref{equ:lowerBoundTerminationOld}, we can prune location $l'\in S_R$ w.r.t. $S_I$, if $\widehat{D} \le \widehat{D}{\Downarrow}$ satisfies. 
\end{property}

\jianxin{The \texttt{Exact} algorithm progressively adds locations, and checks whether the locations can generate a positive gain in the socio-spatial score. Further, it prunes a large number of locations using the lower bound of an intermediate set.}

\subsection{Algorithm}
\label{Algo:Exact}
Algorithm \ref{Algo:top-k-ssls-basic-new} summarizes the \texttt{Exact} approach for answering the \textit{SSLS} query. It takes socio-spatial graph $G$, query user $u$, an integer $k$ as inputs, and returns a set $S$ of $k$ locations that maximizes socio-spatial score $F(S)$. We initialize an intermediate set $S_I$ as empty, and $S_R$ contains the remaining locations $l\in L_u\setminus S_I$ arranged in descending order of socio-spatial relevance scores $R_{ss}$.
\jianxin{
	A priority queue, $Q$, maintains a tuple of intermediate set $S_I$, set $S_R$ of remaining locations, and socio-spatial score of $S_I$. An inner loop fetches next location $l$ from $S_R$ (Line \ref{Algo1:fetchNext}), and an entity $(S_I - \{l\}, S_R)$ is pushed to $Q$. If no feasible set is retrieved yet, the process further prunes $S_R$ using Property \ref{lemma:pruningCondLemmaNew} (Line \ref{Algo1:pruneE}). Otherwise, Property \ref{lemma:advPruning} (Line \ref{Algo1:pruneT}) is used to prune. Finally, an entity $(S_I, S_R)$ is pushed into $Q$ when $|S_R|>0$. The process continues until $Q$ is empty. Finally, the final result set $S$ of size $k$ is returned.
}
\begin{algorithm}[th]
	\small
	\label{Algo:top-k-ssls-basic-new}
	\caption{SSLS: \texttt{Exact}} 
	\KwIn{Socio-spatial graph \textit{$G$}, set size $k$, query user $u$}	
	\KwOut{Location set $S$ of size $k$} 
	Initialize: $S_I \gets \emptyset, S \gets \emptyset$, $F(S_b)\gets 0$, $flagFS \gets false$,
	
	Append $\langle l, R_{ss}(l, u) \rangle$ into $S_R$ in non-increasing $R_{ss}(l, u)$
	
	
	$Q.push(S_I, S_R, 0)$ 
	
	\While {$Q$ is not empty}{
		$S_I, S_R \gets Q.pop()$\\	
		\If{$|S_I|=k$ or $|S_R| = \emptyset$}{
			continue
		}				
		\While{$|S_I|<k$ and $|S_I|+|S_R|\ge k$}{\label{Algo1:innerLoop}
			$l\gets nextLocation(S_R)$ \label{Algo1:fetchNext}\\	
			$S_I.append(l); S_R.remove(l)$\label{Algo1:appendSI}\\
			$Q.push(S_I - \{l\}, S_R, F(S_I-\{l\}))$\label{Algo1:QPush1}\\
			\If{$flagFS == false$}{ \label{Algo1:lineIf}
				$S_R \gets pruneE(S_I, S_R)$ \text{*** Property \ref{lemma:pruningCondLemmaNew}} \label{Algo1:pruneE}
			}\Else{ \label{Algo:else1}
				$S_R \gets pruneT(S_I, S_R, F(S_b))$ \text{*** Property \ref{lemma:advPruning}} \label{Algo1:pruneT}
			}	\label{Algo:else2}
			\If{$|S_R|>0$}{
				$Q.push(S_I, S_R, F(S_I))$ \label{Algo1:QPush2}\\
			}			
			\If{$|S_I|==k$ and $F(S_I)>F(S_b)$}{ \label{Algo1:if}
				$S\gets S_I$;
				$F(S_b)\gets F(S_I)$\\
				$flagFS \gets true$; 
				break;
			}	\label{Algo1:end}
		}
	}
	\vspace{-1mm}	
\end{algorithm}
\normalsize

\textbf{Time Complexity.} \jianxin{Time complexity of \texttt{Exact} is $O(^nC_k)$, as in worst case the \texttt{Exact} needs to check all combinations of $k$ from $n$ number of locations. 
	However, in practice, the actual running time is much less as large number of locations can be pruned using the developed pruning strategies.} 


\textbf{Steps of \texttt{Exact} with an example.} We use the example in Figure~\ref{fig:exmplFirst} to demonstrate the steps of \texttt{Exact} algorithm. First, we will show the steps to compute the socio-spatial relevance score of a location (say, $p_6$), and socio-spatial diversity of a location pair (say, $\{p_6,p_2\}$) of $u$ using the check-in information available in Figure~\ref{fig:exmplFirst}. 

The user $u$ has seven friends, among them three friends checked-in the location $p_6$ that results social relevance score $S_{sc}(p_6,u) = \frac{3}{7} = 0.43$. Now, we will calculate the spatial diversity score $D_{sp}(p_6, p_2)$. The locations $l_6$ and $l_2$ are checked-in by $u$'s friends $\{e,f,g\}$ and $\{a,c,g\}$ respectively. Therefore, the spatial diversity score $S_{sp}(p_6, p_2)$ is calculated as, $D_{sp}(p_6, p_2) =  1 - \frac{|\{e,f,g\}\cap \{a,c,g\}|}{|\{e,f,g\}\cup \{a,c,g\}|} = 1 - \frac{1}{5} = 0.80$. The calculated social diversity and social relevance of $u$'s locations are shown in Figure~\ref{fig:relScoreExample}.
\begin{figure}[htbp]
	\vspace{-3mm}
	\centering	
	\includegraphics[scale=0.38]{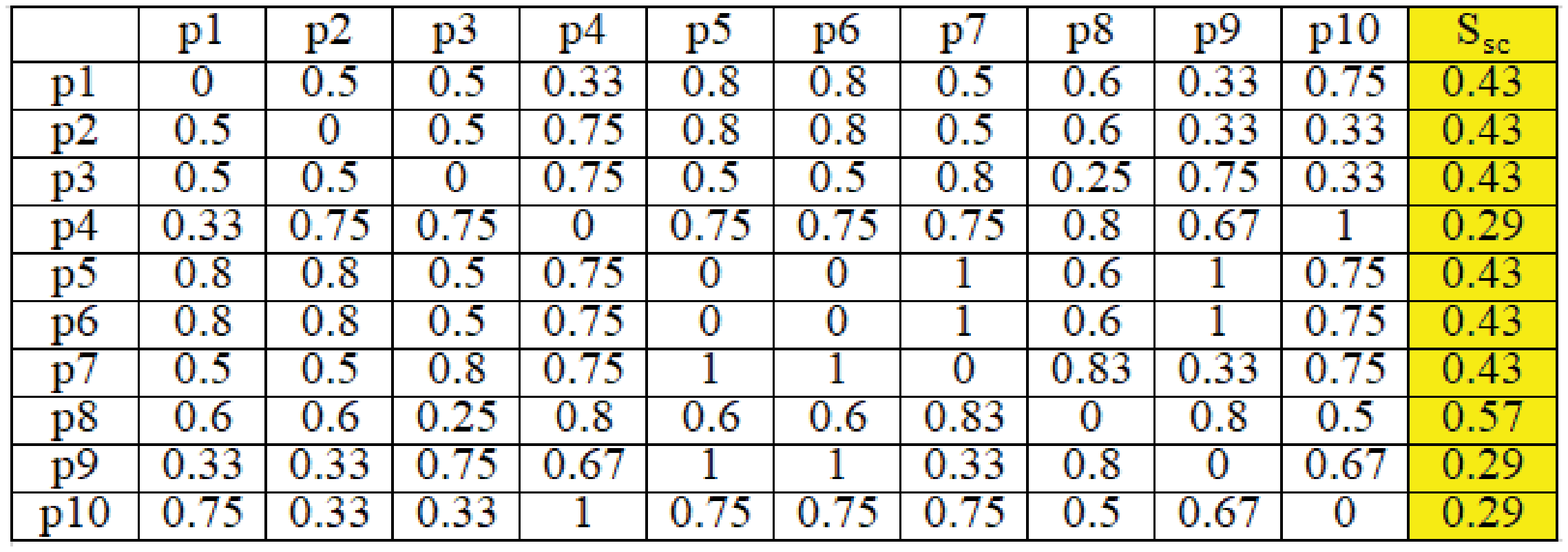}
	\caption{\jianxin{Social Diversity and Social Relevance Scores of $u$'s locations (refer Figure~\ref{fig:exmplFirst})}}
	\label{fig:relScoreExample}
\end{figure}

Now, to calculate the spatial relevance score of the location of $p_6\in L_u$ of $u$, we need to calculate $d_m = \max\{\min_{v\in V_u}dist(l_i, L_v)\}$ as the maximum value among the smallest distances between the location $p_6\in L_u$ and the location set $L_v$ of each friend $v \in V_u$, e.g., $v = \{a,b,c,e,f,g,h\}$. We will demonstrate first to calculate the value $\min_{v\in V_u}dist(l_i, L_v)$ using the check-in information of one friend $a\in V_u$ as reference. The check-ins of friend $a\in L_u$ are $\{p_1, p_2, p_4, p_7, p_9\}$. Therefore, among the location set $\{p_1, p_2, p_4, p_7, p_9\}$, we get $dist\{p_6,p_7\} =3.5$ as the minimum spatial distance among the location $p_6$ and the check-ins by the friend $a\in L_u$ (see Figure~\ref{fig:exmplFirst} for the relative distances between the points). Following this, we get $d_m = \max\{\min_{v\in V_u}dist(l_i, L_v)\} = \max\{3.5,3.5,3.5,0,0,0,9.5\} = 9.5$ for the location $p_6$ where, $V_u = \{a,b,c,e,f,g,h\}$. Therefore, we calculate the spatial relevance score of location $p_6$ as, $S_{sp}(p_6, u) = 1 - \frac{\sum_{v\in V_u}\min dist(l_i, L_v)} {d_m * |V_u|} = 1 - \frac{(3.5 + 3.5 + 3.5 + 0 + 0 + 0 + 9.5)} {9.5 * 7} = 0.699$. Now, we will calculate the social diversity between the locations $p_6$ and $p_2$ as an example. Among the locations of $u$, we get $maxD = 15$ as the maximum distance among the location pairs checked-in by user $u$ (see Figure~\ref{fig:exmplFirst} where distance between the pair $(p_6, p_5)$ is maximum as $dist(p_6, p_5) = 15$). Therefore, we calculate the spatial diversity between the location pair $(p_6, p_2)$ as $D_{sp}(p_6, p_2) = \frac{4}{15} = 0.27$. The spatial diversity and the spatial relevance scores of the locations are shown in Figure~\ref{fig:divScoreExample}.
\begin{figure}[htbp]
	\vspace{-3mm}
	\centering	
	\includegraphics[scale=0.38]{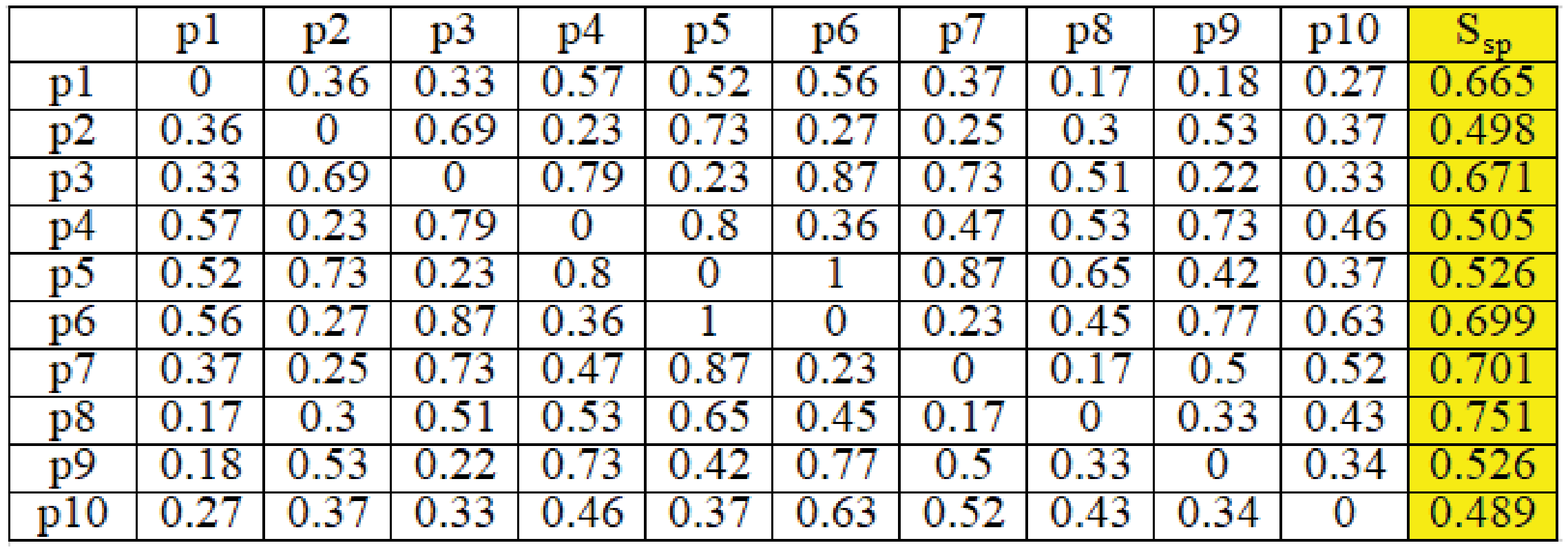}
	\caption{\jianxin{Spatial Diversity and Spatial Relevance Scores of $u$'s locations (refer Figure~\ref{fig:exmplFirst})}}
	\label{fig:divScoreExample}
\end{figure}

Now, we will calculate the socio-spatial relevance score and socio-spatial diversity of the locations considering equal weight in social and spatial factors, e.g., $\alpha = 0.5$. Therefore, we calculate the socio-spatial relevance score $R_{ss}(p_6, u)$ of $p_6$ as $R_{ss}(p_6, u) = 0.5*0.43 + 0.5*0.699 = 0.564$. Similarly, the socio-spatial diversity $D_{ss}(p_6,p_2)$ is calculated as $D_{ss}(p_6, p_2) = (0.5*0.80 +0.5*0.27) = 0.53$. The Table in Figure \ref{fig:scoreExample} shows the socio-spatial relevance scores ($R_{ss}$) and diversity of $u$'s locations calculated using $\alpha=0.5$. Note, we only need to  pre-compute the socio-spatial relevance $R_{ss}$ of the locations. We consider $\omega=0.5$ in the \textit{SSLS} query.
\begin{figure}[htbp]
	\vspace{-3mm}
	\centering	
	\includegraphics[scale=0.38]{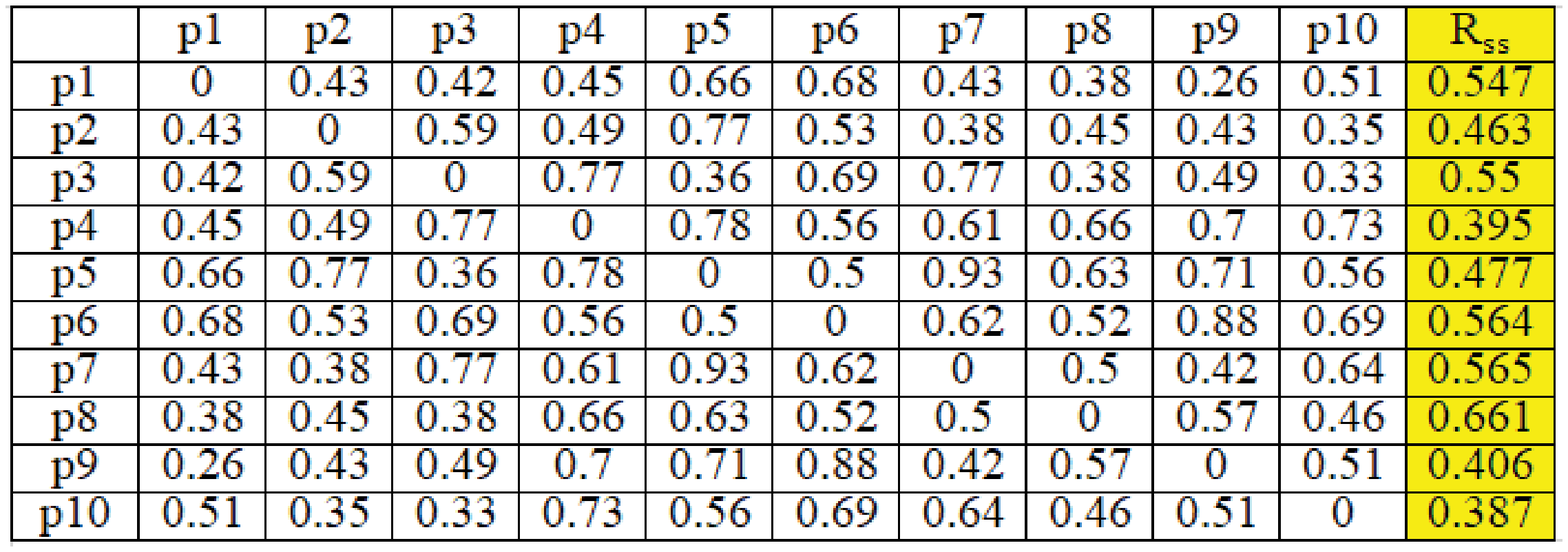}
	\caption{\jianxin{Socio-spatial Diversity and Socio-spatial Relevance Scores of $u$'s locations (refer Figure~\ref{fig:exmplFirst})}}
	\label{fig:scoreExample}
\end{figure}
\jianxin{Figure~\ref{fig:nodeExploration} illustrates the node exploration  towards searching for the top-$2$ \textit{SSLS} locations. Each state (node in tree) is marked with a number denoting the node exploration sequence. 
	A priority queue, $Q$ is initialized with $S_I = \emptyset$ and $S_R = \{p_8, p_7, p_6, p_3, p_1, p_5, p_2, p_9, p_4, p_{10}\}$, where $S_R$ contains $u$'s locations in non-increasing order of relevance scores ($R_{ss}$). The entries $(\emptyset, \{p_7, p_6, ..., p_4, p_{10}\}, 0)$ (Algorithm~\ref{Algo:top-k-ssls-basic-new}, Line~\ref{Algo1:QPush1}) and $(p_8, \{p_7, p_6, ..., p_4, p_{10}\}, 0.331)$ (Line~\ref{Algo1:QPush2}) are pushed to $Q$ for further exploration. Next, $(\{p_8\}, \{p_7, p_6, ..., p_4, p_{10}\}, 0.331)$ is dequeued from $Q$. We begin exploring from $p_8$, and $S_I$ becomes $\{p_8, p_7\}$ (step $2$). In the meantime, $(\{p_8\}, \{p_6, ..., p_4, p_{10}\}, 0.331)$ is pushed to $Q$ (Line~\ref{Algo1:QPush1}), 
	and we get the first feasible solution $S_b = \{p_8, p_7\}$ with $F(S_b) =$ $0.5*(0.661+0.565)+0.5*(0.5+0.5) = 1.113$. Continuing the process (till step $10$), the best feasible set $S_b = \{p_8, p_5\}$ with $F(S_b) = 1.199$ is obtained in this branch.}

\jianxin{Further, we dequeue $(\emptyset, \{p_7, p_6, ..., p_4, p_{10}\}, 0)$ and explore the branch with node $p_7$ (Step $11$). After processing lines \ref{Algo1:appendSI} and \ref{Algo1:QPush1} of Algorithm~\ref{Algo:top-k-ssls-basic-new}, we check pruning condition at Line~\ref{Algo1:pruneT} using Property~\ref{lemma:advPruning}, where $\widehat{D}{\Downarrow} = \frac{1.199 - 0.5*(0.565+0.564)}{0.5} - 0.93 = 0.339$ is computed using Equation~\ref{equ:lowerBoundTerminationOld}. As, $\widehat{D} > 0.339$ is true w.r.t. each location in current $S_R = \{p_7, p_6, ..., p_4, p_{10}\}$, we continue exploring the branch with node $p_7$ and update the best feasible set as $S_b = \{p_7, p_5\}$ with $F(S_b) = 1.451$. In the next iteration while exploring node $p_6$ (step20), we calculate $\widehat{D}{\Downarrow} = \frac{1.451 - 0.5*(0.564+0.55)}{0.5} - 0.88 = 0.908$ w.r.t. $S_I = \{p_6\}$ and $S_b = \{p_7, p_5\}$. All the locations in $S_R$ satisfy Property~\ref{lemma:advPruning}, therefore, we terminate processing $S_I=\{p_6\}$. By exploring the remaining branches, we obtain $S=\{p_7,p_5\}$ as the top-$2$ \textit{SSLS} set for the query user $u$.}	
\begin{figure}[htbp]
	\vspace{-3mm}
	\centering	
	\includegraphics[scale=0.65]{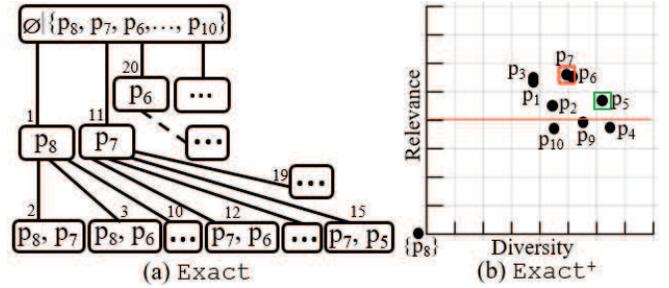}
	\caption{\jianxin{Node exploration steps of \texttt{Exact} and \texttt{Exact$^+$}}}
	\label{fig:nodeExploration}
	\vspace{-3mm}
\end{figure}

\section{An Approximate Approach}
\label{sec:Approximate}
\jianxin{One major limitation of \texttt{Exact} approach is the high computational cost, which makes it unrealistic for a large number of candidate locations. To validate the location pruning in \texttt{Exact}, it is required to calculate the updated diversity $\widehat{D}$ of the current intermediate set $S_I$ w.r.t. each location $l'\in S_R$\ignore{ using Equation~\ref{equ:D_hat}}. Calculating $\widehat{D}$ for each intermediate set is expensive when the size of $S_R$ is large.}


Therefore, to improve pruning and advanced termination, we derive relaxed bounds on diversity. We first define the maximum possible score of $\widehat{D}$ for an intermediate set $S_I$ when an eligible location $l'\in S_R$ is added to $S_I$. 
Lemma \ref{lemma:del_d} deduces $\widehat{D} \le D_{ss}(S_I)$ is true for any intermediate set $S_I$. 
Therefore, the maximum possible value of $\widehat{D}$ can be obtained as,
\begin{equation}
\small
\label{equ:max}
\widehat{D}_{max} = max(\widehat{D}) = D_{ss}(S_I)
\vspace{-2mm}
\end{equation}
\normalsize 
Note that $\widehat{D}$ is dependent on $S_R$, and $\widehat{D}_{max} = D_{ss}(S_I)$ is true w.r.t. $l'\in S_R$ only when $\forall l\in S_I, D_{ss}(l, S_I\setminus l) < D_{ss}(l',l)$ strictly holds. 
Therefore, to make an efficient approximate approach, we design pruning and termination in the below subsection using lower bound on $\widehat{d}$. We consider $\widehat{D} = \widehat{D}_{max} = D_{ss}(S_I)$ always true w.r.t. each location $l'\in S_R$. 

\textbf{Computing bounds on diversity of locations.}
Let us consider $S_I$ be an intermediate set of size $(k-1)$, $l'\in S_R$ be an \textit{eligible} location, and $S_b$ be the best feasible set identified already. For an arbitrary \textit{eligible} location $l'\in S_R$ that can be added to $S_I$, we continue to derive Equation \ref{equ:earlyTermExact}, 
\begin{displaymath}
\small
\begin{split}
& \hspace{0.5cm} \widehat{D} > \frac{1}{1 - \omega}\cdot \big(F(S_b) - \omega\cdot (R_{ss}(S_I) + \delta_r)\big) - \widehat{d}\\
& \Rightarrow \widehat{D} > \frac{1}{1 - \omega}\cdot \big(F(S_b) - \omega\cdot (R_{ss}(S_I) + \delta_r) - (1 - \omega)\cdot D_{ss}(S_I) \\
& \hspace{5cm} + (1 - \omega)\cdot D_{ss}(S_I)\big) - \widehat{d}\\
& \Rightarrow \widehat{D} > \frac{1}{1 - \omega}\cdot \big(F(S_b) - (\omega\cdot R_{ss}(S_I) + (1 - \omega)\cdot D_{ss}(S_I)) \\ & \hspace{4cm}  + (1 - \omega)\cdot D_{ss}(S_I) - \omega\cdot \delta_r \big) - \widehat{d}\\
& \Rightarrow \frac{F(S_b) - F(S_I)}{1 - \omega} + D_{ss}(S_I) - \widehat{D} - \frac{\omega}{1-\omega}\cdot \delta_r <  \widehat{d}
\end{split}
\end{displaymath}
\normalsize
We will first derive a relaxed bound for termination using the above equation. Therefore, we replace the upper bound of $\widehat{D}$ with its maximum possible value, e.g., $\widehat{D}_{max} = D_{ss}(S_I)$,    
\begin{displaymath}
\small
\begin{split}
& \Rightarrow \frac{F(S_b) - F(S_I)}{1 - \omega} + D_{ss}(S_I) - \widehat{D}_{max} -  \frac{\omega}{1-\omega}\cdot \delta_{r} < \widehat{d} \\
& \Rightarrow \frac{F(S_b) - F(S_I)}{1 - \omega} < \widehat{d} + \frac{\omega}{1-\omega}\cdot \delta_{r} \hspace{0.12cm} (\text{\normalsize putting \small } \widehat{D}_{max} = D_{ss}(S_I))\\
& \Rightarrow F(S_b) < F(S_I) + (1-\omega)\cdot \widehat{d} + \omega\cdot \delta_r
\end{split}
\end{displaymath}
\normalsize
Now, we will  generalize the above condition for any intermediate set $S_I$ of size $|S_I| < k$. So, we need to add an arbitrary subset $S'_R\subseteq S_R$ to $S_I$ such that $|S'_R| = (k-|S_I|)$. Hence, 
\begin{displaymath}
\small
\begin{split}
& F(S_b) < F(S_I) + (k- |S_I|)\cdot ((1-\omega).\widehat{d} + \omega\cdot \delta_r)\\
& \Rightarrow \frac{F(S_b) - F(S_I) - \omega\cdot(k- |S_I|)\cdot \delta_r}{(1-\omega)\cdot(k- |S_I|)}  < \widehat{d}\\
\end{split}
\end{displaymath}
\normalsize

Next, we will derive the lower bound $\widehat{d}{\downarrow}$ by replacing the expression $(k- |S_I|)\cdot\delta_{r}$ with the total socio-spatial relevance score of top $(k- |S_I|)$ relevant locations from $S_R$. 
We calculate the total socio-spatial relevance score of the top $(k- |S_I|)$ locations as $\max_{l' \in S_R}\sum_{k-|S_I|}R_{ss}(l')$. Hence, we get the lower bound of $\widehat{d}$ as follows,
\begin{equation}
\small
\label{equ:lowerBoundDelDashD}
\begin{split}
& \frac{F(S_b) - F(S_I) - \omega\cdot \max_{l' \in S_R}\sum_{k-|S_I|}R_{ss}(l')}{(1 - \omega)\cdot (k - |S_I|)}  = \widehat{d}{\downarrow}
\end{split}
\end{equation} 
\normalsize

\textbf{Pruning and Termination Rules.} We terminate processing an intermediate set $S_I$ when $\forall l'\in S_R$, $\widehat{d} \le \widehat{d}{\downarrow}$ is true. This is because, there exists no location in $S_R$ that can form a better set containing $S_I$ than the best feasible set $S_b$. Otherwise, we need to prune the particular locations $l' \in S_R$ that satisfy $D_{ss}(l', S_I)\le \widehat{d}{\downarrow}$.
As we consider, $\widehat{D}_{max} = D_{ss}(S_I)$ is always true for an intermediate set $S_I$ regardless of $S_R$, the derived lower bound $\widehat{d}{\downarrow}$ may produce the answer set to miss some eligible locations. Nevertheless, the approach achieves high efficiency with the sacrifice of a certain precision. 

\textbf{Algorithm.} 
For our Approximate (\textit{AP}) solution, we modify the \texttt{Exact} algorithm to introduce the advanced termination and pruning as described above. Here, we only need to replace the $pruneT$ methods at Line \ref{Algo:else2} in Algorithm~\ref{Algo:top-k-ssls-basic-new} using the above mentioned termination and pruning rules based on $\widehat{d}{\downarrow}$ when an intermediate set $S_I$ contains more than one location. \jianxin{The time complexity of \textit{AP} is similar to \texttt{Exact}, as both the algorithms execute same number of iterations in worst case.}




\jianxin{\textbf{Approximation Ratio}.
	We derive a theoretical bound on the approximation ratio of our Approximate approach (\textit{AP}). We define the ratio as the socio-spatial score of the \textit{SSLS} set returned by \texttt{Exact} algorithm divided by the score of the AP. Let's assume, $S'$ be the approximate set, and $S^*$ be the exact solution of size $k$, where locations $l'\in S_R$ are added progressively to $S'$ and $l^*\in S_R$ to $S^*$. To accelerate the Approximate approach, we had derived a relaxed lower bound $\widehat{d}{\downarrow}$ considering $\widehat{D}(S_I, l) = D_{ss}(S_I)$ is always true $\forall l\in S_R$ (refer Section V). This means, \textit{AP} will discard some \textit{eligible} locations $l \in S_E \subseteq S_R$, whose diversities (w.r.t. an intermediate set $S_I$) lie between $\widehat{d}{\downarrow} - (D_{ss}(S_I) - \widehat{D}(S_I, l)) \le D_{ss}(l, S_I) < \widehat{d}{\downarrow}$. Let, $\hat{l}^* \in S_E\subseteq S_R$ produces maximum socio-spatial score w.r.t. the  intermediate set, therefore, $\hat{l}^*$ will be part of the exact solution $S^*$. Similarly, let $\hat{l}'$ produces the maximum socio-spatial score among the locations whose diversity w.r.t. $S_I$ is more than $\widehat{d}{\downarrow}$, e.g., $D_{ss}(\hat{l}', S_I)\ge \widehat{d}{\downarrow}$. Hence, $D_{ss}(\hat{l}', S_I) > D_{ss}(\hat{l}^*, S_I)$, and $\hat{l}'$ will be part of the approximate solution. Therefore, $F(S_I\cup \hat{l}^*) > F(S_I\cup \hat{l}')$ is true, and we derive, $R_{ss}(\hat{l}^*) \ge R_{ss}(\hat{l}') + \frac{1-\omega}{\omega}\cdot (\psi_i)$, where $\psi_i = D_{ss}(\hat{l}', S_I) - D_{ss}(\hat{l}^*, S_I)>0$ is the difference in the diversity of the locations $\hat{l}'$ and $\hat{l}^*$ w.r.t. corresponding intermediate set (e.g., $S_I$).}

\jianxin{Following the above process, let's assume that we find the exact set $S^* = \{l^*_1,..., l^*_k\}$ and the approximate solution $S = \{l'_1,..., l'_k\}$ arranged in decreasing order of relevance score, and $\psi_k = D_{ss}(l_k', S') - D_{ss}(l_k^*, S^*)$. As we progressively add the locations, each time the lower bound $\widehat{d}{\downarrow}$ gets update. We assign $\widetilde{d} = \min \{\widehat{d}{\downarrow}\}$ as the minimum score among the lower bounds $\widehat{d}{\downarrow}$ we derived at each step. Therefore, the lowest total diversity of $S'$ will be $D_{ss} = k\cdot \widetilde{d}$, and the socio-spatial score of the lowest scoring approximate set is $F(S') = \omega\cdot \sum_{l'_i\in S'}R_{ss}(l'_i) + (1-\omega)\cdot k\cdot \widetilde{d}$. 
	Similarly, for the exact set, we calculate the best total diversity score as $k\cdot (\widetilde{d}-\epsilon)$, where the diversity of each location is slightly smaller by $\epsilon$ than $\widetilde{d}$. Also, we calculate the best total relevance score $R_{ss}(S^*) = \sum_{l^*_i\in S^*} R_{ss}(l_i^*) = \sum_{l'\in S'}R_{ss}(l') + \frac{1-\omega}{\omega}\cdot (\sum_{k}\psi_k)$. Let $\psi = \frac{1}{k}\cdot\sum_{k}\psi_k$, therefore, $F(S^*) = \omega\cdot (\sum_{l'\in S'}R_{ss}(l') + \frac{(1-\omega)}{\omega}\cdot k\cdot \psi) + (1-\omega)\cdot k\cdot (\widetilde{d}-\epsilon) = \omega\cdot \sum_{l'\in S'}R_{ss}(l') + (1-\omega)\cdot(k\cdot \psi) + (1-\omega)\cdot k\cdot (\widetilde{d}-\epsilon) = \omega\cdot \sum_{l'\in S'}R_{ss}(l') + (1-\omega)\cdot k \cdot (\widetilde{d} +\psi -\epsilon)$.}
\jianxin{Hence, the approximation ratio will be bounded by:
	\begin{displaymath}
	\small
	\frac{F(S^*)}{F(S')} = \frac{\omega\cdot \sum_{l'\in S'}R_{ss}(l') + (1-\omega)\cdot k \cdot (\widetilde{d} +\psi -\epsilon)}{\omega\cdot \sum_{l'\in S'}R_{ss}(l') + (1-\omega)\cdot k\cdot \widetilde{d}}
	\end{displaymath}
	\normalsize
	Let us assume, $\epsilon_{A} = \frac{\psi-\epsilon}{\widetilde{d}}$, s.t., $0\le \epsilon_{A} <1$. Now, if we emphasize on higher diversity (e.g., $\omega = 0$), the approximation ratio will be $1+\epsilon_{A}$ , while, it returns $1$ when emphasize on the  relevance (e.g., $\omega = 1$).	
}

\section{A Fast Exact Algorithm}
\label{sec:ExactPlus}
\jianxin{The socio-spatial diversity of a location is dependent on the other locations in a set. The pruning strategies based on the bound derived by diversity need to re-calculate the diversity scores of the locations whenever the intermediate set gets an update. Therefore, the algorithms based on the bound derived by diversity (e.g., \texttt{Exact}) consume more time to execute. In this section, we develop an efficient exact method (\texttt{Exact+}) that considers bounds on the relevance scores of the candidate locations. Such a practice will help to search the exact results by reducing the complex diversity computation of intermediate sets (as performed in \texttt{Exact}). The key idea of \texttt{Exact+} is motivated by the following observations: (i) As the relevance score of each member in a set is independent of the other members; it will be computationally efficient to design pruning strategies on relevance scores. (ii) Lemma \ref{lemma:postvGain} suggests that a location with a relevance score more than $\frac{(1-\omega)}{\omega}\cdot |\delta_{d}|$ is eligible to be added to an intermediate set. Therefore, we can easily derive a lower bound on relevance score using the above observations to prune a large number of irrelevant locations. 
}

\subsection{Computing Bounds on Relevance} 
\label{sec:relBound}
Here, we introduce some lemmas to derive bounds for pruning locations and early termination. 
The below lemma aims to compute the maximum possible socio-spatial diversity of a set $S'_I = S_I\cup l'$ when an arbitrary location $l'\in S_R$ is added to an intermediate result set $S_I$.
\begin{lemma}[Maximum Socio-spatial Diversity of an Updated Intermediate Set]
	\label{lemma:maxDivNewSet}
	Given an intermediate set $S_I$, an arbitrary location 
	$l' \in S_R$, the maximum Socio-spatial diversity $D^{M}_{ss}$ of an updated set $S'_I = S_I \cup l'$ will be $D^{M}_{ss}(S'_I) = D_{ss}(S_I) + D_{max}$,
	where $D_{max} = \max_{l' \in S_R}D_{ss}(l', S_I)$ is the maximum diversity generated by an arbitrary location of $S_R$ w.r.t. $S_I$.
\end{lemma}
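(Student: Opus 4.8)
The plan is to reduce the claim to the decomposition of the updated diversity already established in Equation~\ref{equ:divNewSetS}, and then bound each of its two summands separately. Recall that for an arbitrary $l'\in S_R$, adding $l'$ to $S_I$ yields $D_{ss}(S'_I)=\widehat{d}+\widehat{D}$, where $\widehat{d}=D_{ss}(l',S_I)$ is the diversity of the newly inserted location w.r.t. $S_I$, and $\widehat{D}=\sum_{l\in S_I}\min\{D_{ss}(l,S_I\setminus l),D_{ss}(l,l')\}$ is the re-computed aggregate diversity of the incumbent members after the insertion. Since the target quantity $D^{M}_{ss}(S'_I)$ is a maximum over the candidate that gets inserted, it suffices to upper bound $\widehat{d}+\widehat{D}$ uniformly over $l'\in S_R$.

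First I would bound the first summand. By the definition $D_{max}=\max_{l'\in S_R}D_{ss}(l',S_I)$, for every $l'\in S_R$ we immediately have $\widehat{d}=D_{ss}(l',S_I)\le D_{max}$, with equality attained at the candidate realizing the maximum. Second, I would bound the aggregate term $\widehat{D}$ by invoking Lemma~\ref{lemma:del_d}, which gives $\widehat{D}\le D_{ss}(S_I)$. In fact this inequality holds for any $l'$ (not merely an eligible one), since each summand $\min\{D_{ss}(l,S_I\setminus l),D_{ss}(l,l')\}$ is no larger than $D_{ss}(l,S_I\setminus l)$, and summing over $l\in S_I$ yields $\widehat{D}\le\sum_{l\in S_I}D_{ss}(l,S_I\setminus l)=D_{ss}(S_I)$.

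Combining the two bounds gives $D_{ss}(S'_I)=\widehat{d}+\widehat{D}\le D_{max}+D_{ss}(S_I)$, which is exactly $D^{M}_{ss}(S'_I)=D_{ss}(S_I)+D_{max}$, establishing the claimed bound. For tightness, I would observe that the value is simultaneously achieved when the inserted location attains $\widehat{d}=D_{max}$ and, per the remark following Equation~\ref{equ:max}, satisfies $D_{ss}(l,S_I\setminus l)<D_{ss}(l,l')$ for every $l\in S_I$, so that $\widehat{D}=D_{ss}(S_I)$; hence $D^{M}_{ss}(S'_I)$ is genuinely the maximum rather than a loose over-estimate.

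I do not foresee a real obstacle here: the statement is a clean combination of one definition and one previously proved lemma. The only point requiring minor care is that the two summands are bounded by their maxima under possibly different conditions, so the stated value is an attainable supremum only when both conditions coincide. For the pruning application, however, the one-sided inequality $D_{ss}(S'_I)\le D^{M}_{ss}(S'_I)$ is all that is required, and it holds unconditionally for every $l'\in S_R$.
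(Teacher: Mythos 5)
Your proof takes essentially the same route as the paper's: decompose $D_{ss}(S'_I)=\widehat{d}+\widehat{D}$ via Equation~\ref{equ:divNewSetS}, bound $\widehat{d}\le D_{max}$ by the definition of $D_{max}$, and bound $\widehat{D}\le D_{ss}(S_I)$ via Lemma~\ref{lemma:del_d} (equivalently Equation~\ref{equ:max}). If anything, yours is slightly more careful: the paper writes $\max(\widehat{D}+\widehat{d})=\max(\widehat{D})+\max(\widehat{d})$ as an equality, whereas you correctly treat the sum of the two maxima as an upper bound and note that it is attained only when both maxima coincide at the same inserted candidate, which is all the pruning application requires.
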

\begin{proof}
	Socio-spatial diversity of $S'_I = S_I \cup l'$ is $D_{ss}(S'_I) = \widehat{d} + \widehat{D}$ (Equation \ref{equ:divNewSetS}). 
	Therefore, we get maximum socio-spatial diversity of $S'_I$ as, $D^{M}_{ss}(S'_I) = max(D_{ss}(S'_I)) = max(\widehat{D} + \widehat{d}) = max(\widehat{D}) + max(D_{ss}(l', S_I))$. Since $l'\in S_R$ is an arbitrary location, therefore, $D^{M}_{ss}(S'_I) = max(\widehat{D}) + \max_{l' \in S_R}D_{ss}(l', S_I) = max(\widehat{D}) + D_{max}$. Hence, $D^{M}_{ss}(S'_I) = D_{ss}(S_I) + D_{max}$, as $max(\widehat{D}) = D_{ss}(S_I)$ (Equation \ref{equ:max}).
\end{proof}

Now, we will derive the lower bound for the socio-spatial relevance score ($R^{\downarrow}_{ss}$). Such bound will identify the locations that can be added to the current intermediate set. Meanwhile, we label the \textit{reference} location ($l_{ref}$) that has maximum socio-spatial relevance score among the remaining locations in $S_R$.


\begin{lemma}[Lower Bound of Relevance Score]
	\label{lemma:lowerBoundRel}
	Given an intermediate set $S_I$, reference location $l_{ref}$, and the remaining location set $S_R$, the lower bound of Socio-Spatial Relevance Score is $R^\downarrow_{ss} = R_{ss}(l_{ref}) + \frac{(1-\omega)}{\omega}\cdot \big ( D_{ss}(S_I\cup l_{ref}) - D_{ss}(S_I) - D_{max} \big )$, where $D_{max} = \max_{l' \in S_R}D_{ss}(l', S_I)$ is the maximum diversity of locations in $S_R$ w.r.t. $S_I$.
\end{lemma}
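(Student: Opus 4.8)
The plan is to treat the reference location $l_{ref}$—the location of maximum relevance in $S_R$—as a benchmark, and to determine the smallest relevance score that any other candidate $l'\in S_R$ must possess in order to compete with $l_{ref}$, once we grant $l'$ the most favourable diversity it could possibly contribute. Since the socio-spatial score $F$ splits additively into an $\omega$-weighted relevance term and a $(1-\omega)$-weighted diversity term, a candidate with low relevance can only stay viable by having large diversity; because the diversity contribution is bounded from above, this trade-off pins down a relevance threshold.

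First I would write the exact socio-spatial score gain obtained by inserting $l_{ref}$ into $S_I$, namely $\omega\cdot R_{ss}(l_{ref}) + (1-\omega)\cdot\big(D_{ss}(S_I\cup l_{ref}) - D_{ss}(S_I)\big)$, using the decomposition $F(S_I\cup l')=\omega\cdot R_{ss}(S_I\cup l')+(1-\omega)\cdot D_{ss}(S_I\cup l')$ together with $R_{ss}(S_I\cup l')=R_{ss}(S_I)+R_{ss}(l')$. Next, for an arbitrary $l'\in S_R$, I would upper-bound its diversity contribution: by Lemma \ref{lemma:maxDivNewSet} the diversity of $S_I\cup l'$ never exceeds $D_{ss}(S_I)+D_{max}$, so the diversity gain of any $l'$ is at most $D_{max}$. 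Hence the best-case gain of $l'$ is $\omega\cdot R_{ss}(l') + (1-\omega)\cdot D_{max}$.

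The key step is to require that this best-case gain of $l'$ be at least the actual gain of $l_{ref}$; any $l'$ failing this inequality cannot outperform $l_{ref}$ even under its most optimistic diversity, and is therefore a candidate for pruning. Writing $\omega\cdot R_{ss}(l') + (1-\omega)\cdot D_{max}\ge \omega\cdot R_{ss}(l_{ref}) + (1-\omega)\cdot\big(D_{ss}(S_I\cup l_{ref}) - D_{ss}(S_I)\big)$ and solving for $R_{ss}(l')$—dividing by $\omega$, which preserves the inequality since $\omega\in(0,1)$—yields exactly $R_{ss}(l')\ge R_{ss}(l_{ref}) + \frac{(1-\omega)}{\omega}\cdot\big(D_{ss}(S_I\cup l_{ref}) - D_{ss}(S_I) - D_{max}\big)$, identifying the right-hand side as the claimed lower bound $R^\downarrow_{ss}$.

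The main obstacle, and the place to be careful, is the asymmetry of the comparison: I bound $l'$ by its maximum attainable diversity $D_{max}$ (from Lemma \ref{lemma:maxDivNewSet}, which itself rests on $\widehat{D}\le D_{ss}(S_I)$ of Lemma \ref{lemma:del_d}), while I keep $l_{ref}$'s diversity at its exact, computable value $D_{ss}(S_I\cup l_{ref})-D_{ss}(S_I)$. I would justify that this produces a valid, conservative bound: granting every competitor its best possible diversity can only make the threshold harder to clear, so no location that could genuinely match $l_{ref}$ is excluded. I would also emphasize that $D_{max}$ is computed once per intermediate set over $S_R$, which is precisely what allows \texttt{Exact+} to sidestep the per-location recomputation of $\widehat{D}$ that burdens \texttt{Exact}.
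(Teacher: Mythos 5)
Your proposal is correct and follows essentially the same route as the paper's proof: both compare $F(S_I\cup l')$ against $F(S_I\cup l_{ref})$ (your ``gain'' formulation differs from the paper's full-score comparison only by the cancellation of the common $F(S_I)$ terms), both invoke Lemma~\ref{lemma:maxDivNewSet} to replace $D_{ss}(S_I\cup l')$ by its maximum $D_{ss}(S_I)+D_{max}$, and both solve for $R_{ss}(l')$ to obtain the threshold $R^\downarrow_{ss}$. The only difference is cosmetic ordering—you apply the diversity upper bound before the comparison, the paper applies it after—so the two arguments are algebraically identical.
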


\begin{proof}
	Suppose the \textit{reference} location $l_{ref}\in S_R$ has been added to the intermediate set $S_I$, the socio-spatial score of the updated intermediate set $S'_I = S_I\cup l_{ref}$ can be computed as,
	$F(S'_I)  = \omega\cdot R_{ss}(S_I\cup l_{ref}) + (1-\omega)\cdot D_{ss}(S'_I) 
	= \omega\cdot (R_{ss}(S_I)+R_{ss}(l_{ref})) + (1-\omega)\cdot D_{ss}(S'_I)$

	Given another location $l'\in S_R\setminus l_{ref}$ s.t. $S''_I = S_I\cup l'$, it needs to be probed only when $F(S_I\cup l')>F(S'_I)$ according to the selection criteria. Hence, we simplify the condition below.
	\vspace{-3mm}
	\begin{displaymath}	
	\small
	\label{equ:approxAppInc}
	\begin{split}
	& \omega\cdot (R_{ss}(S_I)+R_{ss}(l')) + (1-\omega)\cdot D_{ss}(S''_I) > \\ 
	& \hspace{2.5cm} \omega\cdot (R_{ss}(S_I) + R_{ss}(l_{ref})) + (1-\omega)\cdot D_{ss}(S'_I) \\
	& \Rightarrow \omega\cdot R_{ss}(l') > \omega\cdot R_{ss}(l_{ref}) + (1-\omega)\cdot \big(D_{ss}(S'_I) - D_{ss}(S''_I)\big) \\
	& \Rightarrow R_{ss}(l') > R_{ss}(l_{ref}) + \frac{(1-\omega)}{\omega}\cdot \big(D_{ss}(S'_I) - D_{ss}(S_I\cup l')\big)
	\end{split}
	\end{displaymath}
	\normalsize	
	Now, we substitute $D_{ss}(S_I\cup l')$ with its maximum value $D_{ss}(S_I) + D_{max}$ using Lemma \ref{lemma:maxDivNewSet}. Therefore, we get $R^\downarrow_{ss} = R_{ss}(l_{ref}) + \frac{(1-\omega)}{\omega}\cdot \big ( D_{ss}(S'_I) - D_{ss}(S_I) - D_{max} \big )$.
\end{proof}

\vspace{-2mm}
If $S_I$ contains single location, we compute the lower bound as $R^\downarrow_{ss} = R_{ss}(l_{ref}) + \frac{(1-\omega)}{\omega}\cdot \big ( D_{ss}(l_{ref}, S_I) - D_{max} \big )$, as $D_{ss}(S'_I) - D_{ss}(S_I) = D_{ss}(l_{ref}, S_I)$ if $|S_I| = 1$. Now, using Lemma \ref{lemma:lowerBoundRel}, we identify the potential locations that can be added to the current intermediate set. 
\begin{property}[Potential Locations]
	\label{prop:potLoc}
	A location $l\in S_R$ is a potential candidate location w.r.t. $S_I$ if $R_{ss}(l) \ge R^\downarrow_{ss}$.
\end{property}

\subsection{Advanced Termination}
\label{sec:advTerm}
The \texttt{Exact$^+$} algorithm needs to iteratively check the remaining locations until the best result set is determined.
However, it is time-consuming to process all intermediate sets and checks for the feasible set at each iteration. 
Therefore, we need to introduce some lemmas to derive early termination criteria. First, similar to Lemma \ref{lemma:maxDivNewSet}, we derive below lemma\ignore{Lemma~\ref{lemma:condition11}} on maximum possible socio-spatial diversity of an answer set.
\begin{lemma}[Maximum Socio-spatial Diversity of an Answer Set]
	\label{lemma:condition11}
	Given set $S_I$, an arbitrary subset of locations $S'_R\subseteq S_R$ of size $(k-|S_I|)$ s.t., $S' = S_I\cup S'_R$ and $S_I\cap S'_R= \emptyset$; the maximum Socio-spatial diversity $D^M_{ss}(S')$ of the set $S' = S_I\cup S'_R$ is $D^{M}_{ss}(S') = D_{ss}(S_I) + D^{Max}_{ss}$, where $D^{Max}_{ss} = \max_{l'\in S_R}(\sum_{k-|S_I|}D_{ss}(l', S_I))$ is the sum of the top $(k-|S_I|)$ socio-spatial diversity scores of the locations $l'\in S_R$ w.r.t. $S_I$.
\end{lemma}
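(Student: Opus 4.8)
The plan is to generalize the single-location argument of Lemma~\ref{lemma:maxDivNewSet} to the insertion of an entire block $S'_R$ of $(k-|S_I|)$ locations, leaning on the additive decomposition already established in Lemma~\ref{lemma:lemmaNew}. First I would invoke Lemma~\ref{lemma:lemmaNew}, which gives $D_{ss}(S_I\cup S'_R)\le \widehat{D} + \sum_{l'\in S'_R}D_{ss}(l', S_I)$, where $\widehat{D}$ is the updated diversity of $S_I$ with respect to an (arbitrary) location of $S'_R$. This already separates the contribution of the old set $S_I$ from the contributions of the newly inserted locations, so bounding each summand independently suffices.

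For the first term I would appeal to Lemma~\ref{lemma:del_d}, which states $\widehat{D}\le D_{ss}(S_I)$ for every eligible location, equivalently $\max(\widehat{D}) = D_{ss}(S_I)$ by Equation~\ref{equ:max}. Hence $\widehat{D}$ is bounded above by $D_{ss}(S_I)$ uniformly in the choice of the reference location $l'\in S'_R$, which disposes of the first summand. For the second term, since $|S'_R| = (k - |S_I|)$, the sum $\sum_{l'\in S'_R}D_{ss}(l', S_I)$ is a sum of $(k-|S_I|)$ individual diversity values drawn from $S_R$; this quantity is maximized precisely by choosing the $(k-|S_I|)$ locations of $S_R$ with the largest diversity w.r.t. $S_I$, which is exactly the definition $D^{Max}_{ss} = \max_{l'\in S_R}(\sum_{k-|S_I|}D_{ss}(l', S_I))$. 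Therefore $\sum_{l'\in S'_R}D_{ss}(l', S_I)\le D^{Max}_{ss}$, and combining the two bounds yields $D_{ss}(S')\le D_{ss}(S_I) + D^{Max}_{ss}$, establishing the claimed $D^{M}_{ss}(S') = D_{ss}(S_I) + D^{Max}_{ss}$.

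The main obstacle I anticipate is justifying that the two upper bounds may be taken independently, i.e.\ that replacing $\widehat{D}$ by its maximum $D_{ss}(S_I)$ while simultaneously replacing the insertion sum by $D^{Max}_{ss}$ does not overcount. This mirrors the ``max of a sum $\le$ sum of maxes'' step used in the proof of Lemma~\ref{lemma:maxDivNewSet}: the resulting expression is an over-estimate, which is precisely what is wanted for a \emph{maximum} used in pruning, and no single configuration need attain both maxima at once. I would therefore state explicitly that $D^{M}_{ss}(S')$ serves as a safe upper envelope on the attainable diversity rather than an exactly realized value, since that is all the subsequent termination criterion requires.
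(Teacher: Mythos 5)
Your proof is correct and follows essentially the same route as the paper's own proof: both split $D_{ss}(S')$ into the updated contribution of $S_I$ (bounded above by $D_{ss}(S_I)$, via Lemma~\ref{lemma:del_d} in your version, via the same min-over-a-larger-set argument done inline in the paper's version) plus the insertion term $\sum_{l'\in S'_R}D_{ss}(l',S_I)$, which is then maximized over choices of $S'_R$ to yield $D^{Max}_{ss}$ --- the only difference being that you cite Lemma~\ref{lemma:lemmaNew} for the decomposition while the paper re-derives it directly. Your closing caveat is also accurate and worth keeping: both arguments establish only the upper bound $D_{ss}(S')\le D_{ss}(S_I)+D^{Max}_{ss}$, so $D^{M}_{ss}(S')$ should be read as a safe upper envelope for pruning rather than a value that is necessarily attained.
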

\begin{proof}
	\vspace{-2mm}
	Proof is omitted due to space limitations.
\end{proof}

For any intermediate set $S_I$ and a feasible solution $S'$ of size $k$ containing $S_I$, s.t. $S_I\subset S'$, we derive the below lemma on maximum possible socio-spatial score of $S'$. 
\begin{lemma}[Maximum Socio-spatial Score of an Answer Set]
	\label{lemma:maxScoreUpdated1}
	Given an intermediate set $S_I$, an arbitrary subset of locations $S'_R\subseteq S_R$ of size $(k-|S_I|)$ s.t., $S' = S_I\cup S'_R$, the maximum possible socio-spatial score of $S'$ is $F_{max}(S') = F(S_I) + \omega.R^{Max}_{ss}(S'_R) + (1-\omega).D^{Max}_{ss}$,
	where, $R^{Max}_{ss}(S'_R) =  \max_{l'\in S_R}(\sum_{k-|S_I|}R_{ss}(l'))$ is the sum of top $(k-|S_I|)$ socio-spatial relevance scores of the remaining set $S_R$, s.t., $S_R\supseteq S_R'$ and $D^{Max}_{ss} = \max_{l'\in S_R}(\sum_{k-|S_I|}D_{ss}(l', S_I))$.		
	\normalsize
\end{lemma}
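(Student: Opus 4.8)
The plan is to expand $F(S')$ via its definition in Equation~\ref{equ:FS}, split the relevance term into the fixed contribution of $S_I$ and the variable contribution of the added subset $S'_R$, and then bound the relevance and diversity contributions separately by their respective maxima. The upper bound then follows by regrouping the terms that depend only on $S_I$ into $F(S_I)$.

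First I would write $F(S') = \omega\cdot R_{ss}(S') + (1-\omega)\cdot D_{ss}(S')$ and use the additivity of socio-spatial relevance, namely $R_{ss}(S') = R_{ss}(S_I) + \sum_{l'\in S'_R} R_{ss}(l')$. Because each per-location relevance $R_{ss}(l')$ is independent of the set it belongs to, the sum $\sum_{l'\in S'_R} R_{ss}(l')$ over a subset of fixed size $(k-|S_I|)$ is maximized by choosing the $(k-|S_I|)$ locations of $S_R$ with the largest relevance scores; hence $\sum_{l'\in S'_R} R_{ss}(l') \le R^{Max}_{ss}(S'_R)$. For the diversity term, I would simply invoke Lemma~\ref{lemma:condition11}, which already establishes $D_{ss}(S') \le D^{M}_{ss}(S') = D_{ss}(S_I) + D^{Max}_{ss}$.

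Combining these two bounds and using $\omega \ge 0$ and $1-\omega \ge 0$ gives
\[
F(S') \le \omega\cdot\big(R_{ss}(S_I) + R^{Max}_{ss}(S'_R)\big) + (1-\omega)\cdot\big(D_{ss}(S_I) + D^{Max}_{ss}\big),
\]
and regrouping the $S_I$ terms into $F(S_I) = \omega\cdot R_{ss}(S_I) + (1-\omega)\cdot D_{ss}(S_I)$ yields exactly $F_{max}(S') = F(S_I) + \omega\cdot R^{Max}_{ss}(S'_R) + (1-\omega)\cdot D^{Max}_{ss}$, as claimed.

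The main obstacle---bounding the set-dependent diversity $D_{ss}(S')$ of the completed set---has already been discharged by Lemma~\ref{lemma:condition11}, so this proof reduces to a clean assembly step. The only conceptual subtlety that remains is that the subset of $S'_R$ attaining the maximal relevance sum need not coincide with the subset attaining the maximal diversity sum; consequently $F_{max}(S')$ is a genuine upper bound rather than a necessarily achievable value. Since the statement only asserts the \emph{maximum possible} socio-spatial score, bounding each weighted term independently is legitimate and suffices, and I would note this explicitly to justify treating the two maxima separately.
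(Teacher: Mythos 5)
Your proposal is correct and follows essentially the same route as the paper's own proof: both expand $F(S')$ via Equation~\ref{equ:lemmaFmax}, bound the relevance of $S'_R$ by the top $(k-|S_I|)$ scores, invoke Lemma~\ref{lemma:condition11} for the diversity term, and regroup the $S_I$ contributions into $F(S_I)$. Your explicit remark that the relevance-maximizing and diversity-maximizing subsets need not coincide (so $F_{max}$ is an upper bound rather than an attained value) is a small rigor improvement over the paper's ``substitute the maxima'' phrasing, but it is not a different argument.
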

\begin{proof}
	Let an arbitrary location set $S'_R\subseteq S_R$ of size $(k - |S_I|)$ is added to $S_I$ s.t. $S' = S_I\cup S'_R$. The socio-spatial score $F(S')$ of $S'$ is, $F(S') = \omega.R_{ss}(S_I\cup S'_R) + (1-\omega).D_{ss}(S')$ 
	\begin{equation}
	\hspace{-2mm}
	\label{equ:lemmaFmax}
	\small
	\begin{split}
	\hspace{-0.5cm}& \Rightarrow F(S') = \omega\cdot R_{ss}(S_I) + \omega\cdot R_{ss}(S'_R) + (1-\omega)\cdot D_{ss}( S')
	\end{split}
	\end{equation}
	\normalsize	
	To achieve the maximum socio-spatial score $F_{max}(S')$ of $S'$, we need to replace the two unknown variables $R_{ss}(S'_R)$ and $D_{ss}( S')$ in Equation~\ref{equ:lemmaFmax} with their maximum possible scores. 
	
	As $S'_R\subseteq S_R$ is an arbitrary subset of $S_R$, the maximum possible socio-spatial relevance score $R_{ss}(S'_R)$ of $S'_R$ can be calculated as $R^{Max}_{ss}(S'_R) = \max_{l' \in S_R}\sum_{k-|S_I|}R_{ss}(l')$. Similarly, from Lemma \ref{lemma:condition11}, we get the maximum possible socio-spatial diversity of $S'$ as $D^{M}_{ss}(S') = D_{ss}(S_I) + D^{Max}_{ss}$. After substituting $R_{ss}(S'_R)$ with $R^M_{ss}(S'_R)$, and $D_{ss}(S')$ with $D^{M}_{ss}(S')$ in Equation~\ref{equ:lemmaFmax}, we get $F_{max}(S') = \omega\cdot R_{ss}(S_I) + \omega\cdot R^{Max}_{ss}(S'_R)  + (1-\omega)\cdot (D_{ss}(S_I) + D^{Max}_{ss})$. Therefore, the lemma is proved as $\omega\cdot R_{ss}(S_I) + (1-\omega)\cdot D_{ss}(S_I) = F(S_I)$.
\end{proof}
\normalsize
\begin{property}[Advanced Termination]
	\label{prop:advTerm}
	Given an intermediate set  $S_I$, a $k$-sized answer set $S'\supset S_I$, and the best feasible set $S_b$, if $F(S_b) > F_{max}(S')$, we will terminate processing $S_I$.
\end{property}

\jianxin{The \texttt{Exact$^+$} algorithm incrementally adds locations and checks for a feasible set. It prunes some locations using the lower bound on relevance score, and further terminates processing large number of intermediate sets using Property~\ref{prop:advTerm}.}

\subsection{Algorithm}
\jianxin{Algorithm~\ref{Algo:ExactPlus} summarizes the major steps of \texttt{Exact$^+$}, for processing the \textit{SSLS} query. 
	Given a socio-spatial graph $G$, query user $u$, the top-$k$ \textit{SSLS} query returns a set $S$ of size $k$. Initially, the locations of user $u$ \jianxin{are} added to $S_{Rel}$ in non-increasing order of their relevance scores and marked as unvisited. In each iteration, the unvisited locations of $S_{Rel}$ are copied to $S_R$, and the top relevant location of $S_{R}$ is added to $S_I$ (Line \ref{Algo2:updateSetPop}). Further, the advanced termination of the current intermediate set is probed using Property \ref{prop:advTerm} (Line \ref{Algo2:earlyTerm}). 
	In Line \ref{Algo2:relBound}, the lower bound on relevance score ($R^\downarrow_{ss}$) is calculated using\ignore{Equation \ref{equ:lowerBRelScore}} Lemma~\ref{lemma:lowerBoundRel}, and the potential locations ($V_P$) are identified using Property \ref{prop:potLoc} (Line \ref{Algo2:potLoc}). The intermediate set $S_I$ is updated with the location $l_{top}\in V_P$ that generates maximum socio-spatial score (Line \ref{Algo2:updateList2}). The inner loop continues until a set of $k$ locations is found, and finally, it returns the best set $S$. 
} 

\begin{algorithm}[tbh]
	\label{Algo:ExactPlus}
	\small
	\caption{SSLS: \texttt{Exact$^+$}} 
	\KwIn{Socio-spatial graph \textit{$G$}, set size $k$, query user $u$}	
	\KwOut{Location set $S$ of size $k$} 
	Initialize: $S_I\gets \emptyset$, $S\gets \emptyset$, $bestScore\gets 0$,
	
	append $\langle l, R_{ss}(l, u) \rangle$ into $S_{Rel}$ in non-increasing $R_{ss}$
	
	mark all locations of $S_{Rel}$ unvisited 
	
	\While{no unvisited location exist in $S_{Rel}$}{ \label{Algo2:while1}
		$S_R\gets unvisited(S_{Rel})$ \label{Algo2:lineUnvisited}
		
		
		
		$l\gets S_R.pop(0); S_I.append(l)$\label{Algo2:updateSetPop}
		
		\While {$|S_I| < k$ and $|S_I|+|S_R|\ge k$}{							
			\If{$advTerm(bestScore, S_I, S_R, k)$ 
			} {\label{Algo2:earlyTerm}
				break
			}
			
			$l_{ref} \gets topLocation(S_R)$ \label{Algo2:l_ref}		
			
			
			$R^\downarrow_{ss} \gets relBound(l_{ref}, S_I, S_R)$ \label{Algo2:relBound} 
			
			$V_{P} \gets potentialLocs(S_R, R^\downarrow_{ss})$ \text{*** Property \ref{prop:potLoc}} \label{Algo2:potLoc}
			
			
			$l_{top} \gets \arg\max_{l_i\in V_P}F(S_I\cup l_i)$ \label{Algo2:lTop}
			
			$S_I.append(l_{top}); S_R.remove(l_{top})$ \label{Algo2:updateList2}
			
		}
		
		\If{$|S_I|==k$}{
			\If{$F(S_I) > bestScore$}{
				$bestScore\gets F(S_I)$;
				$S\gets S_I$ 
			}	
			$S_I\gets \emptyset$ 
		}
		
		mark $l$ in $S_{Rel}$ as visited\\
		
	}\vspace{-1mm}
\end{algorithm}

\textbf{Time Complexity.} \jianxin{The worst case time complexity of \texttt{Exact$^+$} algorithm is $O(n^3k)$, where $n$ is the number of locations of a user. 
	The outer loop and inner loop take $O(n)$ and $O(k)$, respectively. The complexity of other major parts are: $advTerm$ process in $O(n^2)$, $topLocations$ selection in $O(1)$, $relBound$ computation in $O(n^2)$, $potentialLocs$ selection in $O(n)$, and $l_{top}$ selection in $O(n)$. 
}

\textbf{Steps of \texttt{Exact$^+$}.} \jianxin{We use the example in Figure~\ref{fig:exmplFirst} to demonstrate the steps of \texttt{Exact$^+$} (Algorithm~\ref{Algo:ExactPlus}) for selecting top-$2$ \textit{SSLS} set for user $u$. 
	We show the node exploration steps of the first iteration of \texttt{Exact$^+$} in Figure~\ref{fig:nodeExploration} (b). The calculated relevance and diversity scores of the locations are available in Figure~\ref{fig:scoreExample}. 
	We set the trade-off parameters as $\alpha=0.5$, $\omega=0.5$.}

\jianxin{First, we add $u$'s locations in $S_{Rel} = \{p_8, p_7, .., p_{10}\}$ in non-increasing $R_{ss}$ score. Next, the top relevant location $p_8$ (shown in left bottom corner of Figure \ref{fig:nodeExploration} (b)) is added to intermediate set $S_I$. As the termination condition is not satisfied at Line \ref{Algo2:earlyTerm}, we process to explore the remaining locations in $S_R$. First, we select the reference location as $l_{ref} = p_7$ (Line~\ref{Algo2:l_ref}) shown within red box in Figure \ref{fig:nodeExploration} (b), and the remaining locations in $S_R$ are shown as black dots. The \textit{Y} and \textit{X} axes denote the relevance scores and diversity of the locations in $S_R$ w.r.t. $p_8$, respectively. Now, the lower bound in relevance score w.r.t. $l_{ref} = p_7$ is calculated using Lemma \ref{lemma:lowerBoundRel}, e.g., $R^\downarrow_{ss} = 0.565 + \frac{(1-0.5)}{0.5}(0.5 - 0 - 0.66) = 0.405$. The horizontal line in red depicts the lower bound in relevance score. The points $\{p_7, p_6, .., p_2, p_9\}$ on or above the line are labeled as potential locations ($V_P$), and $\{p_4, p_{10}\}$ are pruned w.r.t. intermediate set $S_I = \{p_8\}$. The location $p_5$ (marked in green box) among $V_P$ produces the maximum score $F(S_I\cup p_5) = 1.199$ (Line \ref{Algo2:updateList2}). Therefore, in this iteration, we get the best set of $2$ locations as $\{p_8, p_5\}$. The process continues until no unvisited locations exist in $S_{Rel}$. We finally get $S=\{p_7, p_5\}$ as the top-$2$ \textit{SSLS} solution for $u$ with socio-spatial score $F(S) = 1.451$.
}

\textbf{Fast Approximate.} From our empirical evaluation, we find that greedily selecting the best locations using \texttt{Exact$^+$}, the results rapidly converge towards an optimal solution in the first few iterations. To make a reasonable trade-off between performance and accuracy, we consider an early termination of \texttt{Exact$^+$} after two iterations in our Fast Approximate (\textit{FA}) algorithm. \jianxin{In Figure~\ref{fig:exmplFirst}, \textit{FA} will select $S=\{p_7, p_5\}$ for user $u$ considering the first two iterations of \texttt{Exact$^+$}.}

\section{Experimental Evaluation}
\label{sec:experiment}
In this section, we present the experimental evaluation of our proposed approaches for \textit{Top-$k$ SSLS} queries: the \texttt{Exact} solution (\textit{E}); the \textit{Approximate} solution (\textit{AP}); the \texttt{Exact$^+$} solution (\textit{EP}); and the \textit{Fast Approximate} solution (\textit{FA}). We implement the algorithms using Python 3.6 on Windows environment with 3.40GHz CPU and 64GB RAM. To further \jianxin{validate}, we compared with three baselines adapted from existing works:\\
$\bullet$ \textbf{GMC \cite{51vieira2011query}.} It combines relevance and diversity, and greedily selects the elements based on their marginal contributions. Locations with the highest partial contributions will be selected.\\
$\bullet$ \textbf{Adaptive-SOS \cite{22guo2018efficient}.} 
To make the adaption of \textit{SSLS} to SOS \cite{22guo2018efficient}, denoted as \textit{AS}, we model the social similarity of a pair of locations by using their common users who checked in the location pair. Thus, an edge can be added between the two locations if the similarity is more than a threshold (e.g., 0.4).\\
$\bullet$ \textbf{GNE \cite{51vieira2011query}.} It randomly adds a location from the top ranked locations into a temporary result set. Then, it performs swaps between elements of the temporary result set and the most diverse elements of the candidate set. 
\begin{table}
	\centering
	\small
	\caption{\jianxin{Dataset Statistics}}	
	\scalebox{0.93}{
		\begin{tabular}{|p{0.42cm}|p{0.85cm}|p{1.1cm}|p{1.25cm}|p{1.1cm}|p{0.25cm}|p{0.37cm}|p{0.42cm}|\ignore{p{2.1cm}|p{1.6cm}|}}
			\hline 
			Data & Users & Edges & Checkins & Places & AC\ignore{Avg Checkins} & AF\ignore{Avg Friends} & AFC\ignore{Avg Friends with Exact checkin} \ignore{& Time Period & AvgUnqChkn\ignore{Unique-checkins}}\\ 
			\hline 
			GW  & 107,092 & 456,830 & 6,442,892 & 1,280,969 & 60 & 8.5 & 4.6 \ignore{& Feb 09 - Oct 10 & 24}\\ 
			\hline 
			BK & 51,405 & 214,078 & 4,491,143 & 772,783 & 87 & 7.7 & 3.8 \ignore{& Apr 08 - Oct 10 & 22}\\ 
			\hline
			FL & 189,537 & 2,028,873\ignore{4,057,746} & 12,592,819\ignore{46,342,145} & 4,896,634\ignore{12,254,541} & 66 & 21.4 & 0.3\ignore{& & 49} \\ 
			\hline
			YL & 270,323 & 1,913,501\ignore{3,827,002} & 5,425,778 & 192,609 & 20 & 14.2 & 10.4 \ignore{& & 16}\\ \hline 
		\end{tabular} 
	}
	\label{tab:dataset}
	\vspace{-3mm}
\end{table}
\noindent\textbf{Datasets.} \jianxin{We conduct experiments using four real-world large datasets: \textit{Gowalla} (GW), \textit{Brightkite} (BK), \textit{Flickr} (FL), and \textit{Yelp} (YL). Gowalla \cite{snapnets} and Brightkite \cite{snapnets}, each contains the social connections of the users, and the check-ins available over the period Feb. 2009 - Oct. 2010 and Apr. 2008 - Oct. 2010 respectively. 
	\textit{Flickr} data was collected using Flickr public API in 2017-18. We establish a social link between a user pair using the \textit{following} information, and consider a check-in if a user has a photo geo-tagged the location. \textit{Yelp} (collected from {https://www.yelp.com/dataset/}, Round 13, Year 2019) contains friendship network and POIs of users in the form of reviews, and location-tags in users' tips. 
	Table~\ref{tab:dataset} presents brief statistics of the four datasets, where the last three columns show the Average Check-ins (AC) by users, Average Friendships (AF), and Average number of Friends that users have at places they have Checked-in (AFC).}
In Figure~\ref{fig:friendshipDistribution}, we show the number of users have friends in the given ranges, where the x-axis labels `100', `200', `500', `1K' , `$>$1K' denote the number of friends in the ranges `10-100', `101-200', `201-500', `501-1K', and `$>$K' respectively.
\begin{figure}[htbp]
	\vspace{-5mm}
	\centering
	\subfigure[Gowalla]{\label{friendshipGW}
		\includegraphics[scale=0.19]{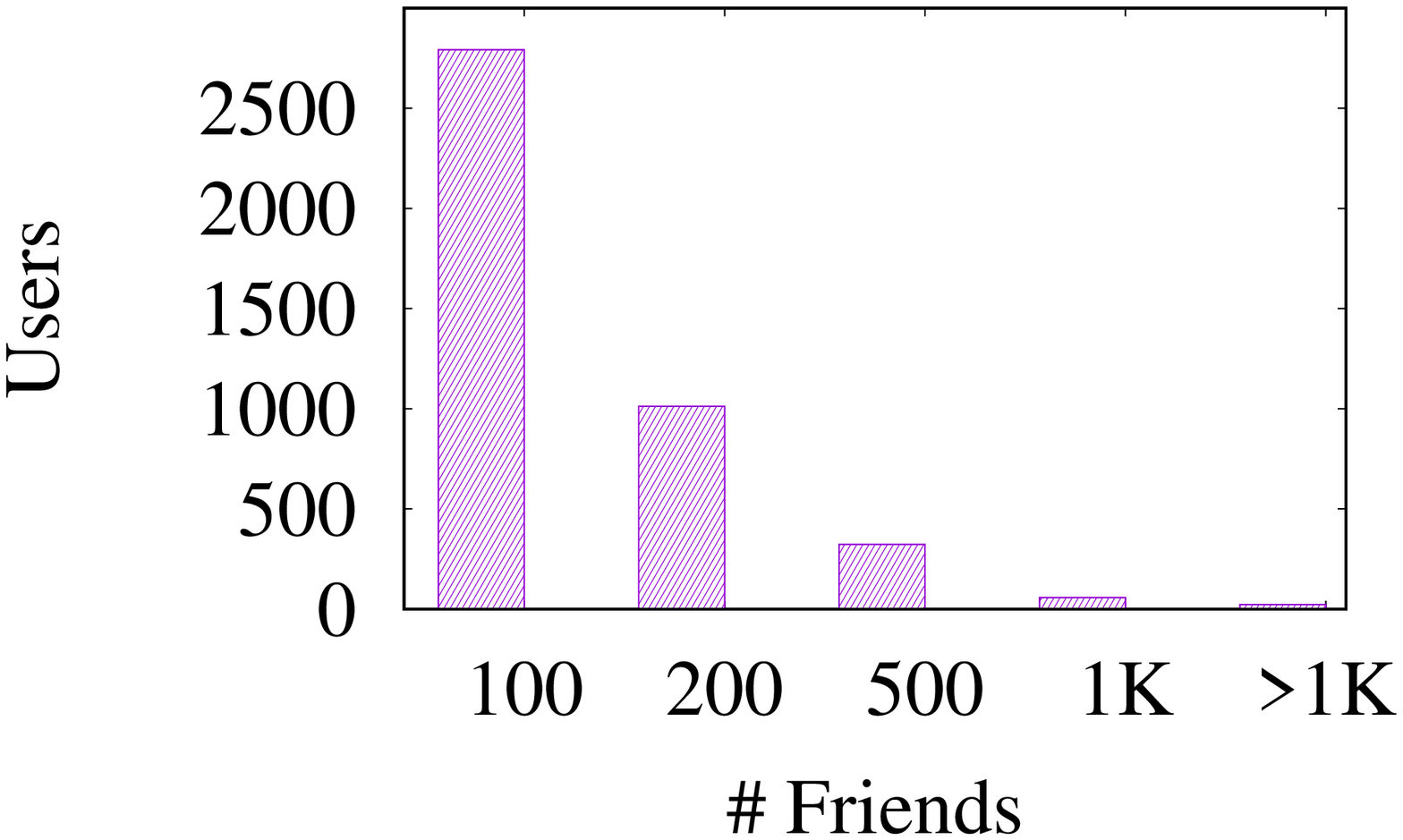}
	}
	\subfigure[Brightkite]{\label{friendshipBK}
		\includegraphics[scale=0.19]{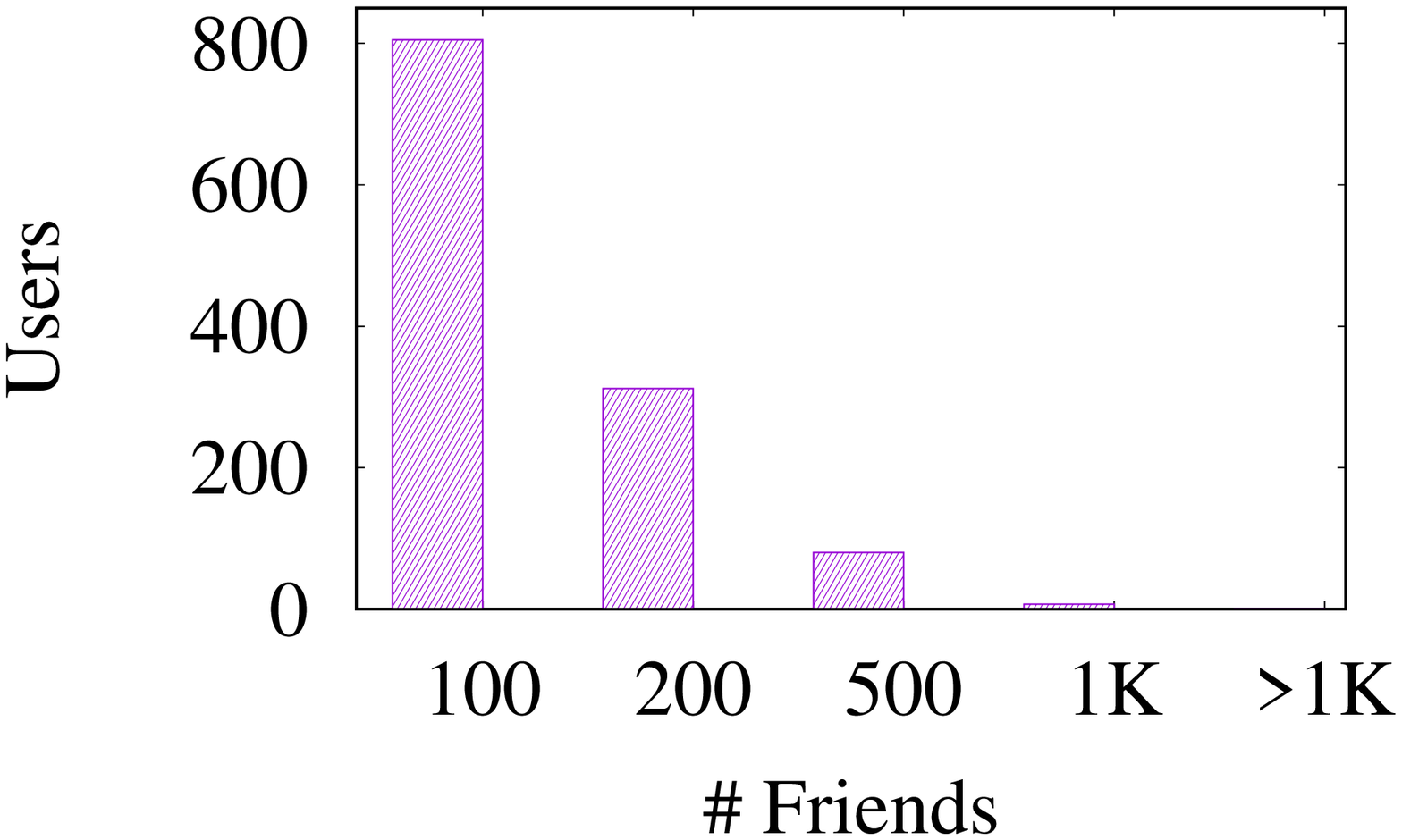}	
	}
	\subfigure[Flickr]{\label{friendshipFL}
		\includegraphics[scale=0.19]{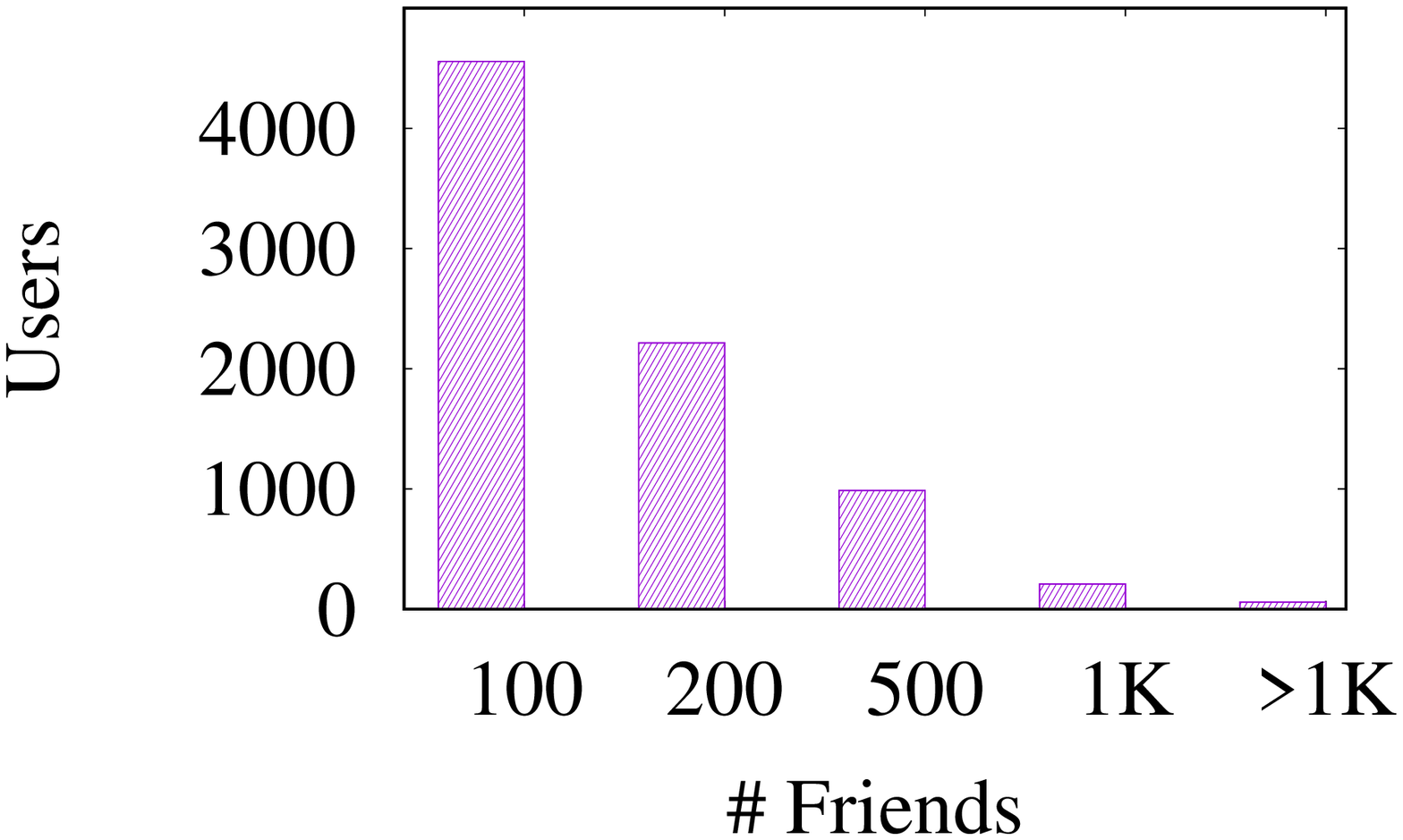}
	}
	\subfigure[Yelp]{\label{friendshipYL}
		\includegraphics[scale=0.19]{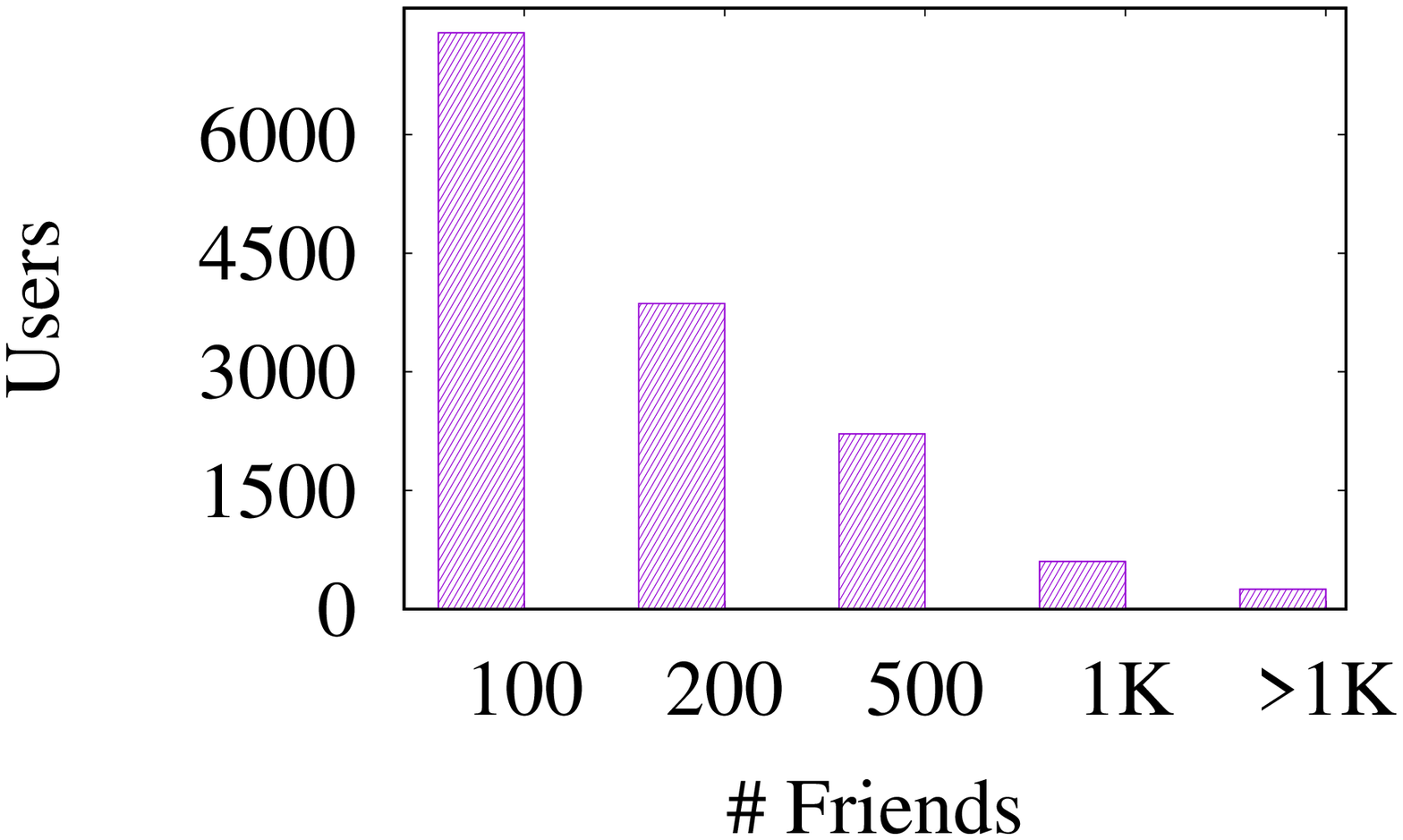}
	}
	\caption{\jianxin{Friendship Distribution}}
	\label{fig:friendshipDistribution}
	\vspace{-3mm}
\end{figure}

Figure~\ref{fig:checkinStats} shows the check-in characteristics of the users in different check-in ranges.
\begin{figure}[htbp]
	\vspace{-5mm}
	\centering
	\subfigure[Gowalla]{\label{checkinGW}
		\includegraphics[scale=0.19]{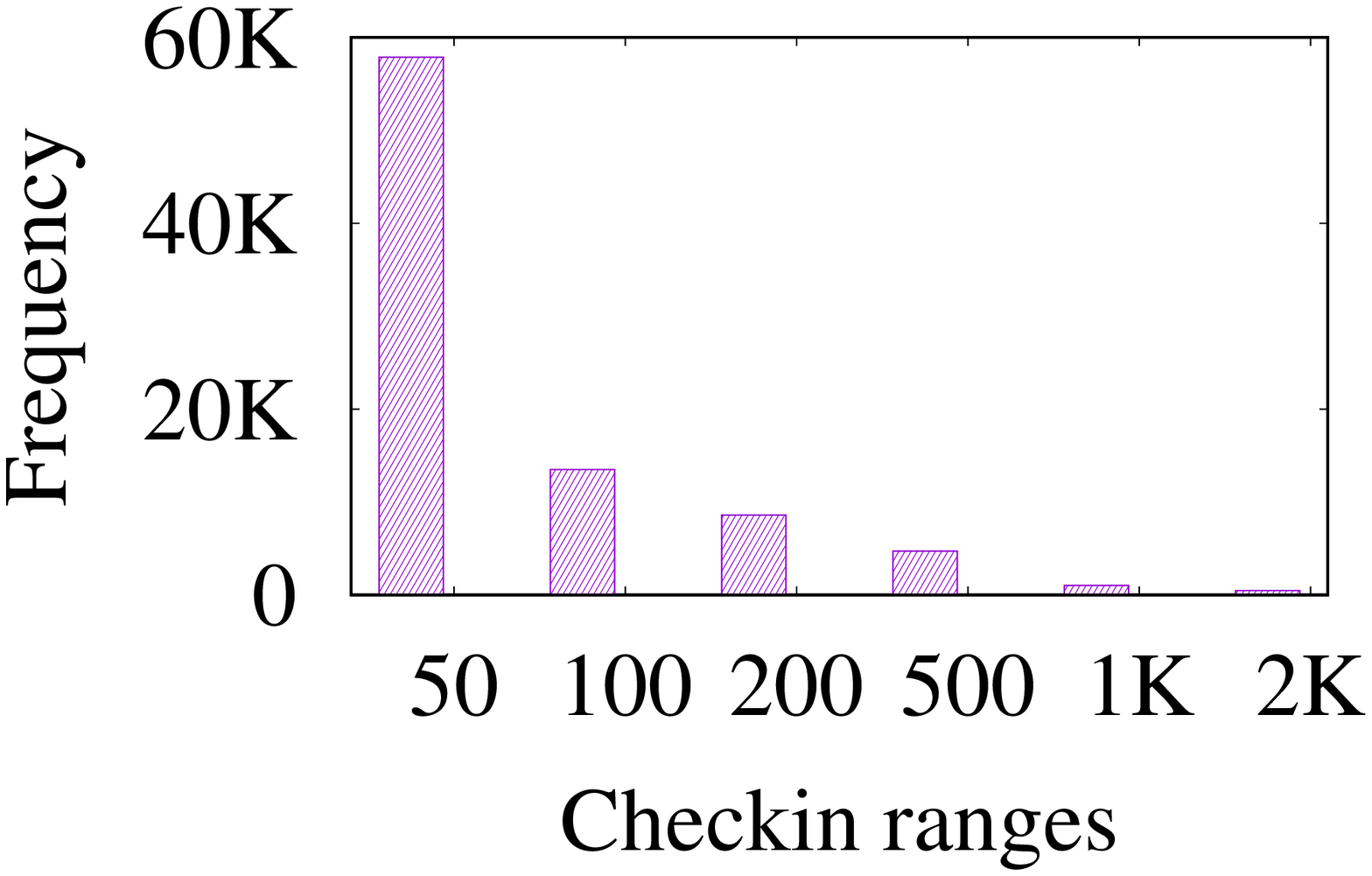}
	}
	\subfigure[Brightkite]{\label{checkinBK}
		\includegraphics[scale=0.19]{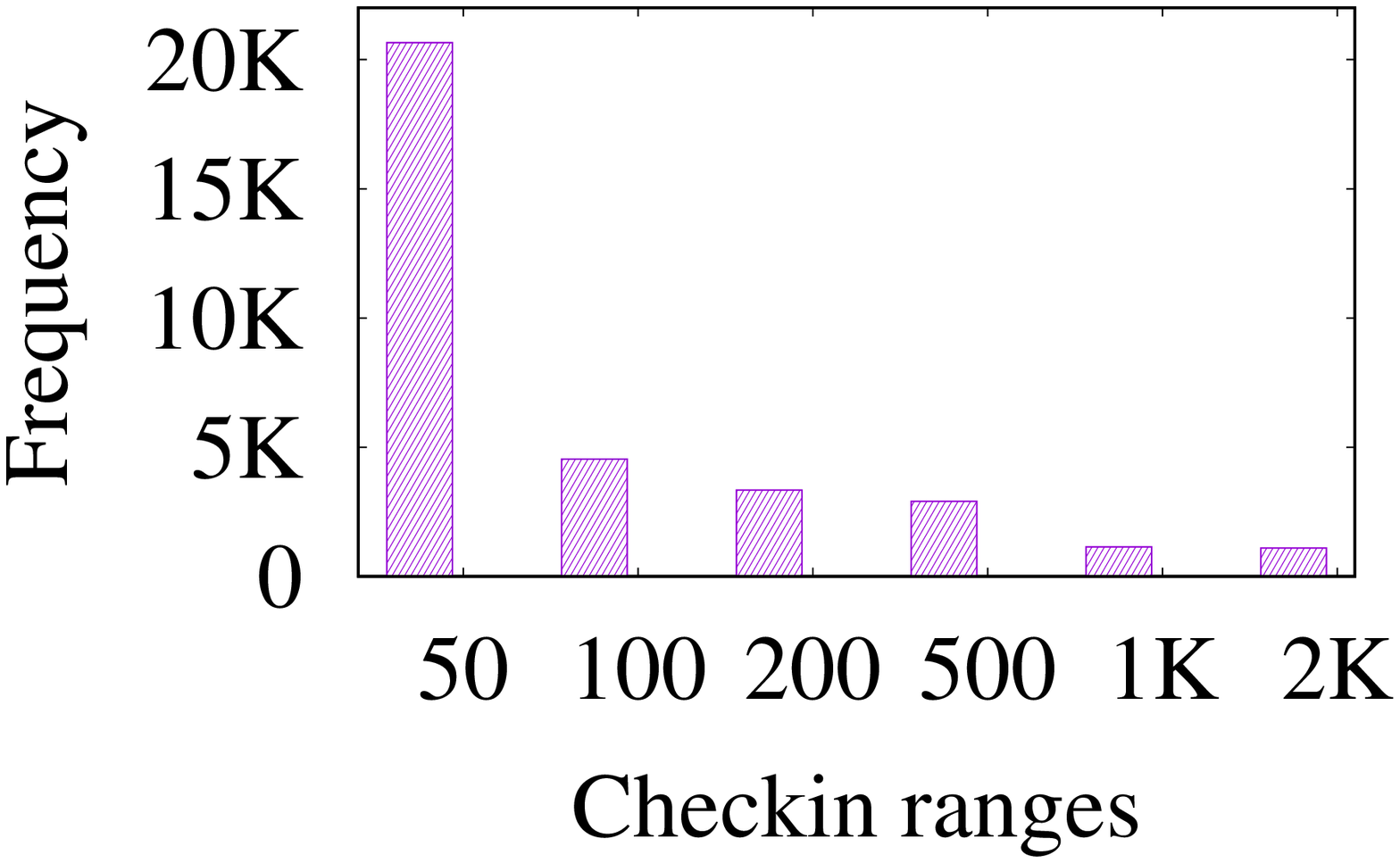}	
	}
	
	\subfigure[Flickr]{\label{checkinFL}
		\includegraphics[scale=0.19]{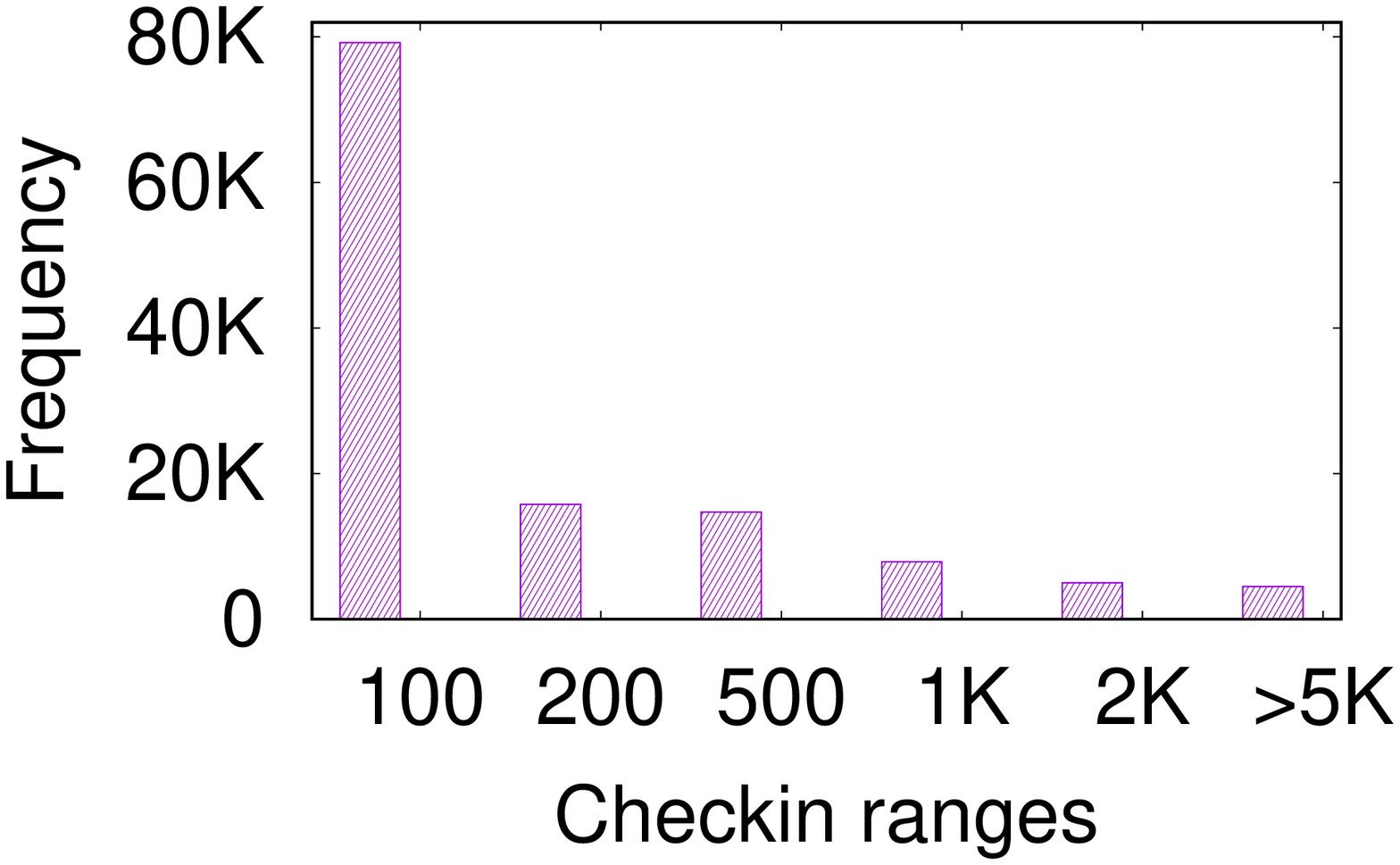}	
	}
	\subfigure[Yelp]{\label{checkinYL}
		\includegraphics[scale=0.19]{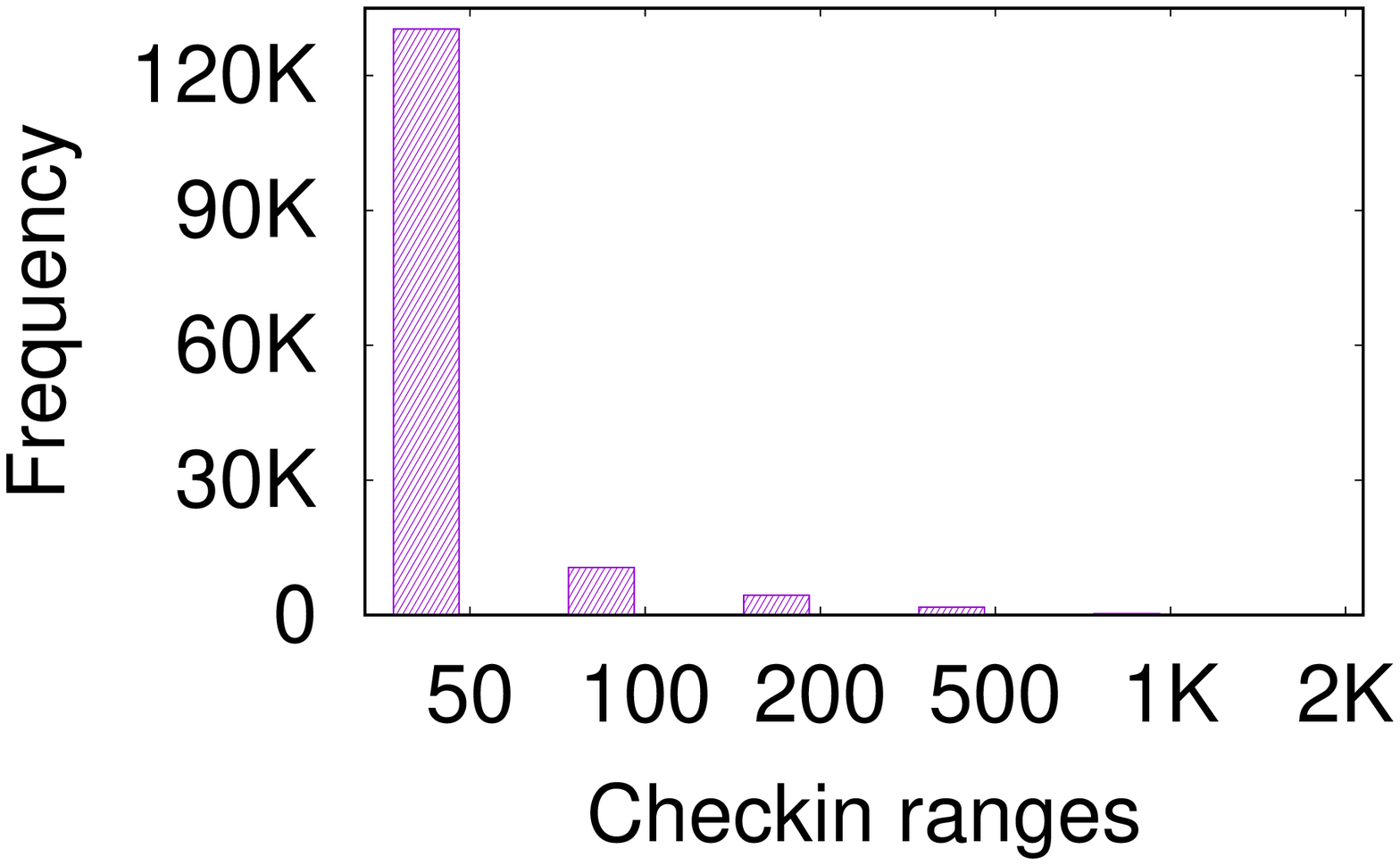}	
	}
	\caption{\jianxin{Characteristics of user check-ins }}
	\label{fig:checkinStats}
	\vspace{-3mm}
\end{figure}


\noindent \textbf{Evaluation Metrics.} 
\noindent $\underline{Precision.}$ It represents the percentage of the common elements (e.g., locations) between the result set returned by an approach and the exact results. 


\noindent $\underline{\textit{Mean of Minimum Diversity (MMD).}}$ 
Likewise, Minimization of the Mean of Shortest Distance (MMSD) \cite{36delmelle2014spatial, 37wang2012review}, we calculate the Mean of Minimum Diversity (MMD) for a query user $u$ w.r.t. neighbors' locations, i.e., $\textit{MMD}(u) = \frac{\sum_{v\in V_u}\min dist(L_v, S)}{|V_u|}$.
This metric shows how well the selected set of locations $S$ for user $u$ can cover her friends $v\in V_u$. Note, in a socio-spatial domain, $dist$ is considered as socio-spatial distance ($D_{ss}$) between two locations. 

\noindent $\underline{\textit{Social Coverage (SC).}}$ To measure the social quality of the selected set, we compute social coverage using the percentage of friends who have at least one check-in within $\theta$ kilometer (KM) from the selected set $S$, i.e., $\textit{SC}(u) = \frac{|v \in V_u \wedge dist(L_v, S) \le \theta |}{|V_u|}*100$.

\noindent $\underline{\textit{Social Entropy (SE).}}$ Given a selected set of locations $S$ of $u$, let $V_{u,l}$ be the set of $u$'s friends who visits $l\in S$. The social entropy of the set $S$ of $u$ is, $\textit{SE} = - \sum_{l \in S}p_l\log_2(p_l)$, where, $p_l = \frac{|V_{u,l}|}{\sum_{l_i \in S}|V_{u,l_i}|}$.
\textit{SE} measures the diversity of a location set w.r.t. the participation of its users across different other groups \cite{3shi2014density}. 
Here, for a selected location $l\in S\subset L_{u}$ of $u$, one of $l$'s corresponding group is considered as the friends who visited $l$ (e.g. $V_{u,l}$). 
A higher social entropy of a set suggests that the selected locations can cover more socially diverse friends.

\begin{table}[htbp]
	\vspace{-3mm}
	\centering
	\small
	\caption{Parameters and their values}
	\scalebox{0.95}{
		\begin{tabular}{|p{2.3cm}|p{3.8cm}|p{0.8cm}|}
			\hline 
			Parameter & Values & Default \\ 
			\hline 
			$\alpha, \omega$  & $(0, 1)$ & $0.5$\\ 
			\hline 
			$k$  & $2, 4, 6, 8, 10$ & $6$\\ 
			\hline
			Check-in group id & $50, 100, 200, 500, 1000$ & $100$\\ 
			\hline
		\end{tabular} 
	}
	\label{tab:param}
	\vspace{-2mm}
\end{table}

\noindent \textbf{Parameter Configuration.} 
\jianxin{Table~\ref{tab:param} presents the varied range of the parameters with default values. If there is no specific declaration, then the default values of the parameters will be used when one parameter varies.}
\jianxin{We only consider users having at least ten check-ins and at least two friends with check-in information. To see the effect of varying number of check-ins, we divide the users of each dataset into five groups based on the number of location check-ins they have. The group ids 50, 100, 200, 500, and 1000 contain the users with check-in locations in the range 10-50, 51-100, 101-200, 201-500, and 501-1000, respectively.} 

\subsection{Efficiency Evaluation}
\label{sec:Efficiency}

In this section, we compare the scalability of our proposed approaches. 

\subsubsection{\underline{Varying Answer Set Size, $k$}} 
Figure \ref{fig:VaryK} shows the average runtime of our proposed methods by varying $k$ between 2 to 10. The runtime of the algorithms follow similar trends, where \textit{E} consumes maximum time to process a query. 
On average, \textit{EP} is 2 to 3 times faster than \textit{AP}, and 3 to 6 times faster than \textit{E}. We notice, \textit{AP} performs efficiently than \textit{EP} for those users who have candidate locations with similar relevance scores, and higher diversity. 
Also, \textit{AP} is three times faster than \textit{E} and 
\textit{FA} is 9 to 15 times faster than \textit{EP} in different datasets when $k$ varies from 2 to 10. 
\begin{figure}[htbp]
	\vspace{-4mm}
	\centering
	\subfigure[Gowalla]{\label{efficiencyKG}
		\includegraphics[scale=0.22]{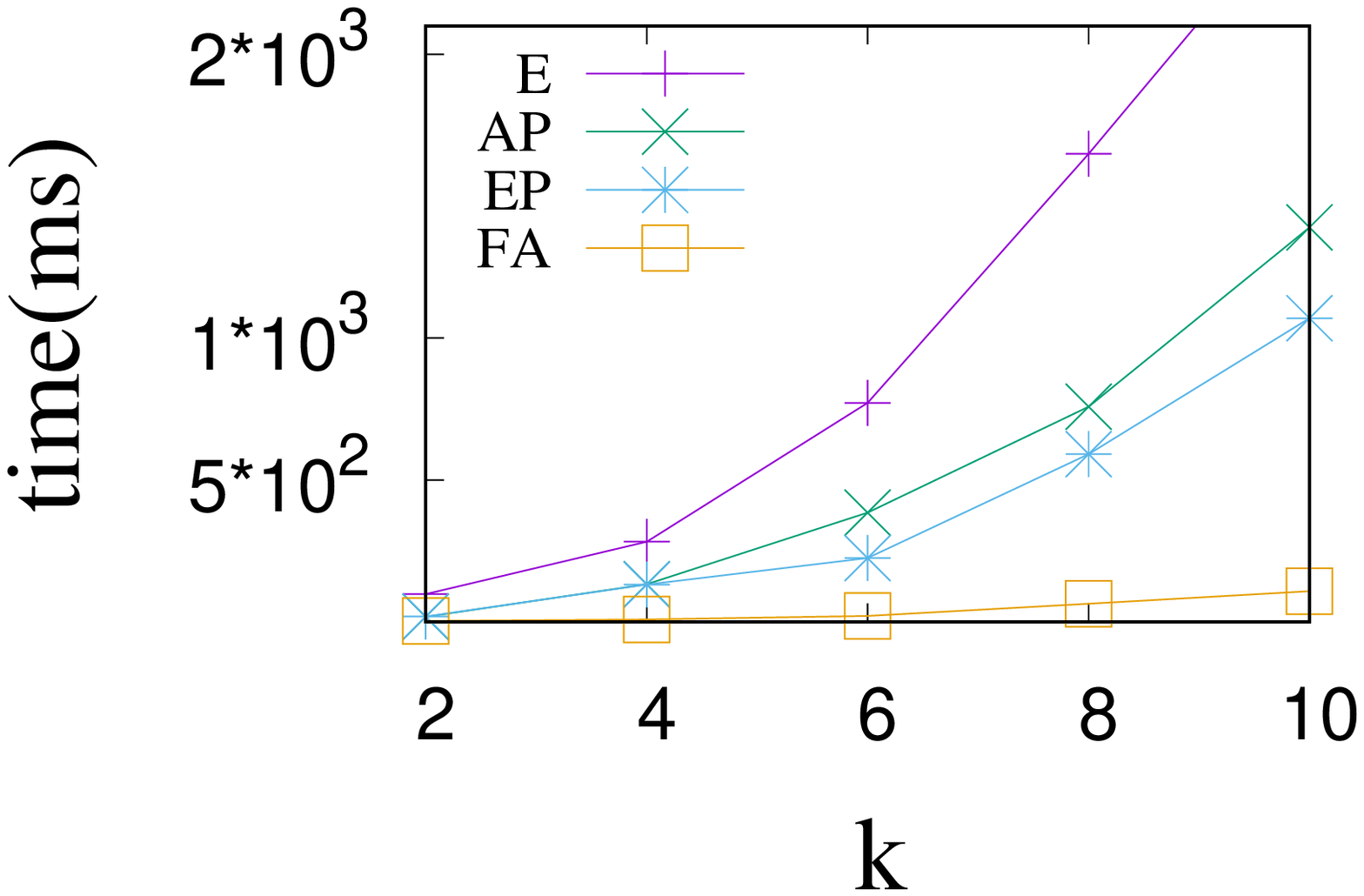}
	}
	\subfigure[Brightkite]{\label{efficiencyKB}
		\includegraphics[scale=0.22]{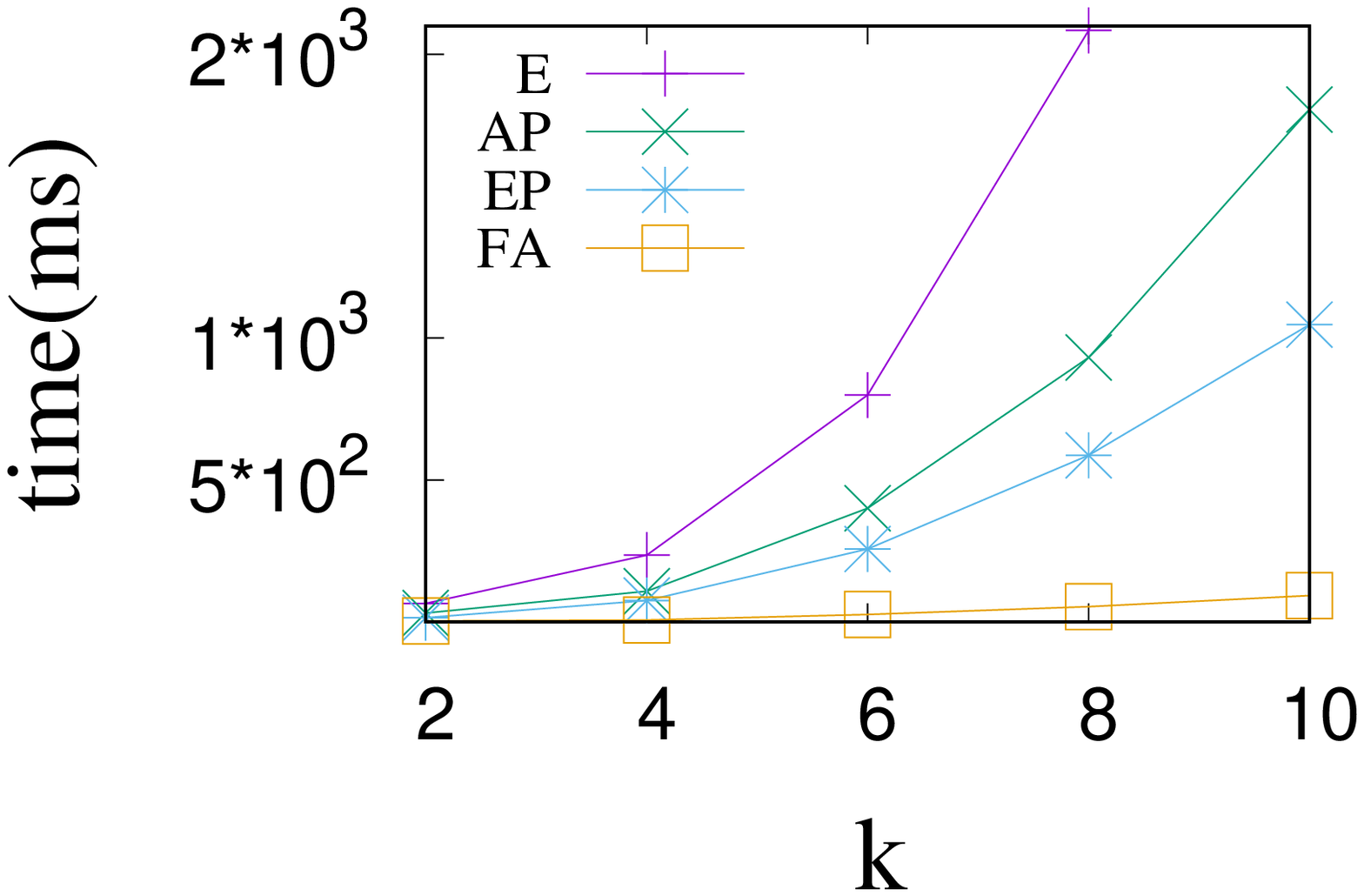}	
	}\vspace{-3.5mm}	
	\subfigure[Flickr]{\label{efficiencyFL}
		\includegraphics[scale=0.22]{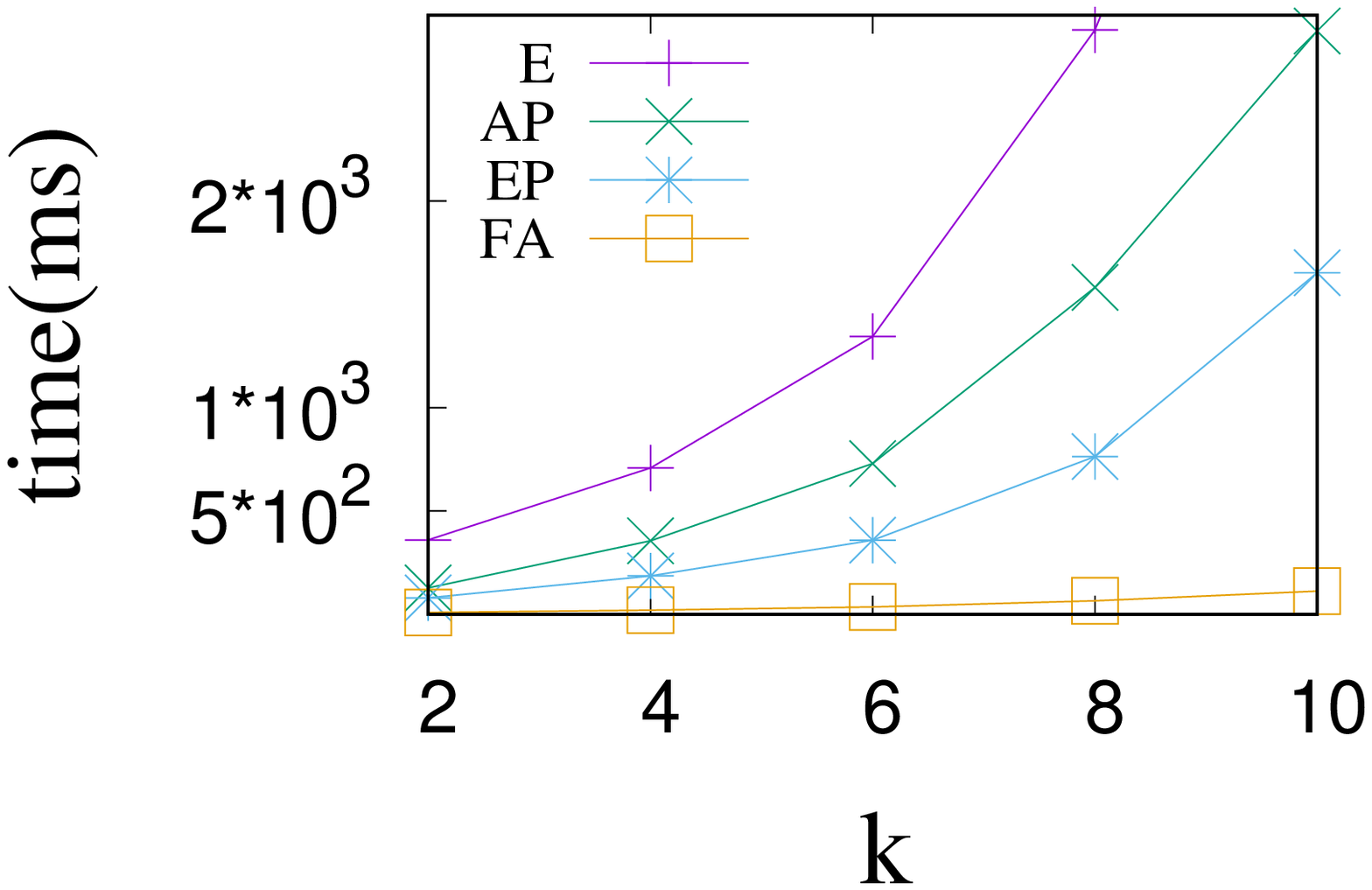}
	}
	\subfigure[Yelp]{\label{efficiencyYL}
		\includegraphics[scale=0.22]{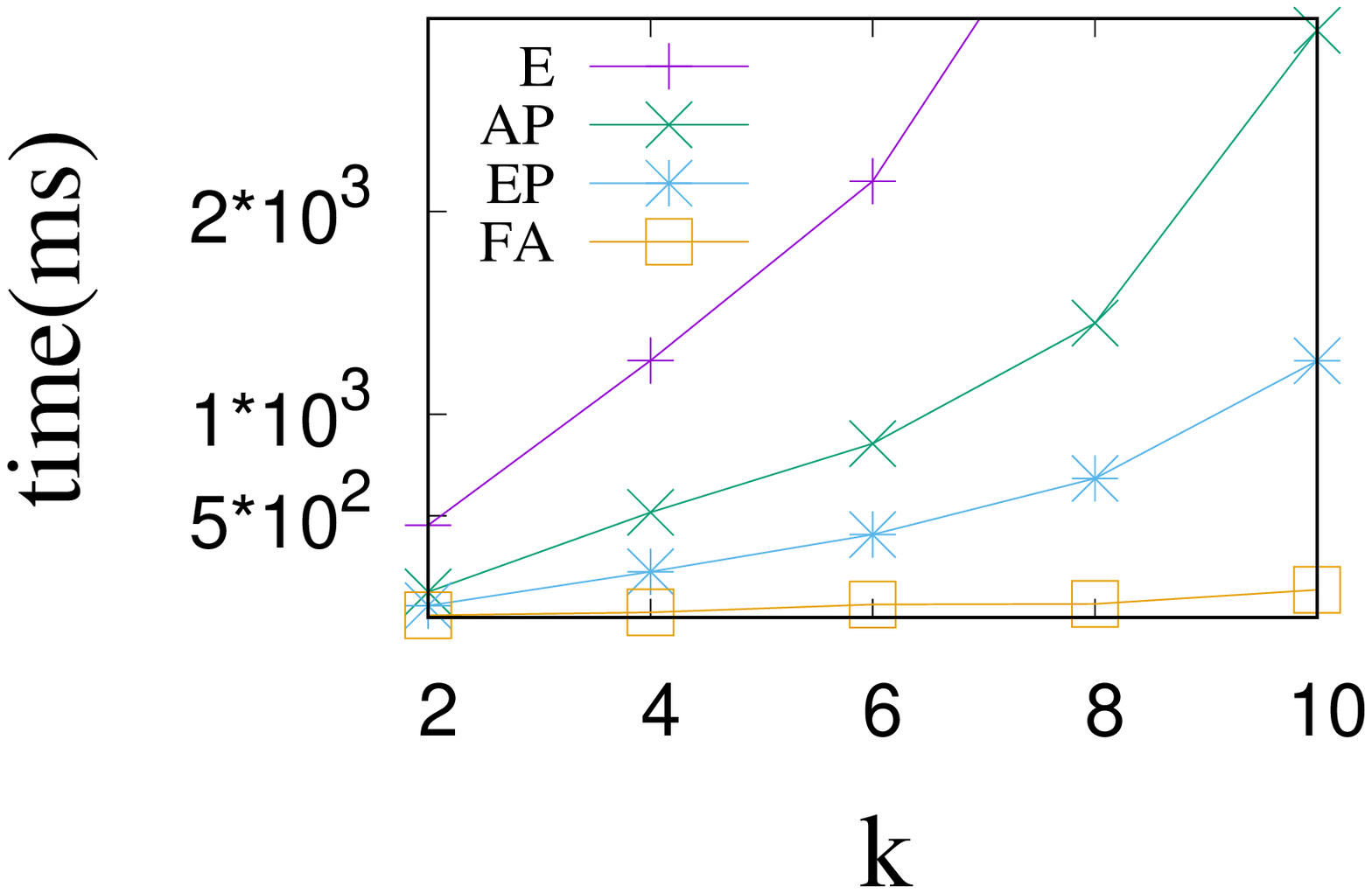}	
	}
	\vspace{-3mm}	
	\caption{Varying $k$}
	\label{fig:VaryK}
	\vspace{-3mm}
\end{figure}


\subsubsection{\underline{Varying Check-in Group Size}} 
\jianxin{In this experiment, we study the performance of the proposed approaches on the distribution of number of check-ins.} Specifically, we show how the size of the check-in locations (e.g., candidate set) of users affects the performance on the approaches. 
From\ignore{the results shown in} Figure \ref{fig:varyBins_k}, we find the runtime of the proposed approaches, except \textit{FA}, increases fast with the check-in group size. This is because a considerable amount of possible groups of locations are needed to compare in \textit{E}, \textit{AP}, and \textit{EP} when the check-in group size is large. We also notice \textit{EP} is much efficient than \textit{E} and \textit{AP}. For example, in check-in group 500 in \textit{Brightkite}, \textit{EP} reports 2.5 and 4.7 times faster than \textit{AP} and \textit{E}, respectively. \textit{FA} performs significantly efficient, even for the large candidate set. For example, in \textit{Gowalla}, \textit{FA} is 57 times faster than EP when the check-in group id is 1000.



\begin{figure}[htbp]
	\vspace{-4mm}
	\centering
	\subfigure[Gowalla]{\label{varyBins_k_GW}
		\includegraphics[scale=0.22]{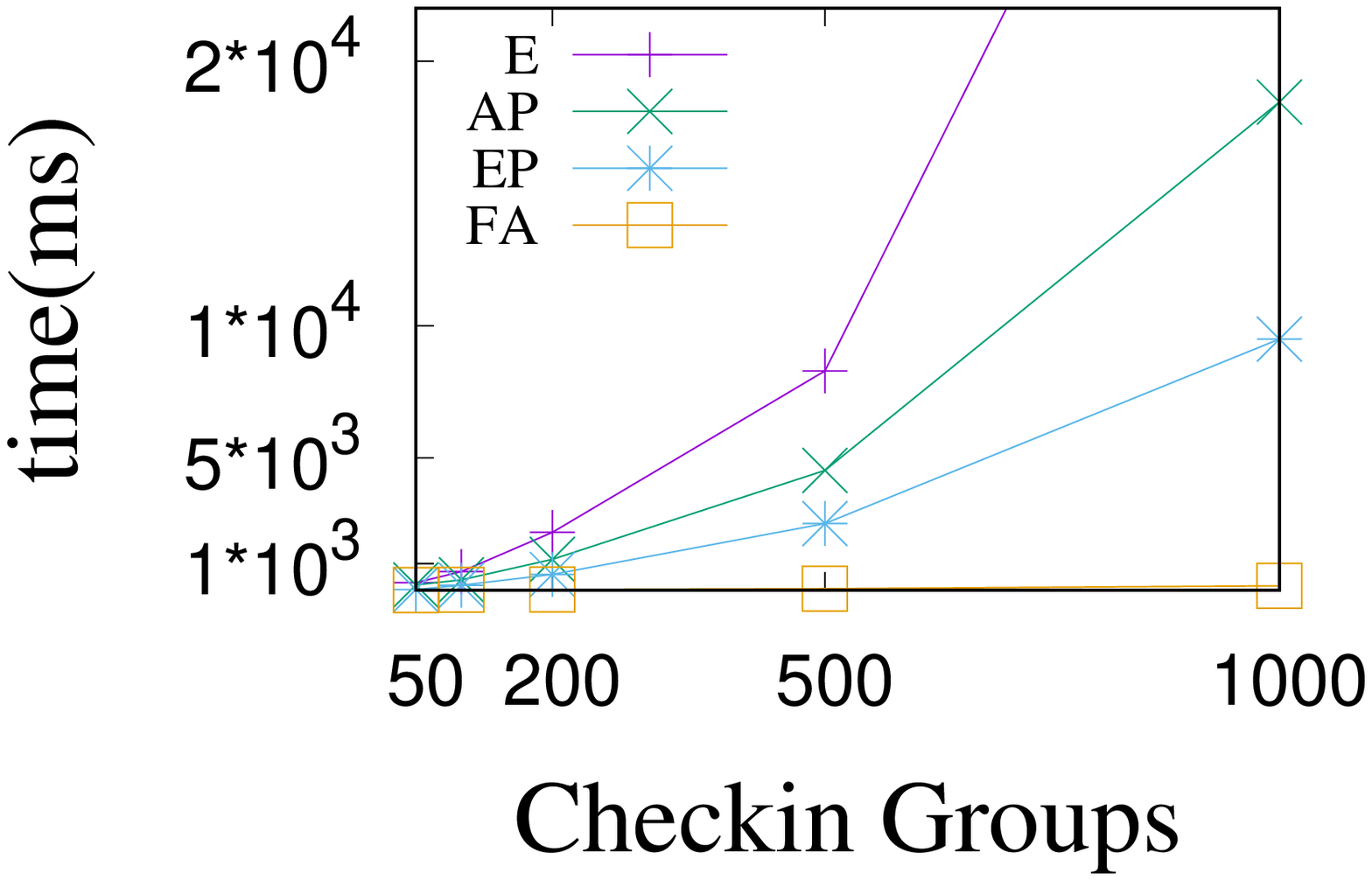}
	}
	\subfigure[Brightkite]{\label{varyBins_k_BK}
		\includegraphics[scale=0.22]{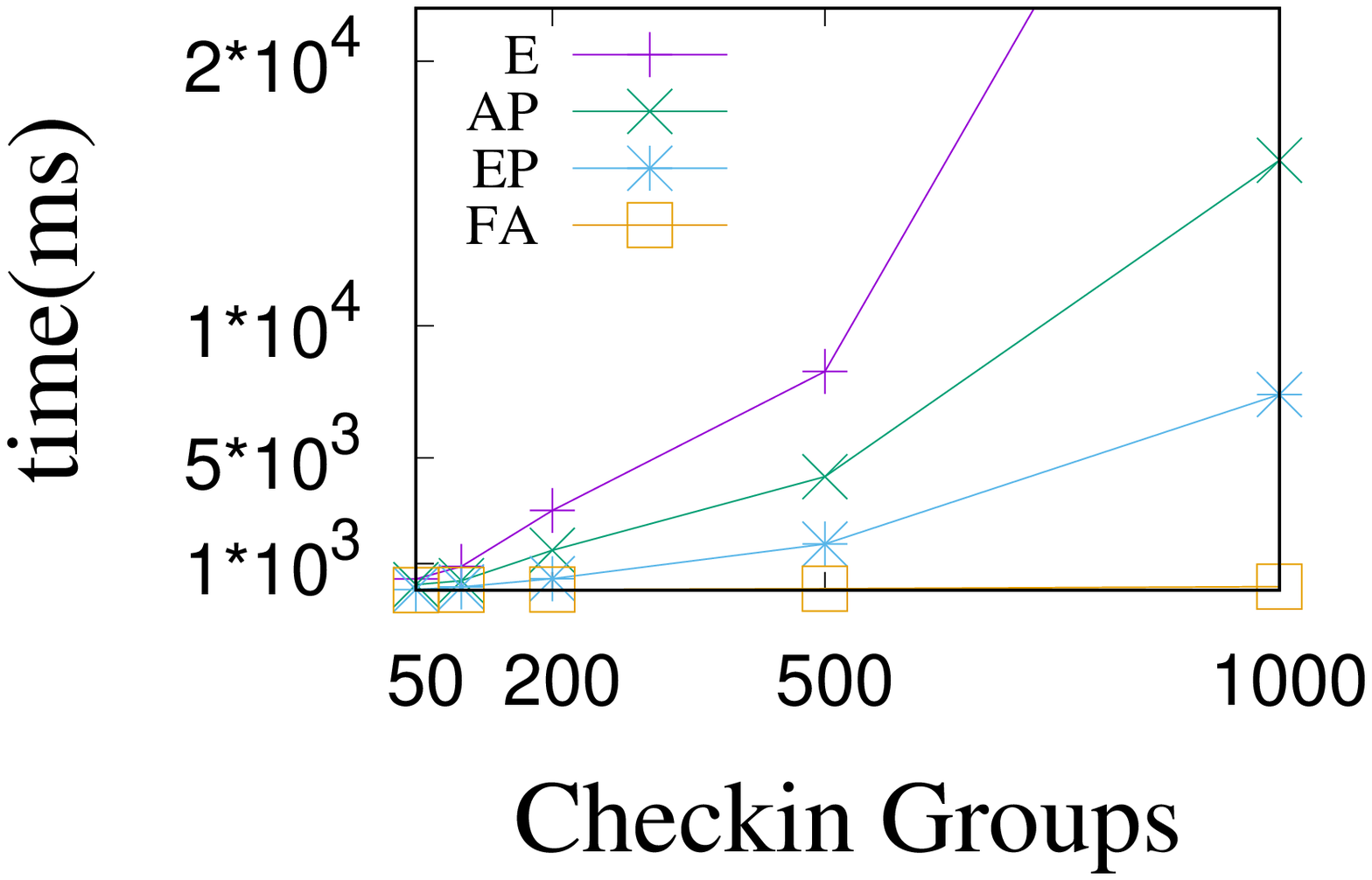}	
	}\vspace{-3mm}	
	\subfigure[Flickr]{\label{varyBins_k_FL}
		\includegraphics[scale=0.22]{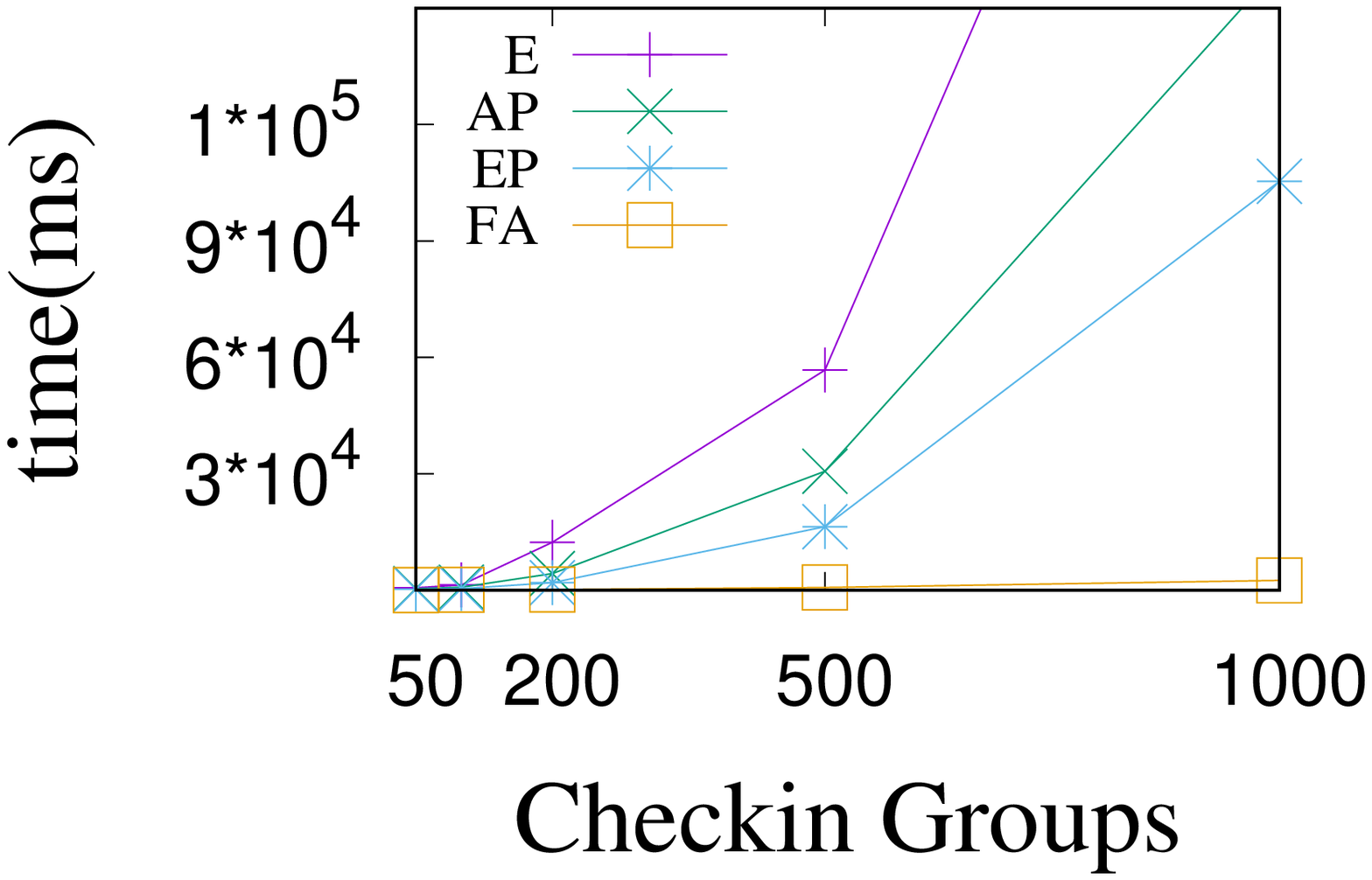}
	}
	\subfigure[Yelp]{\label{varyBins_k_YL}
		\includegraphics[scale=0.22]{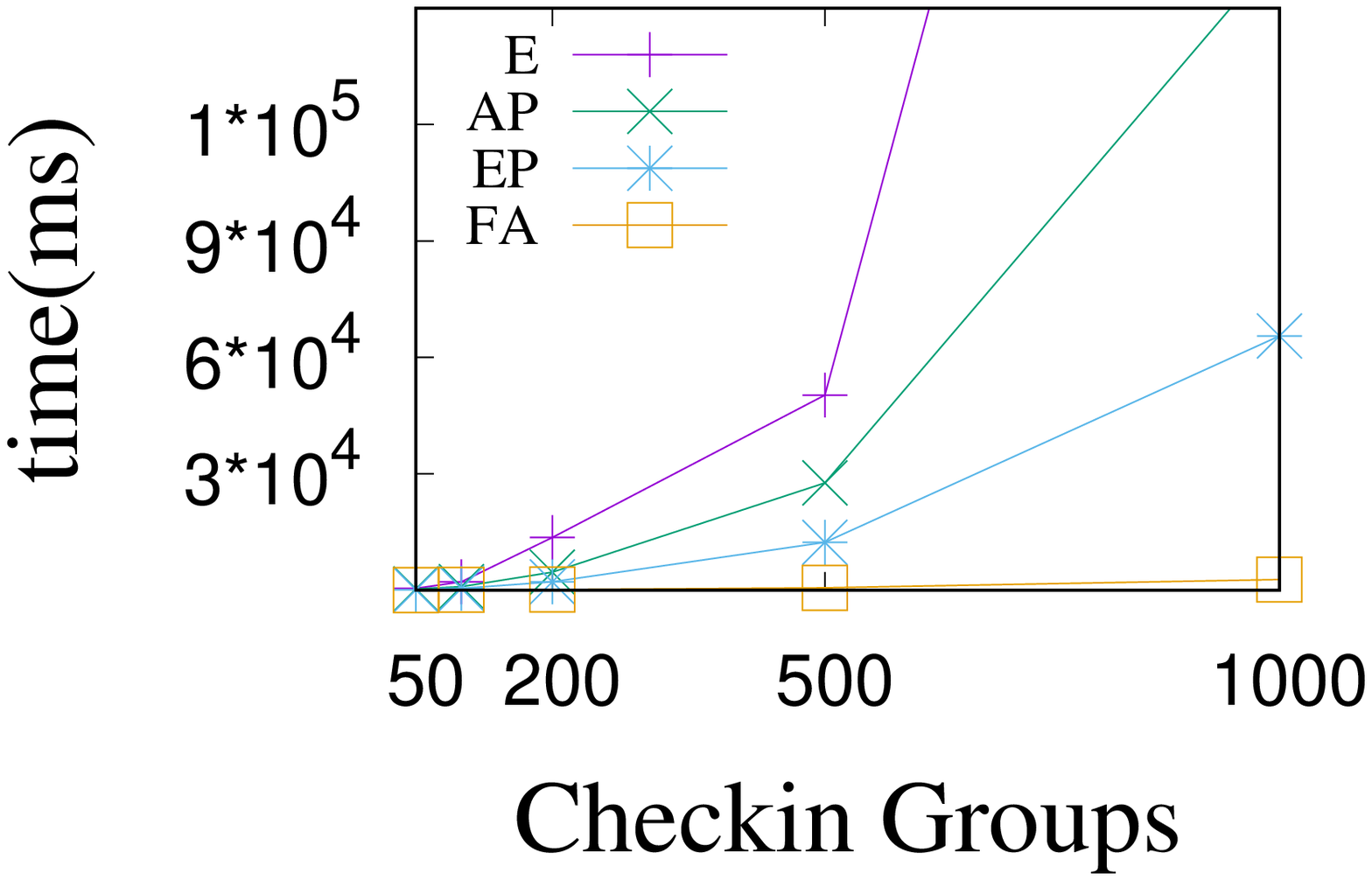}	
	}
	\vspace{-3mm}	
	\caption{\jianxin{Varying check-in groups}}
	\label{fig:varyBins_k}
	\vspace{-3mm}
\end{figure}

\subsubsection{\underline{Varying Number of Friends}} \jianxin{We compare the efficiency of our proposed algorithms by varying the number of social connections users have. To balance user count with sufficient check-ins, we first select the check-in group 500, then divide the users into five groups with medium to a higher number of social connections. The user group ids 100, 200, 500, 1000 contain the users with 50-100, 101-200, 201-501, 501-1000 friends, respectively. 
In Figure~\ref{fig:varyNodeDegreeBins}, 
	we notice a similar trend in the proposed methods, where a higher number of friends do not affect the efficiency. This is because, the social and spatial relevance scores are pre-computed using the location information of the friends. 
	The proposed algorithms only depend on the number of check-ins a query user has. Therefore, it merely gets affected by the number of social connections. 
}

\begin{figure}[htbp]
	\vspace{-4mm}
	\centering
	\subfigure[Gowalla]{\label{varyNodeDegree_GW}
		\includegraphics[scale=0.22]{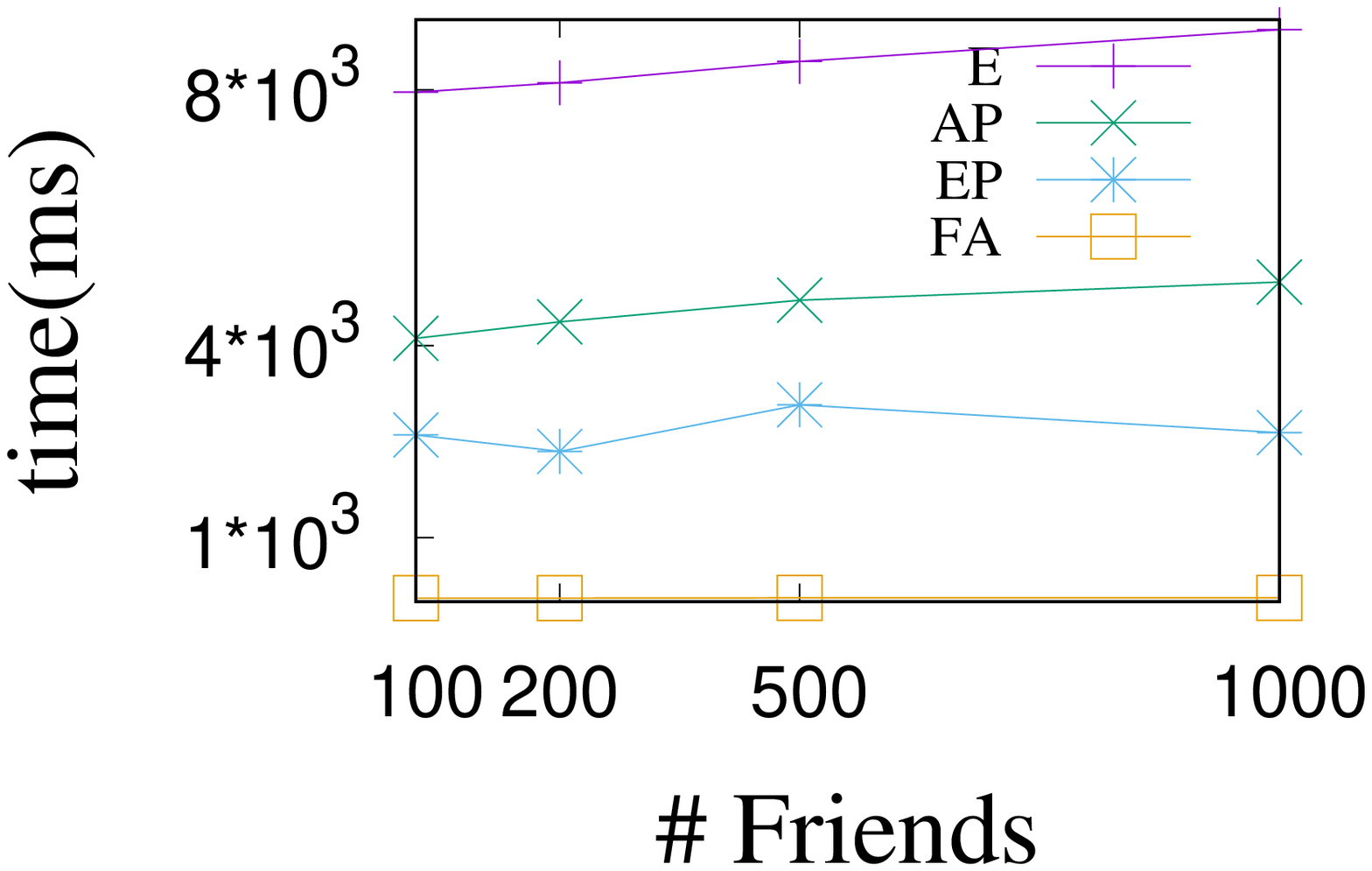}
	}
	\subfigure[Brightkite]{\label{varyNodeDegree_BK}
		\includegraphics[scale=0.22]{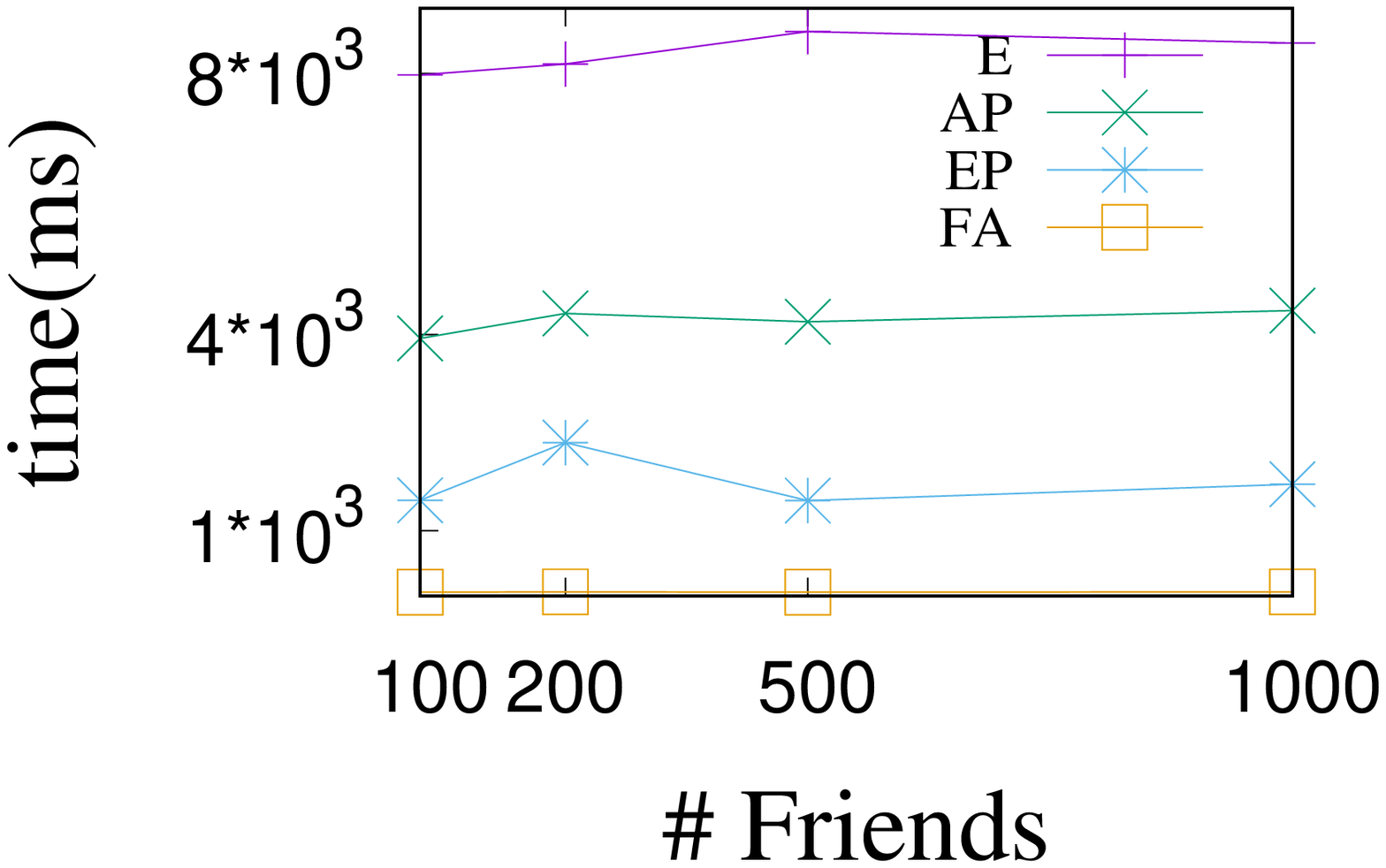}	
	}\vspace{-3mm}	
	\subfigure[Flickr]{\label{varyNodeDegree_FL}
		\includegraphics[scale=0.22]{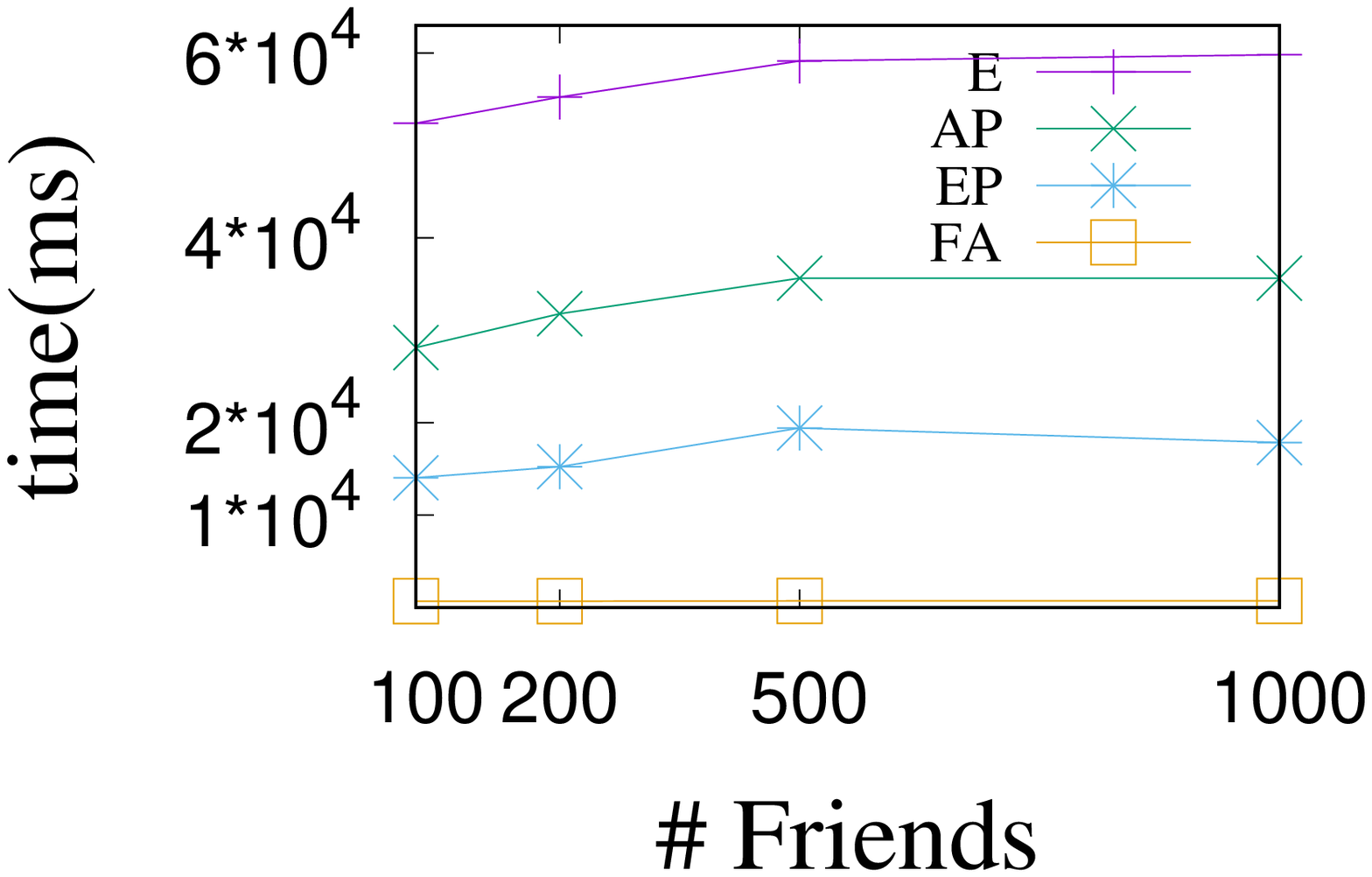}
	}
	\subfigure[Yelp]{\label{varyNodeDegree_YL}
		\includegraphics[scale=0.22]{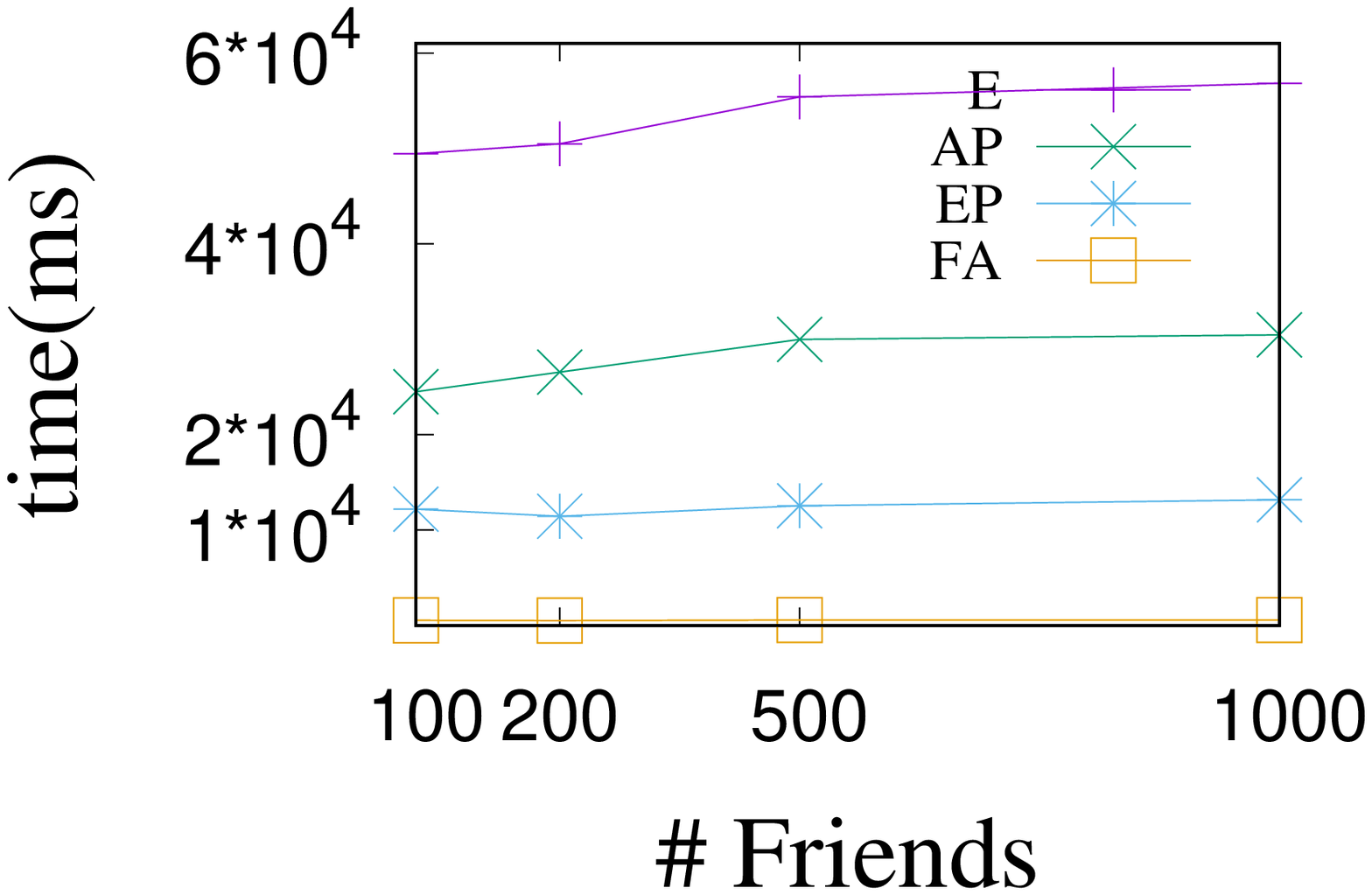}	
	}
	\vspace{-3mm}	
	\caption{\jianxin{Varying Number of Friends}}
	\label{fig:varyNodeDegreeBins}
\end{figure}

\subsubsection{\jianxin{\underline{Varying $\alpha$, and $\omega$}}}
We also test our proposed algorithms by varying the trade-off parameters $\alpha$, $\omega$. Figure~\ref{fig:varyingParams_Efficiency} shows the average runtime of our proposed algorithms in \textit{Gowalla} and \textit{Yelp} datasets when the trade-off parameters $\alpha$, $\omega$ vary from $0.1$ to $0.9$. As expected, we do not observe any noticeable change in the efficiency trends in each datasets, where the average execution time of individual algorithm almost remains constant. 
This is because, these trade-off parameters do not interfere on how a method operates, but only precepts in selecting locations in the result set. 

\begin{figure}[htbp]
	\vspace{-4mm}	
	\centering
	\subfigure[Gowalla]{\label{varyOmegaAplgaGW}
		\includegraphics[scale=0.22]{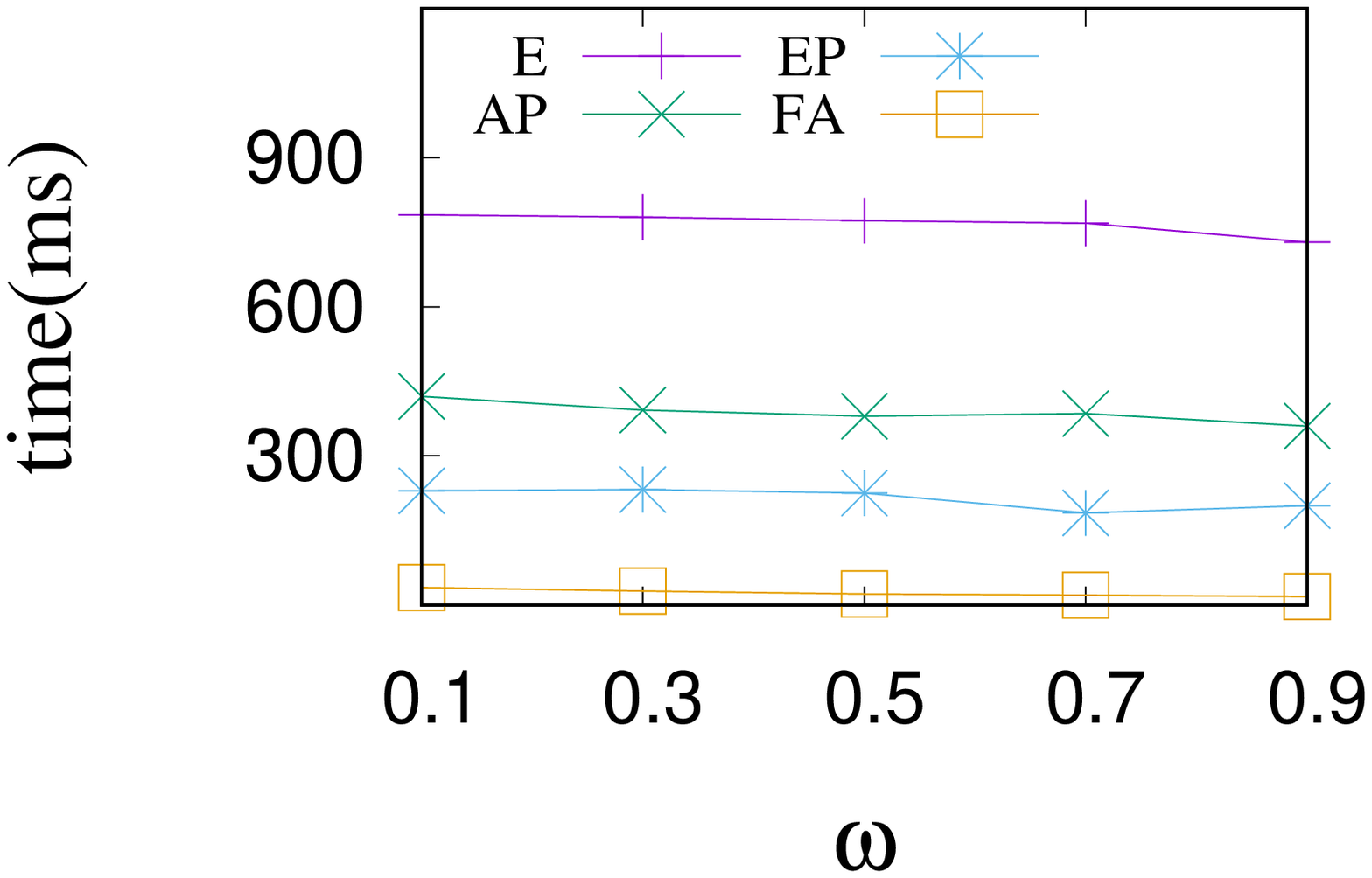}		
		\includegraphics[scale=0.22]{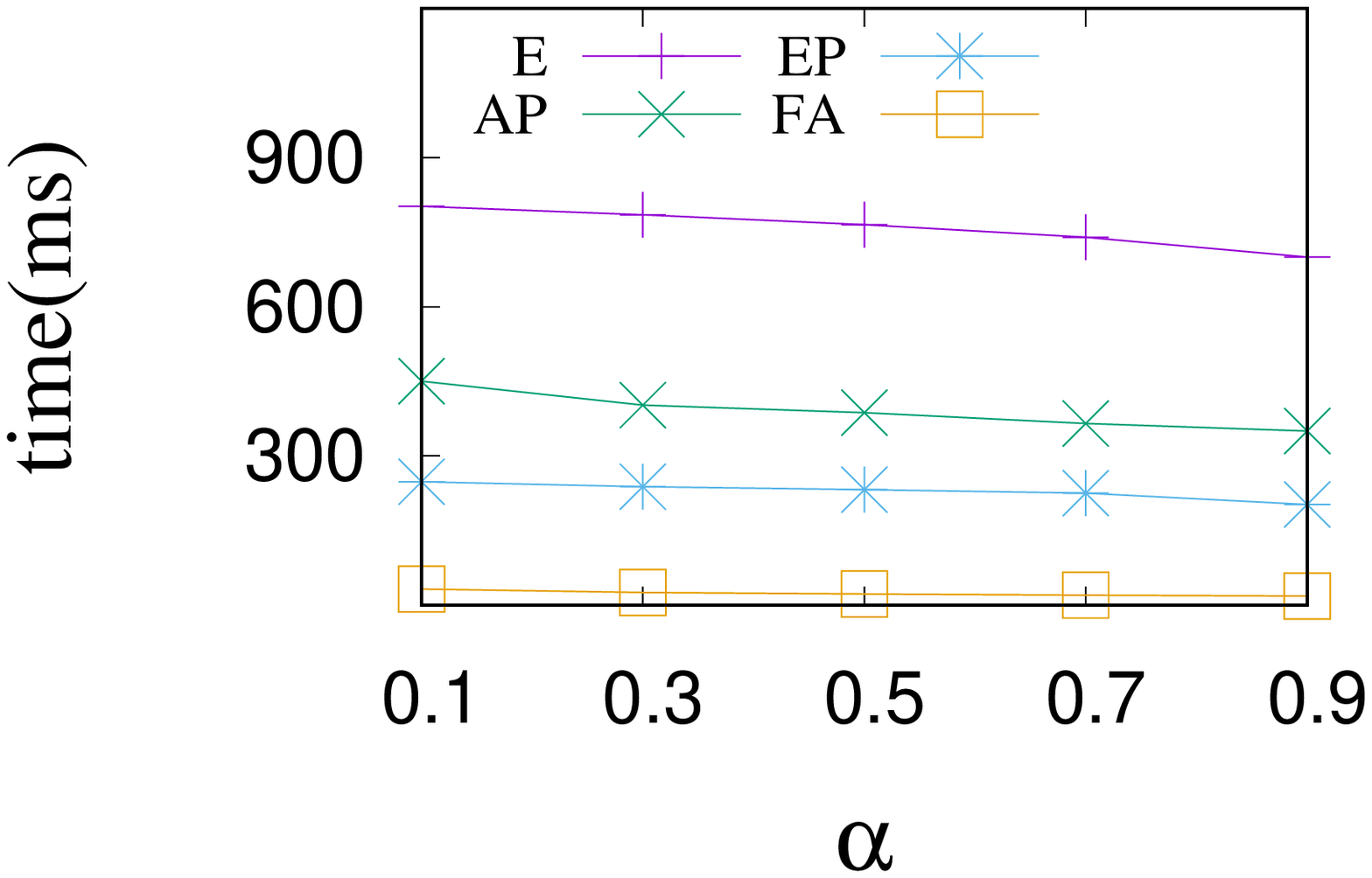}
	}\vspace{-3mm}
	\subfigure[Yelp]{\label{varyOmegaAlphaYL}
		\includegraphics[scale=0.22]{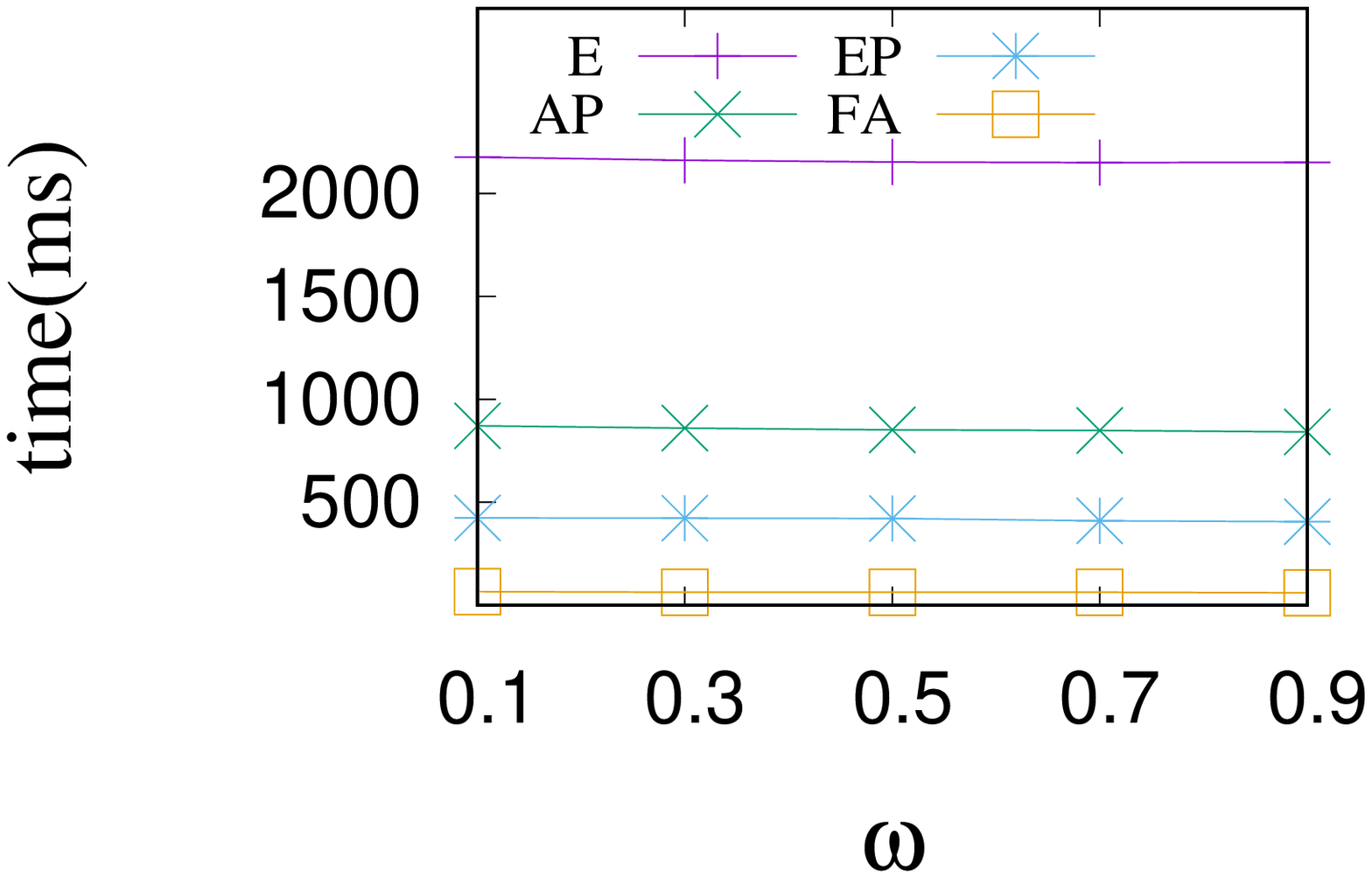}		
		\includegraphics[scale=0.22]{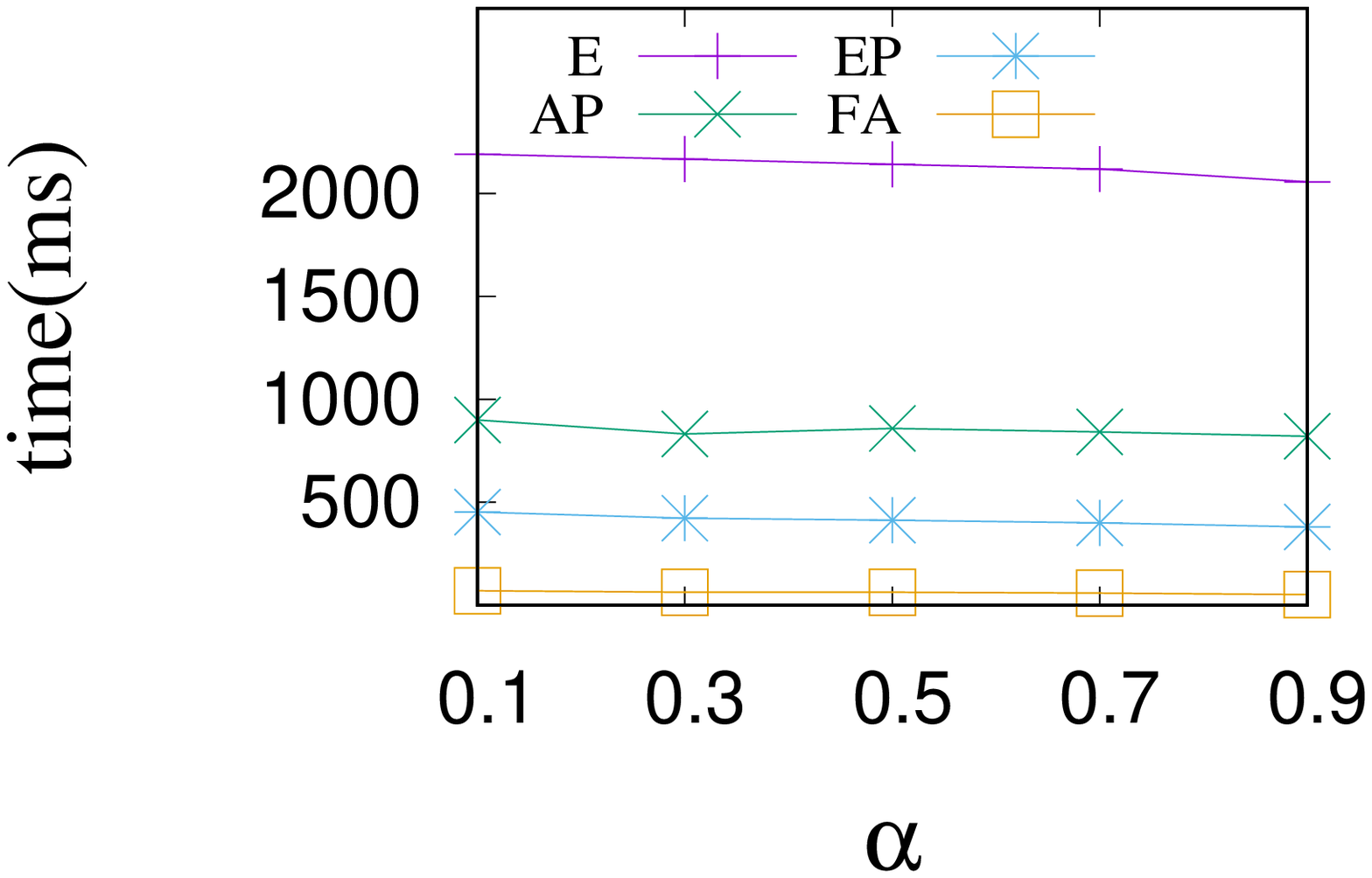}
	}\vspace{-3mm}	
	\caption{\jianxin{Varying $\omega$, $\alpha$}}
	\label{fig:varyingParams_Efficiency}
	\vspace{-3mm}	
\end{figure}

\subsection{Comparison with Existing Models}
We compare the performance of the existing greedy solutions, e.g., \textit{GMC}, \textit{AS}, \textit{GNE}, with our proposed approaches. For brevity of the presentation, we only show the results using the medium-sized dataset \textit{Gowalla} and the large dataset \textit{Yelp}. 

\subsubsection{\underline{Efficiency}}
To make a fair comparison between the greedy based existing works and our proposed solutions, we consider our top two efficient algorithms, \textit{EP} and \textit{FA}, in this experiment. 
Figure \ref{fig:varyK_baseline} depicts the runtime of the approaches by varying the answer set size $k$ in default check-in group. In \textit{Gowalla} dataset, \textit{GNE} has higher efficiency than \textit{EP}, but in \textit{Yelp}, it shows an opposite trend. This is because the candidate locations in check-in group 100 of \textit{Yelp} is higher than \textit{Gowalla}. 
\textit{GNE} always performs slower than \textit{FA}; e.g., in \textit{Yelp}, \textit{GNE} is two times slower than \textit{FA}. In each dataset with moderate-sized candidate locations, \textit{GMC} performs faster than the others. 

\begin{figure}[htbp]
	\vspace{-4mm}
	\centering
	\subfigure[Gowalla]{\label{varyK_GW_baseline}
		\includegraphics[scale=0.22]{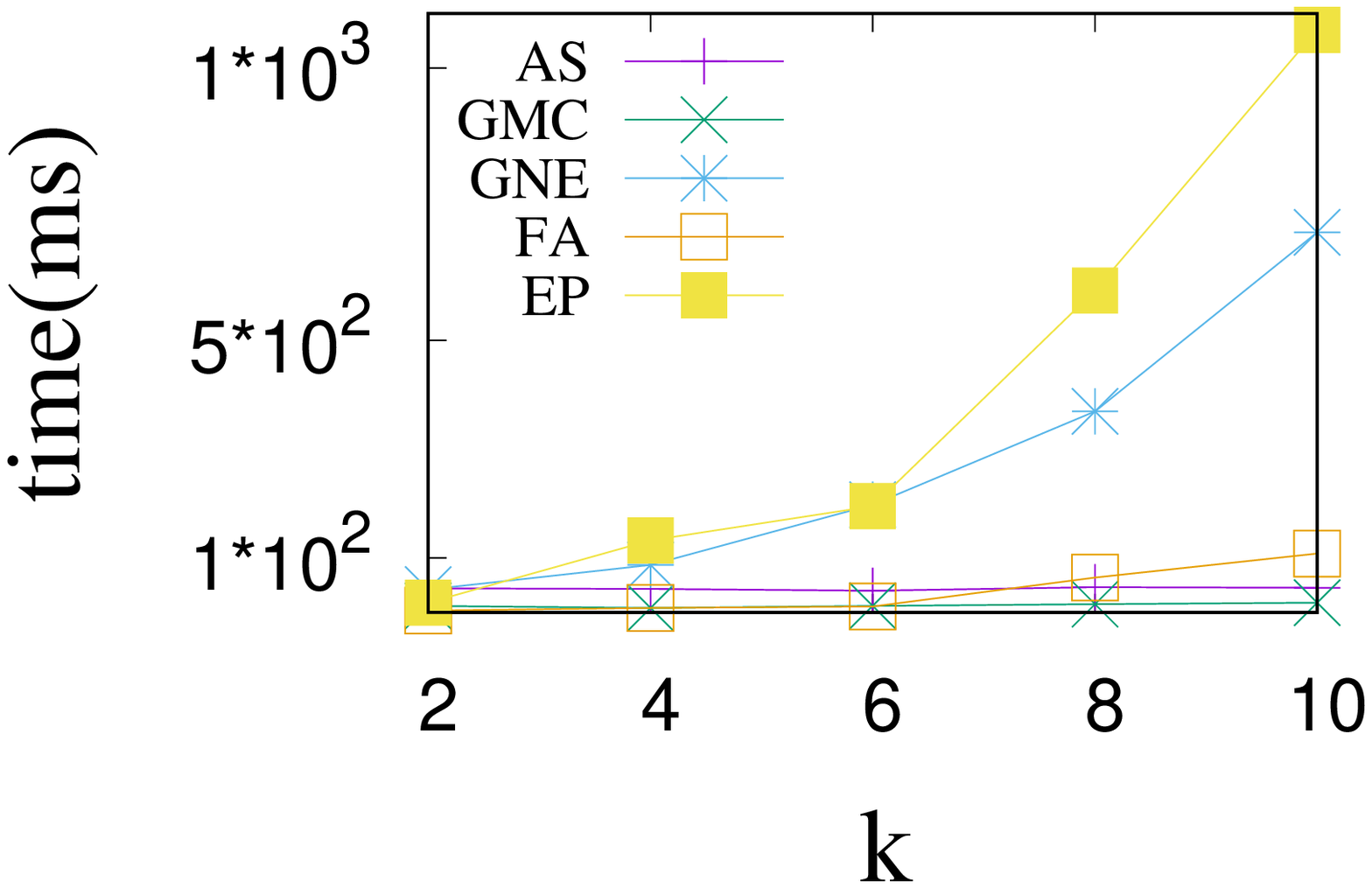}
	}
	\subfigure[Yelp]{\label{varyK_YL_baseline}
		\includegraphics[scale=0.22]{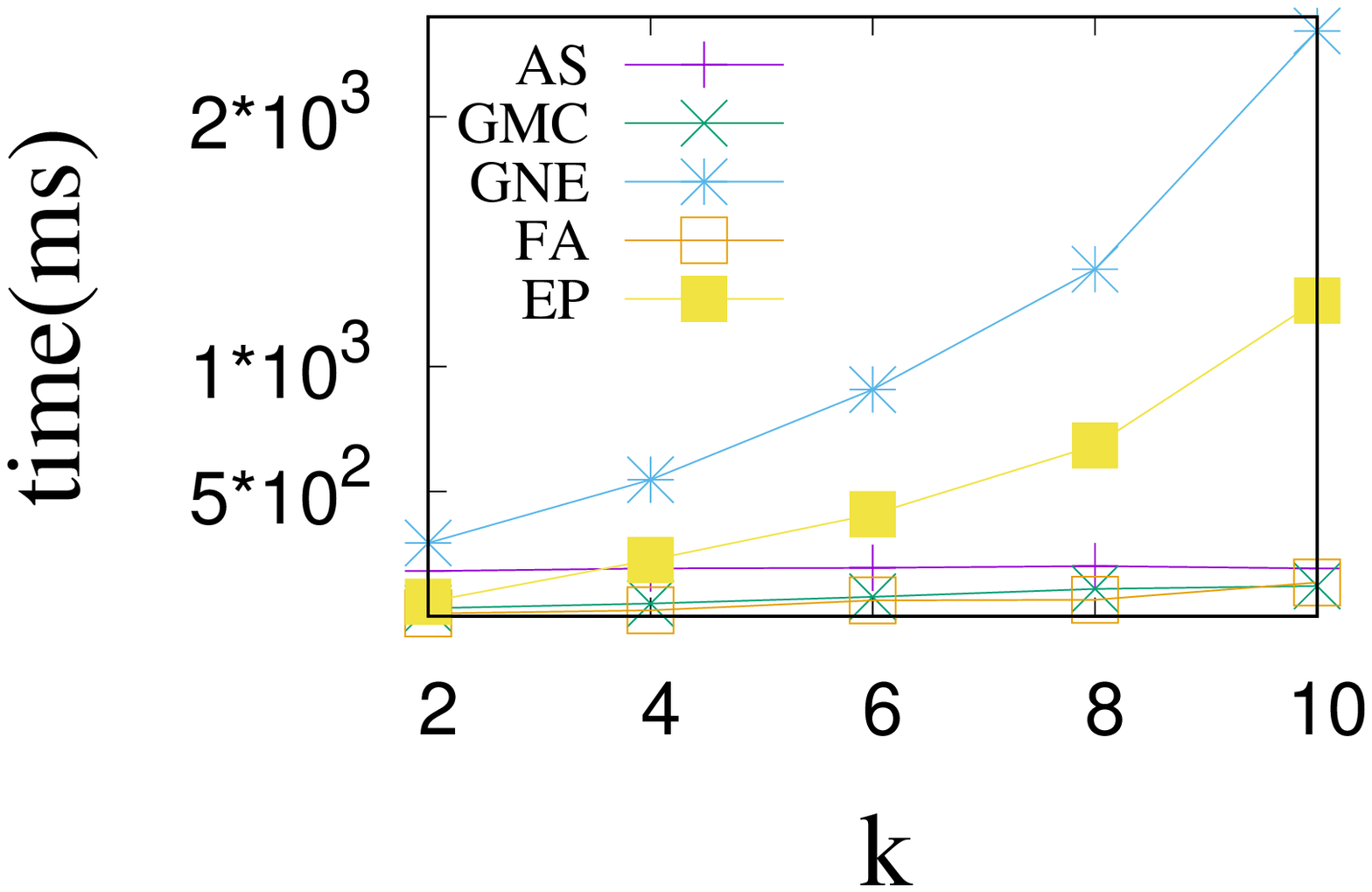}	
	}
	\vspace{-3mm}		
	\caption{Varying $k$}
	\label{fig:varyK_baseline}
	\vspace{-3mm}
\end{figure}

Figure \ref{fig:varyBins_k_baseline} compares the runtime of the approaches when check-in group size varies. We notice that \textit{FA} is faster than \textit{GMC} when the check-in group size is more than 100. This is because, \textit{GMC} needs more time to calculate the marginal contribution of the locations in large candidate sets. In higher check-in groups, \textit{GNE} takes considerable time to swap the locations in the current result set and the most diverse element among the remaining locations which results a lower efficiency.
\begin{figure}[htbp]
	\vspace{-4mm}
	\centering
	\subfigure[Gowalla]{\label{varyBins_k_GW_baseline}
		\includegraphics[scale=0.22]{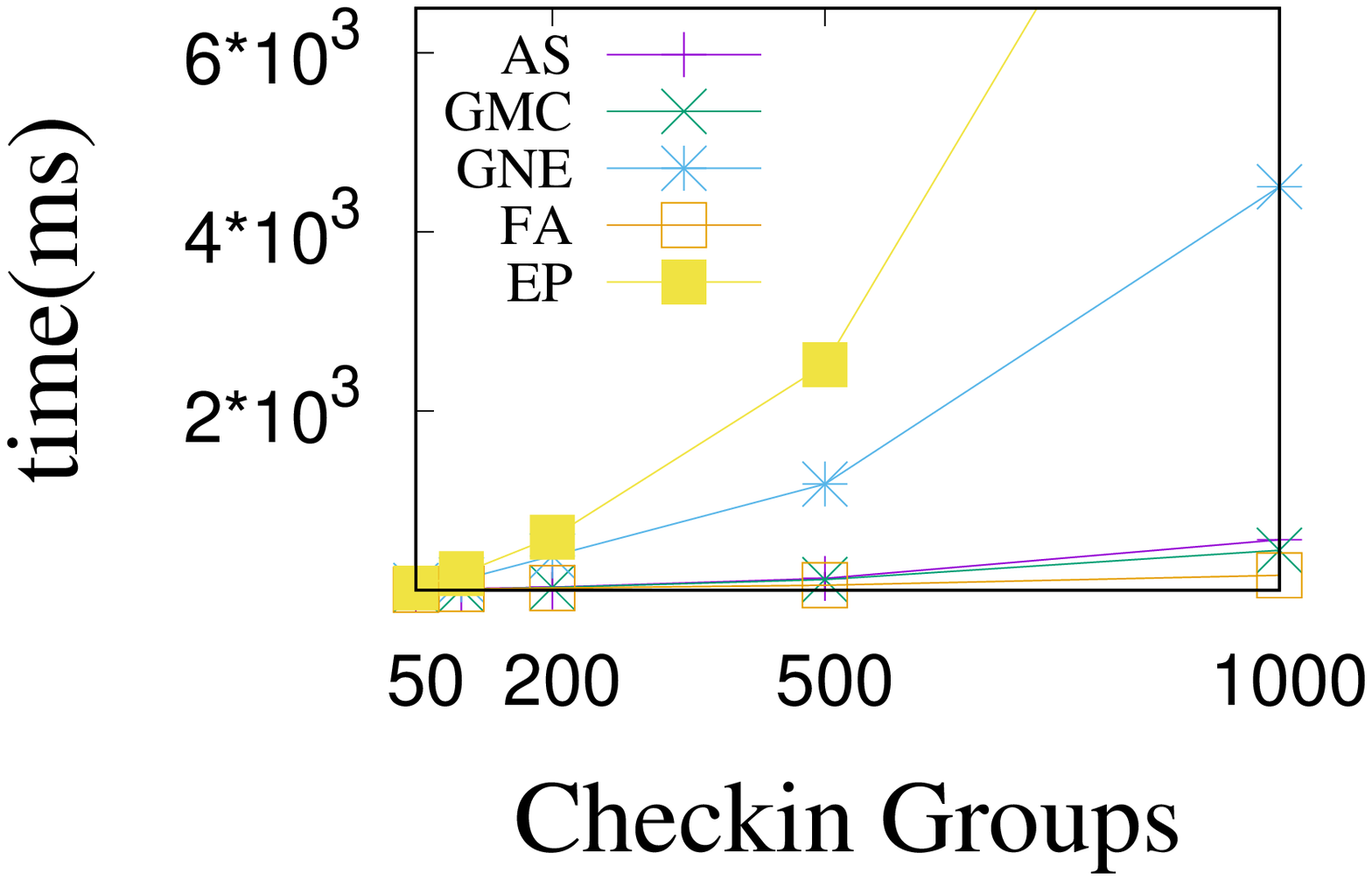}
	}
	\subfigure[Yelp]{\label{varyBins_k_YL_baseline}
		\includegraphics[scale=0.22]{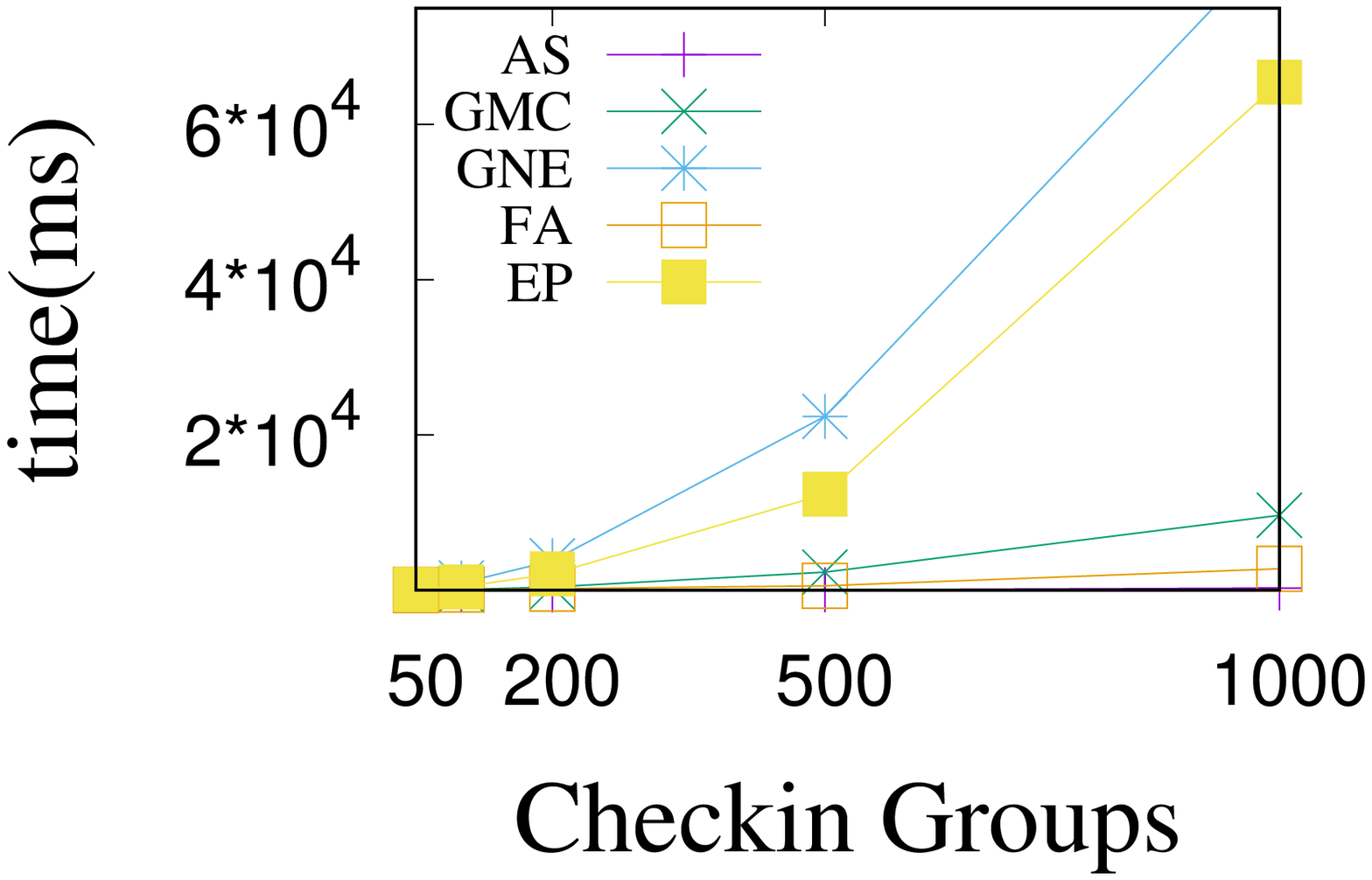}	
	}
	\vspace{-3mm}		
	\caption{\jianxin{Varying check-in groups}}
	\label{fig:varyBins_k_baseline}
	\vspace{-3mm}
\end{figure}

\subsubsection{\underline{Accuracy}} 
\label{sec:Effectiveness}
Figure \ref{fig:precLocNew} demonstrates the precision of the approaches w.r.t. the exact result when $k$ is varied. \textit{AP} has higher precision than the other approaches in each dataset. 
Although the precision of \textit{FA} is lower than \textit{AP}, \textit{FA} is much efficient (e.g., 10-25 times faster, see Figure \ref{fig:VaryK}). For example, in \textit{Yelp}, \textit{FA}'s precision is lower than \textit{AP} by 16\% only, but its efficiency outperforms \textit{AP} by about 20 times when $k=6$. 
\begin{figure}[htbp]
	\vspace{-5mm}
	\centering
	\subfigure[Gowalla]{\label{precLocGWNew}
		\includegraphics[scale=0.20]{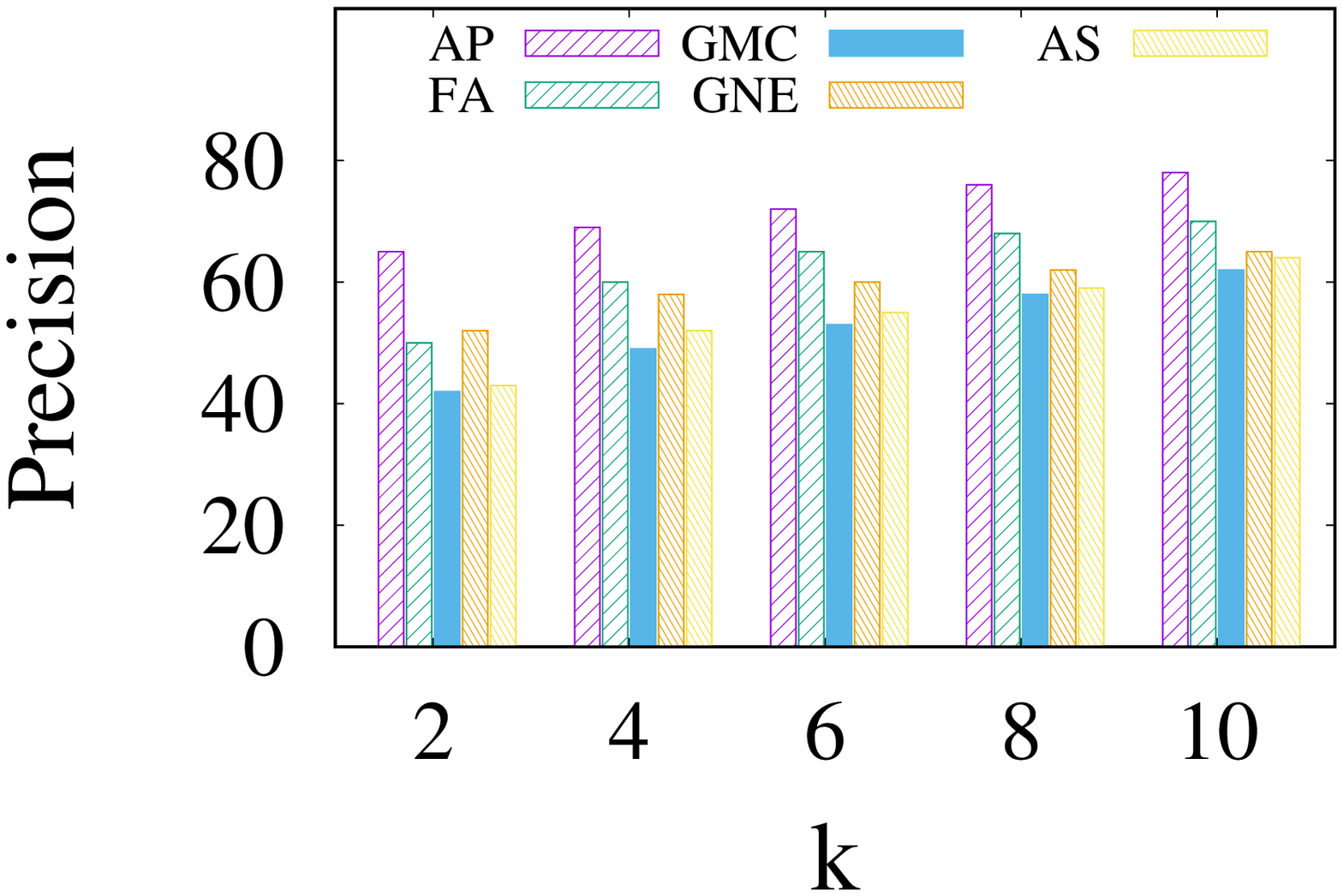}
	}
	\vspace{-3mm}
	\subfigure[Yelp]{\label{precLocNew}
		\includegraphics[scale=0.20]{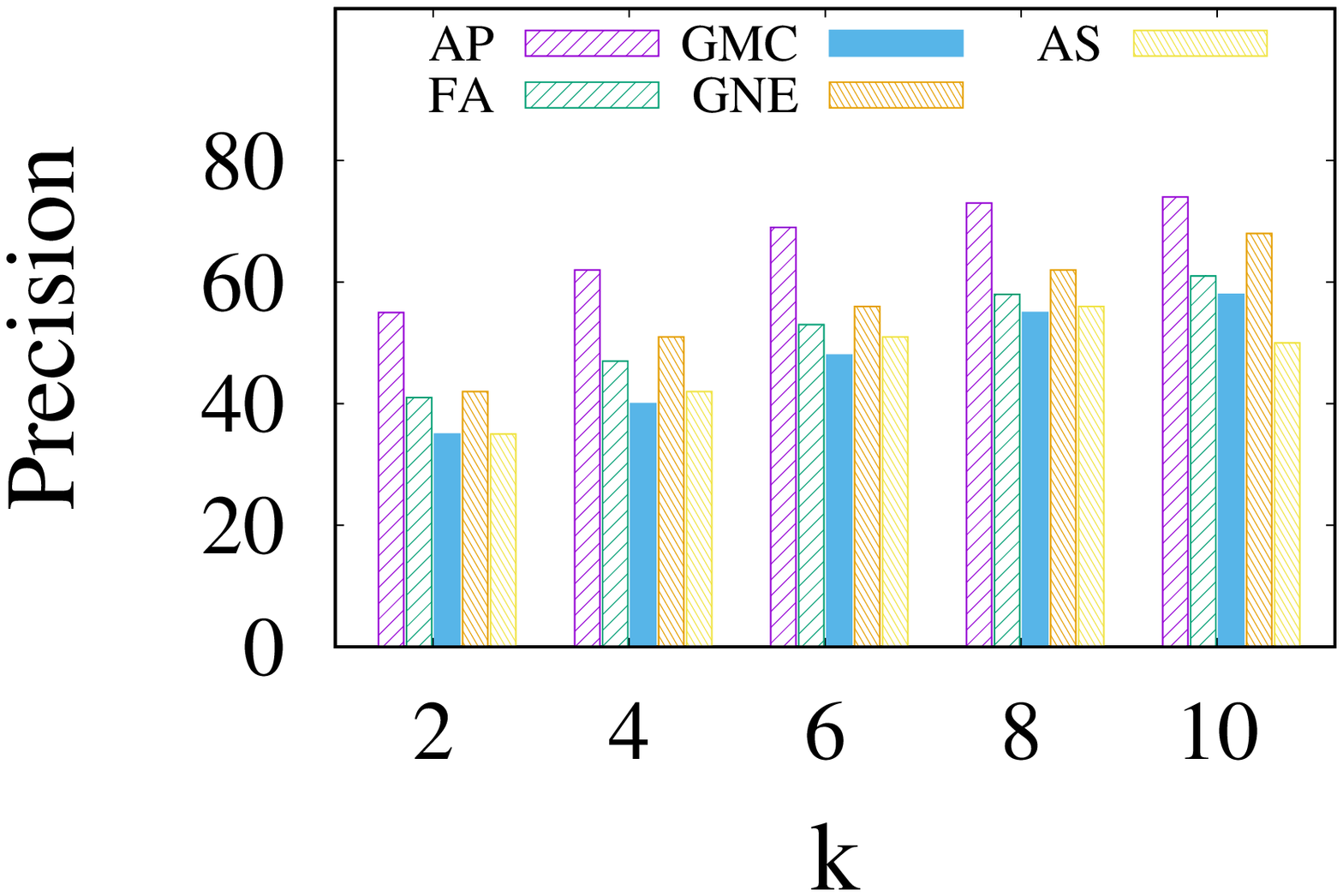} 
	}
	\caption{Precision}
	\label{fig:precLocNew}
	\vspace{-3mm}
\end{figure}

\jianxin{Figure~\ref{fig:precLocVaryAlphaOmega} shows the average precision of the models when $\alpha$ and $\omega$ vary. As the relative trends are similar on other datasets, we only show the effect of $\alpha$ and $\omega$ on \textit{Gowalla}. The precision of the methods typically increases with $\omega$ (e.g., preference to relevance). The \textit{FA} and \textit{AS} methods are influenced by the selection of top relevant location in the result set, which affects the precision when diversity has higher importance than relevance. The precision of \textit{AP}, \textit{GMC}, and \textit{GNE} remain almost constant when $\omega$ varies. 
The variation of $\alpha$ does not affect much in the precision of the approaches when $k$ is set as default. For example, in \textit{Gowalla}, the average precision of \textit{AP} is $71\%$ when $\alpha$ varies and $k=6$. In \textit{Yelp}, the average precision of \textit{AP} is reported as 68\% when $\alpha$ varies from $0.1$ to $0.9$.
}

\begin{figure}[htbp]
	\vspace{-4mm}
	\centering
	\subfigure[Gowalla]{\label{precLocAlphaOmegaGW}
		\includegraphics[scale=0.20]{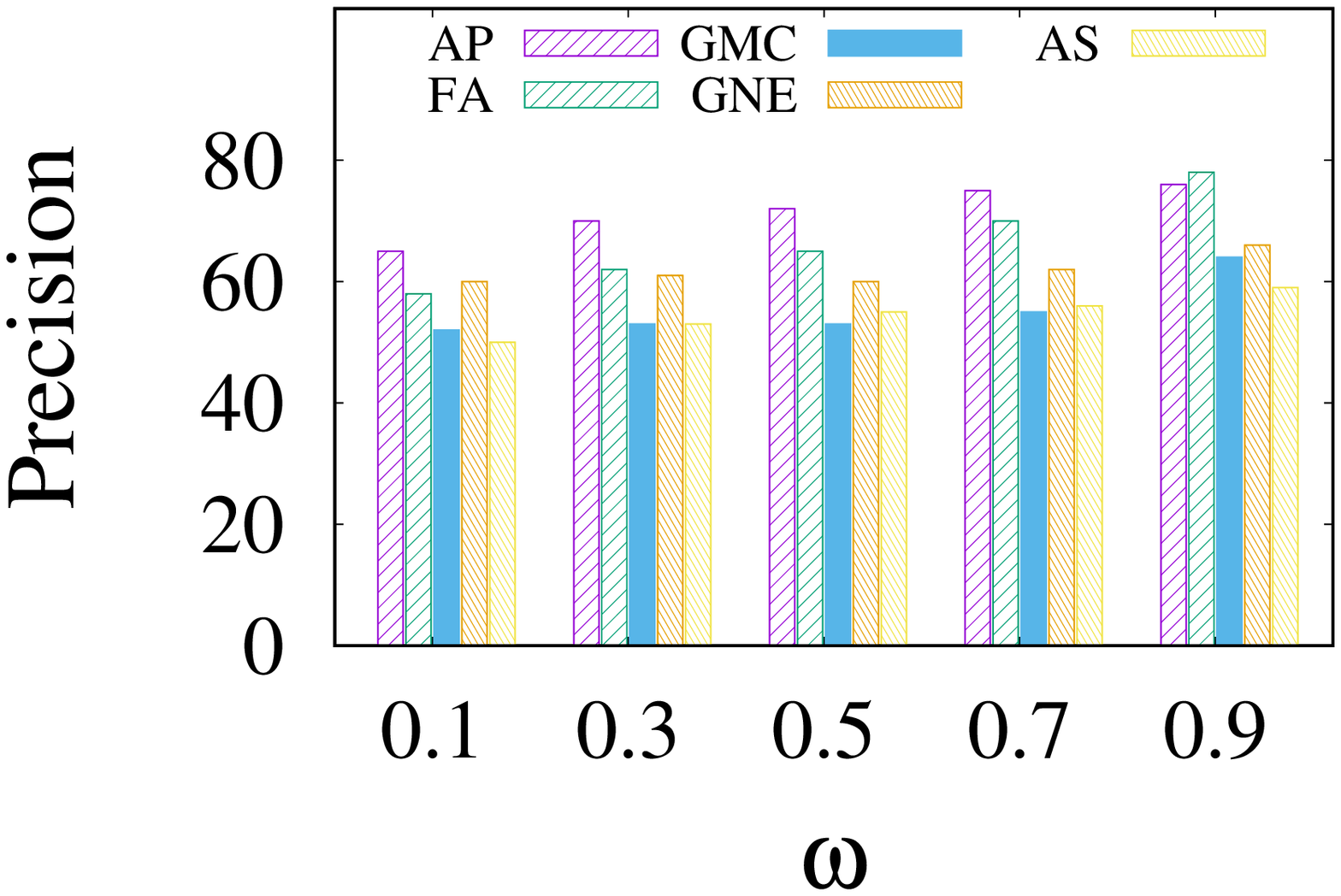}	\hspace{1mm}
		\includegraphics[scale=0.20]{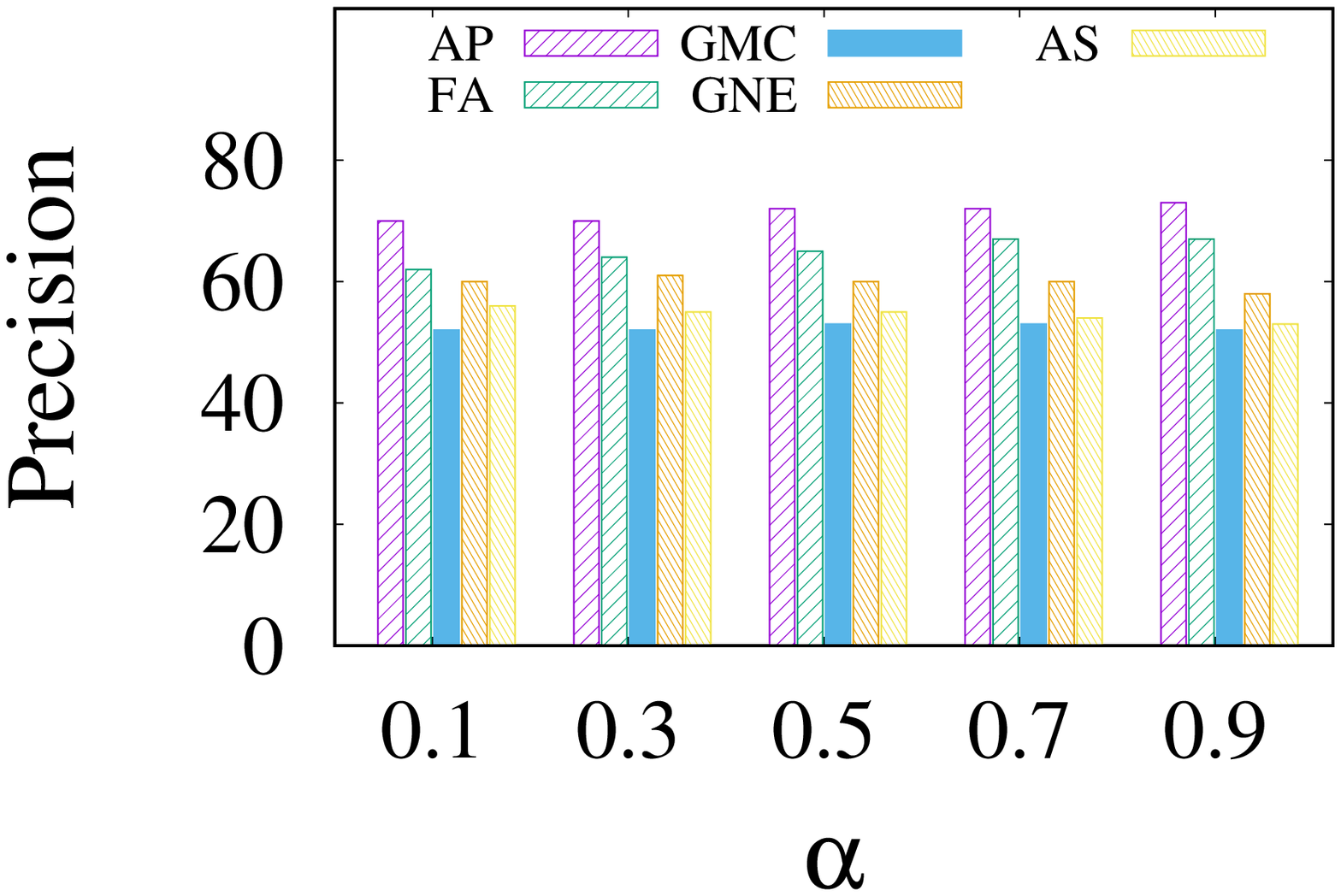}	
	}\vspace{-1mm}	
	\subfigure[Yelp]{\label{precLocAlphaOmegaYL}
		\includegraphics[scale=0.20]{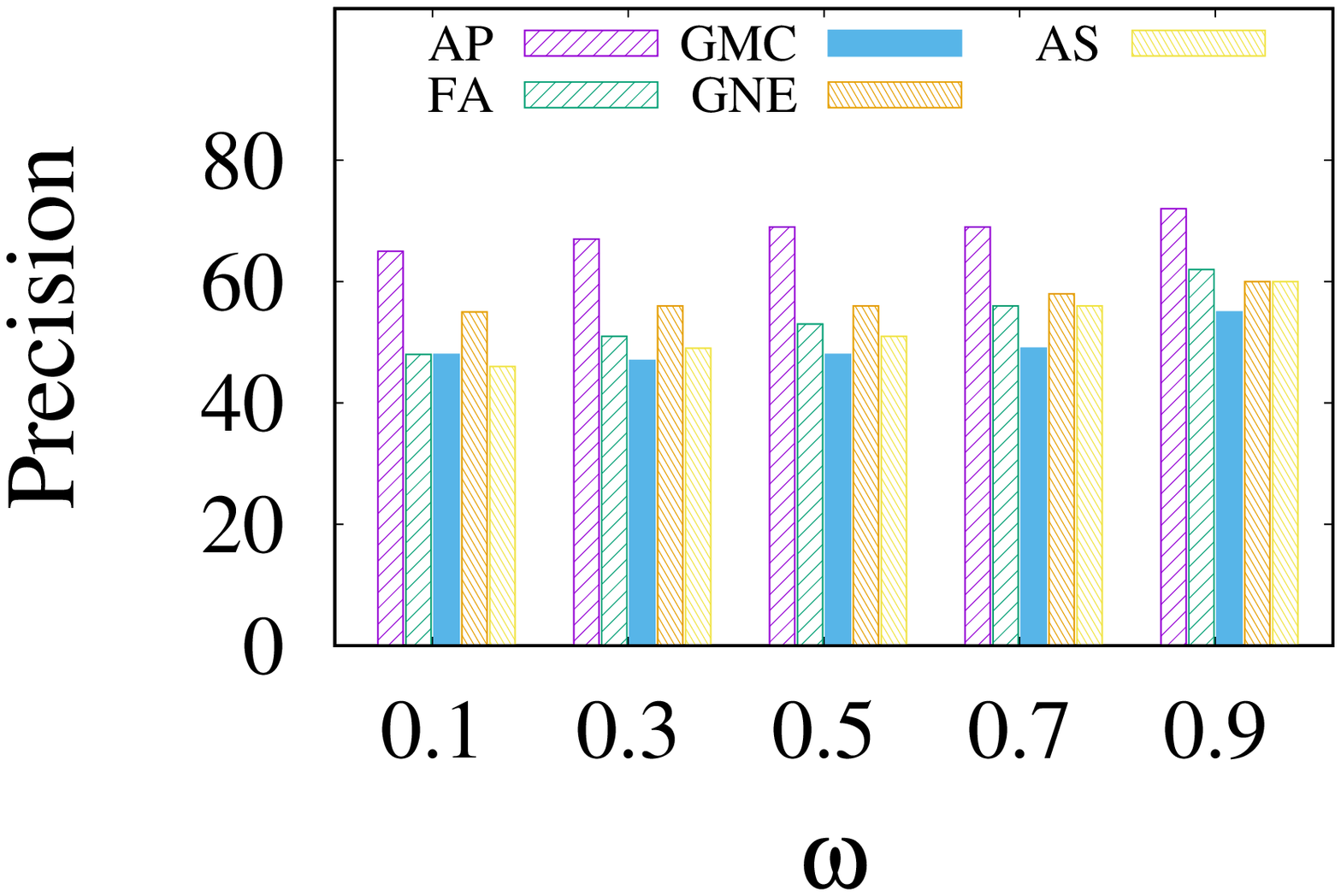}	\hspace{1mm}
		\includegraphics[scale=0.20]{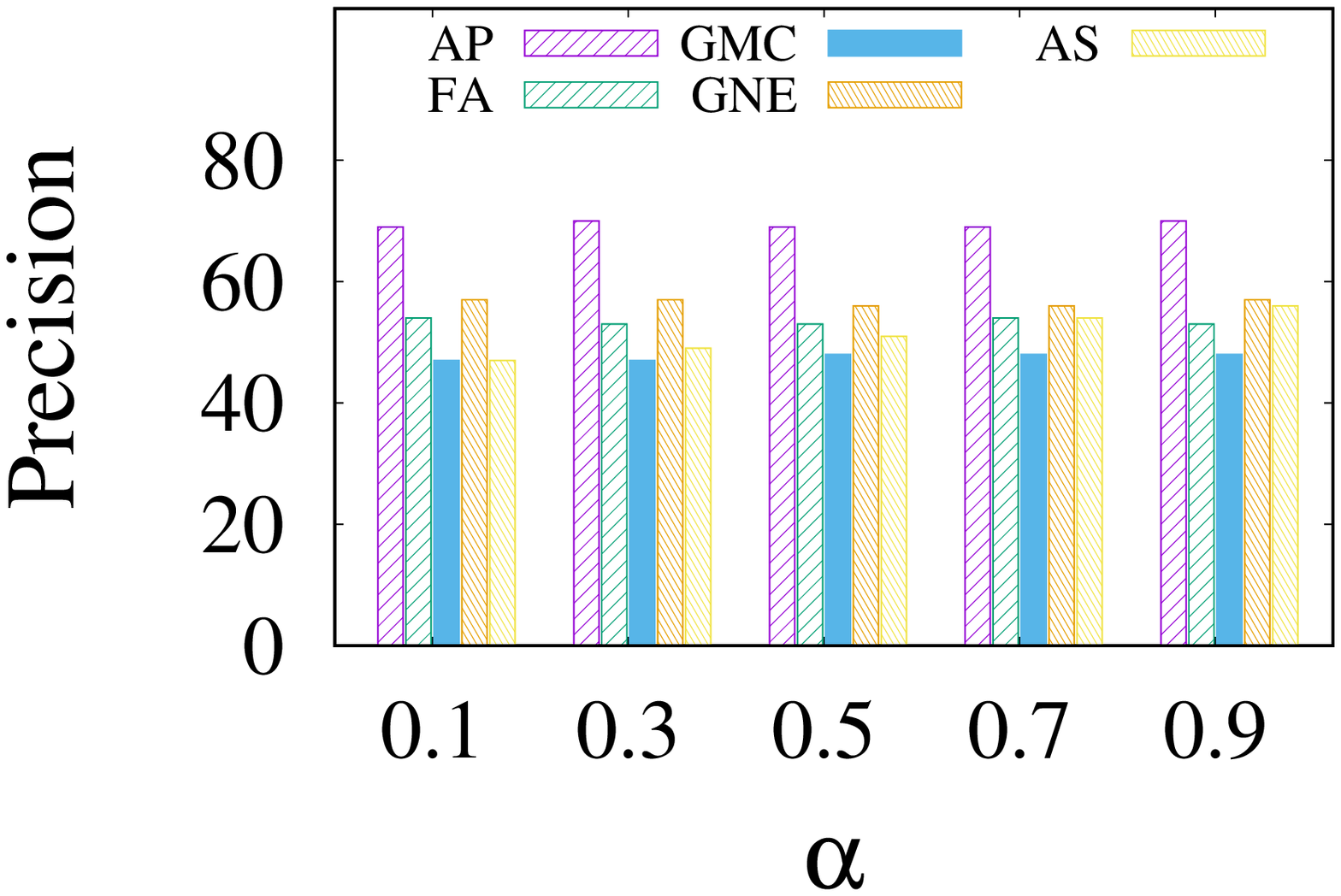}	
	}
	\caption{\jianxin{Precision when varying $\omega$, $\alpha$}}
	\label{fig:precLocVaryAlphaOmega}
\end{figure}

\subsubsection{\underline{Effectiveness}}
We compare the socio-spatial qualities of the selected locations using the \textit{MMD} metric. 
In \textit{Gowalla} (Figure \ref{MMD_GW_Compare}), the \textit{MMD} of \textit{AS} remains almost constant, while for the other approaches, the \textit{MMD} score decreases smoothly with the increase of $k$. This is because the \textit{AS} model considers a fixed user-defined threshold to maintain a minimum diversity. 
In \textit{Yelp}, all the approaches produce lower \textit{MMD} (Figure \ref{MMD_YL_Compare}). This means the majority of user's friends in \textit{Yelp} have closer check-ins to the selected locations. 

\begin{figure}[htbp]
	\vspace{-4mm}
	\centering
	\subfigure[Gowalla]{\label{MMD_GW_Compare}
		\includegraphics[scale=0.195]{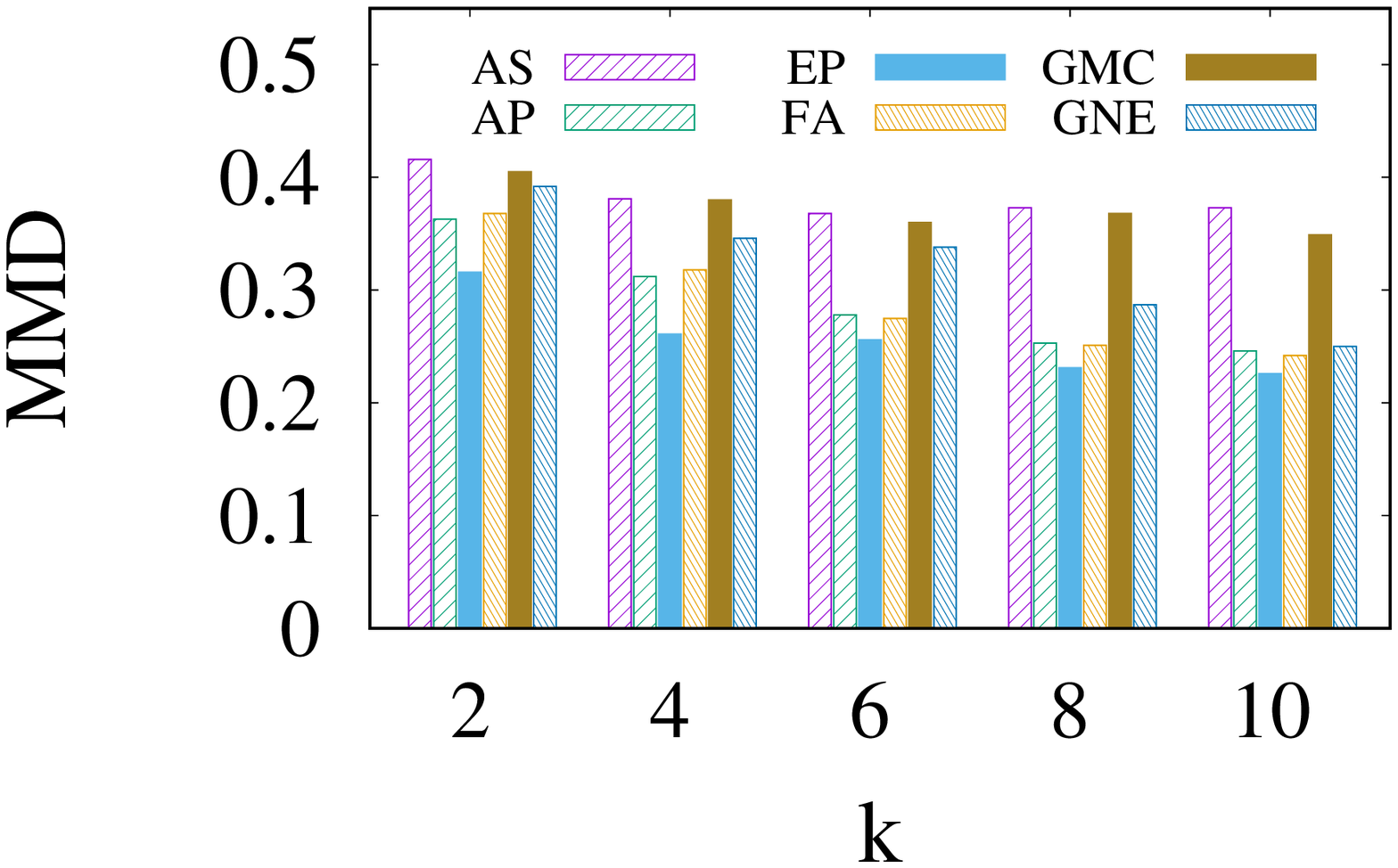}
	}
	\subfigure[Yelp]{\label{MMD_YL_Compare}
		\includegraphics[scale=0.195]{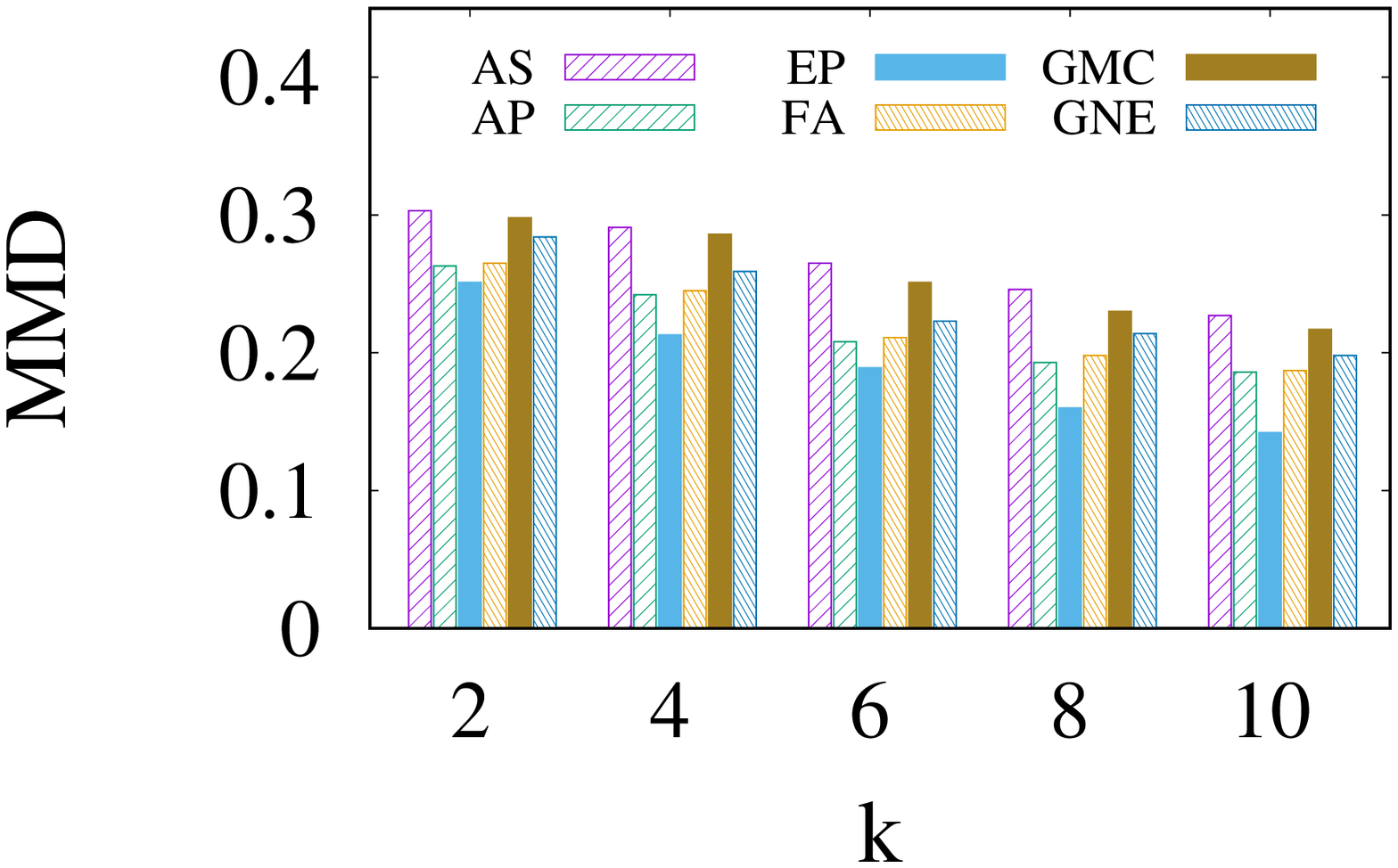}	
	}
	\vspace{-3mm}
	\caption{MMD Comparison}
	\label{fig:MMDComparisonNew}
	\vspace{-3mm}
\end{figure}

Figure \ref{fig:SocCoverageNew} compares the social coverage (\textit{SC}) of the algorithms. In both the datasets, the relative trends are similar. The top-$6$ $SSLS$ locations in \textit{EP} are co-located with 64\% and 74\% neighbors in \textit{Gowalla} and \textit{Yelp} datasets, respectively. 
The \textit{GMC} method has the lowest \textit{SC}, i.e., it reports only 30\% in \textit{Yelp}. 
Interestingly, we find that the social coverage of \textit{FA} is marginally higher than \textit{EP}. This is because, \textit{FA} includes the top socio-spatial relevant locations in the result set. Therefore, the selected set has exact check-ins by a large number of friends. 

\begin{figure}[htbp]
	\vspace{-3mm}
	\centering
	\subfigure[Gowalla]{\label{SC_Exact_GW_New}
		\includegraphics[scale=0.195]{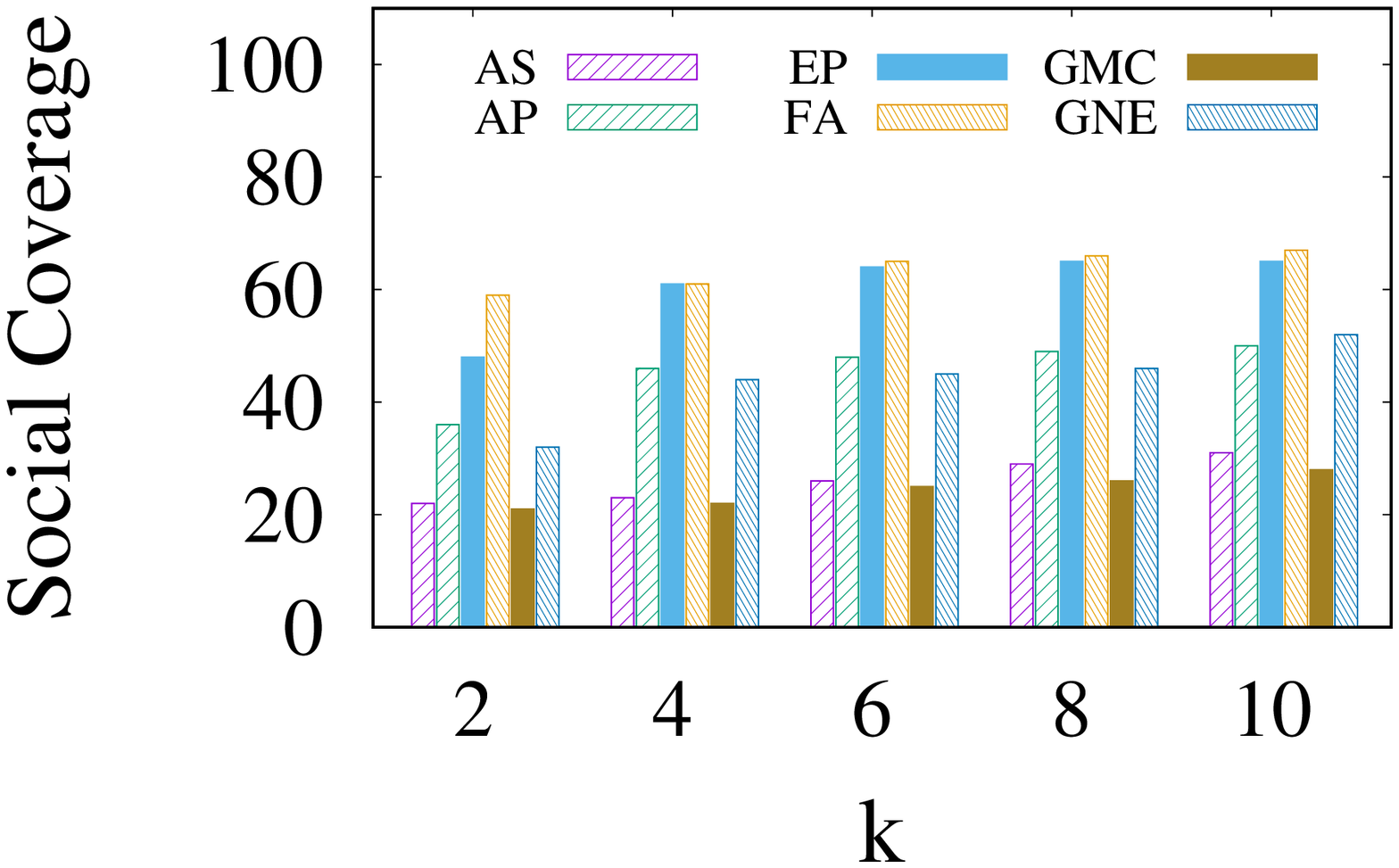}
	}	
	\subfigure[Yelp]{\label{SC_Exact_YL_New}
		\includegraphics[scale=0.195]{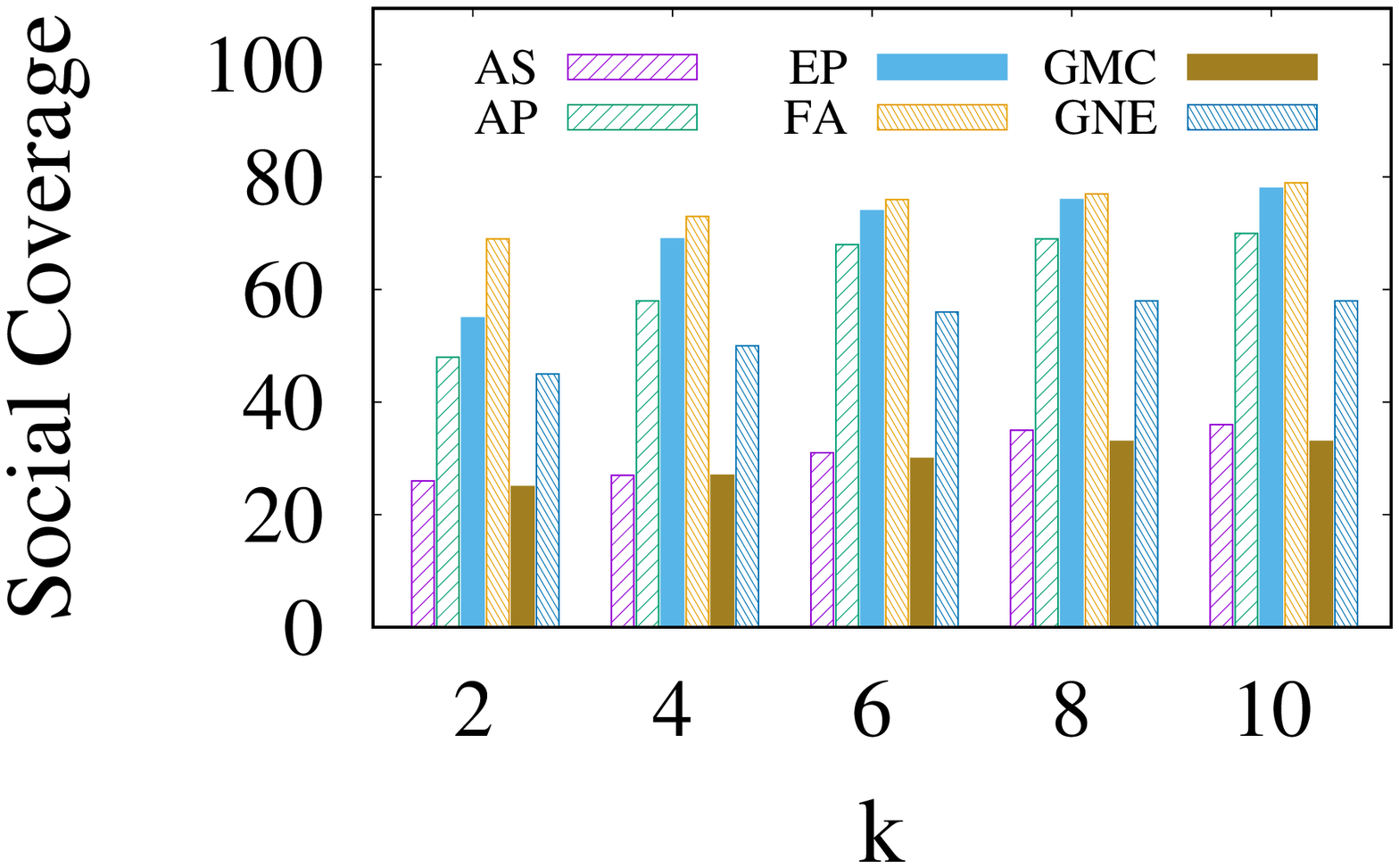}	
	}
	\vspace{-3mm}
	\caption{Social Coverage}
	\label{fig:SocCoverageNew}
	\vspace{-3mm}
\end{figure}

Figure~\ref{fig:SocEntropy} shows the average social entropy (\textit{SE}) of the approaches when answer set size $k$ varies. Similar trends are followed in both the datasets. The \textit{EP} approach has the highest average \textit{SE}, which means the selected locations by \textit{EP} have diverse participation of friends. Meanwhile, \textit{EP} also has higher social coverage (\textit{SC}) (Figure~\ref{fig:SocCoverageNew}). These two metrics \textit{SE} and \textit{SC} together establish that the selected locations in \textit{EP} not only cover a large number of friends, but represent diverse groups. Compared with \textit{GMC} and \textit{GNE}, \textit{AS} has higher social entropy. 

\begin{figure}[htbp]
	\vspace{-3mm}
	\centering
	\subfigure[Gowalla]{\label{SocEntropy_GW}
		\includegraphics[scale=0.195]{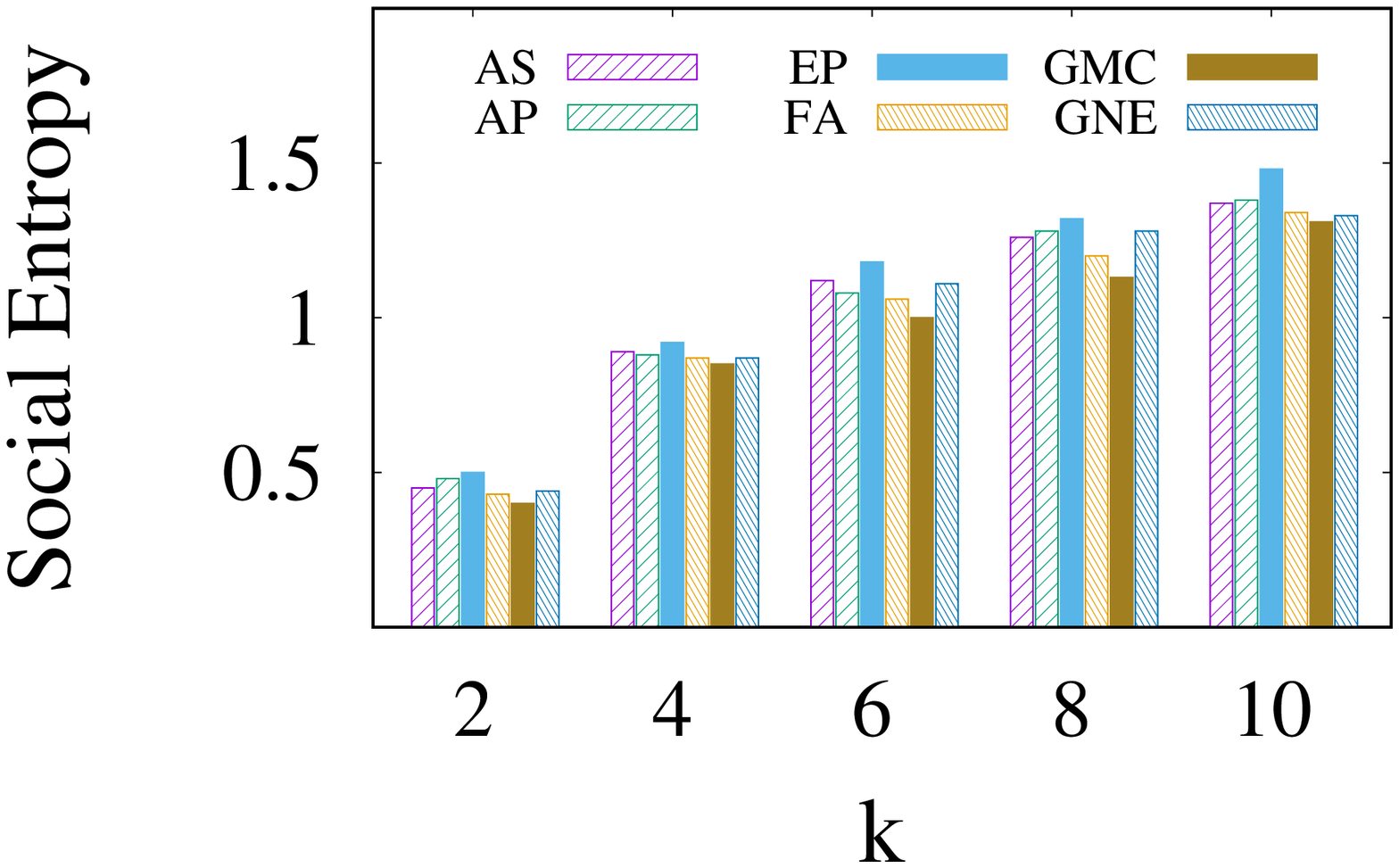}
	}	
	\subfigure[Yelp]{\label{SocEntropy_YL}
		\includegraphics[scale=0.195]{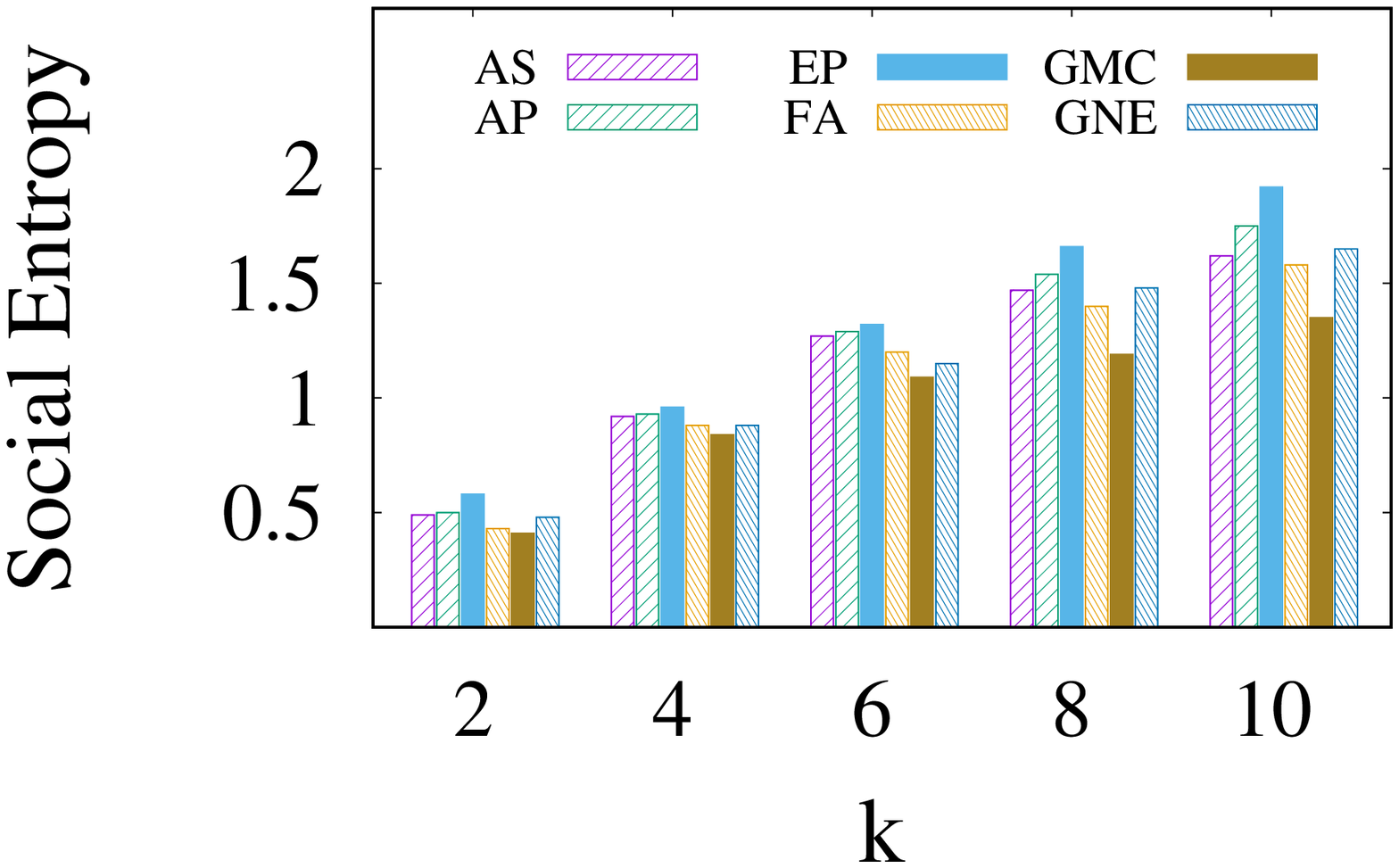}
	}
	\vspace{-3mm}
	\caption{Social Entropy}
	\label{fig:SocEntropy}
	\vspace{-3mm}
\end{figure}

\subsubsection{\underline{Memory Consumption}}
We observe that \textit{EP, FA, GMC, GNE, AS} has similar memory consumption, where the average memory usages are reported as 1195MB, 845MB, 2940MB, 1410MB on \textit{Gowalla}, \textit{Brightkite}, \textit{Flickr}, and \textit{Yelp}, respectively.
The \texttt{Exact} and \textit{AP} methods need to store the intermediate set information in a priority queue, which leads to higher memory cost. For example, in Brightkite, $E$ and $AP$ consume average 1150MB for the users in check-in group 100.

\subsection{A Case Study on Location Set Selection} 
\jianxin{In Figure \ref{fig:visualCompare}, we visualize the selection result of top-$5$ \textit{SSLS} using Adaptive SOS, \texttt{Exact}, and \textit{Approximate} methods considering $\alpha=0.5$, $\omega=0.5$. First, we choose a query user (userid `10') from  Gowalla~\cite{snapnets} dataset, and select the region (38.85, -94.85) to (39.11, -94.58) on map where the user has majority of its check-ins. Further, we obtain the check-in information of the neighbors of the user `10' having at least ten check-in in the mentioned area. There are nine such neighbors available in the selected region. Locations of the user `10' and its neighbors are marked in yellow and blue, respectively (best visible in color with zooming). 
	The user `10' has frequent check-ins concentrated at the red bordered region shown in Figure \ref{fig:SOS_Adaptive}. The five locations selected by the Adaptive SOS (\textit{AS}) model are quite distant (shown in red icons in Figure \ref{fig:SOS_Adaptive}). 
	However, \textit{AS} has ignored one important location (39.10, -94.59) (marked as red at NE corner in Figure \ref{fig:Exact}) which is included in top-$5$ SSLS result by our proposed \texttt{Exact} and \textit{Approximate} approaches. This location (39.10, -94.59) is spatially relevant to the user `10', as six neighbors (out of nine) have multiple check-ins (total 62) in 7 nearby places within $1.5KM$. In such a configuration, our \textit{Approximate} approach has four common selection as \texttt{Exact}. Meanwhile, we provide the snippet of the socio-spatial information of the user `10' (of Gowalla dataset) and its neighbors (who had at least check-ins at the region (38.85, -94.85) to (39.11, -94.58)) at \url{https://github.com/nurjamia/SSLS/blob/master/CaseStydyUserid10_GW.txt}.}
\begin{figure*}[htbp]
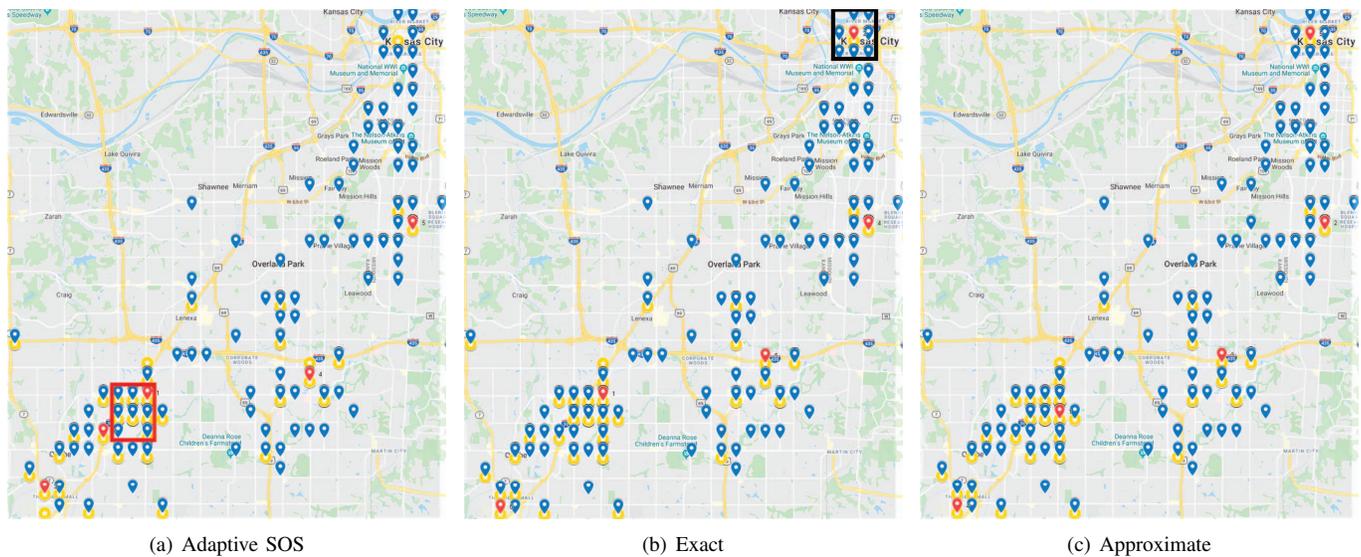

	\centering
	\subfigure[Adaptive SOS]{\label{fig:SOS_Adaptive}
		\includegraphics[scale=0.19]{SOS_Adaptive6_New.eps}
	}
	\subfigure[Exact]{\label{fig:Exact}
		\includegraphics[scale=0.19]{Exact6_New.eps}
	}
	\subfigure[Approximate]{\label{fig:Approximate}
		\includegraphics[scale=0.19]{Approximate6.eps}	
	}
	
	\caption{A case study using Gowalla dataset}
	\label{fig:visualCompare}
	\vspace{-5mm}
\end{figure*}

\section{Conclusion}
\label{sec:Conclusion}

\jianxin{In this paper, we propose a novel problem of  \textit{identifying top-k \textbf{S}ocio-\textbf{S}patial co-engaged \textbf{L}ocation \textbf{S}election}. It selects $k$ locations for a user from a large number of candidate locations based on the dominance of the combined socio-spatial diversity and relevance scores. We develop two exact and two approximate solutions to solve this NP-hard problem. Finally, the quality of our proposed approaches has been validated by comparing with the state-of-the-art object selection models. The extensive experimental studies on four real datasets with various socio-spatial characteristics have verified the performance of our proposed approaches.}

\bibliographystyle{IEEEtran}

\bibliography{SSLS}

\end{document}